\newcommand{\remove}[1]{}
\newtheorem{theorem}{Theorem}
\newtheorem{lemma}{Lemma}
\newcommand{\Ostar}{\mathcal{O}^*}
\newcommand{\Acal}{\mathcal{A}}
\newcommand{\Bcal}{\mathcal{B}}
\newcommand{\Ecal}{\mathcal{E}}
\newcommand{\Fcal}{\mathcal{F}}
\newcommand{\Gcal}{\mathcal{G}}
\newcommand{\Ical}{\mathcal{I}}
\newcommand{\Ncal}{\mathcal{N}}
\newcommand{\Ocal}{\mathcal{O}}
\newcommand{\Pcal}{\mathcal{P}}
\newcommand{\Qcal}{\mathcal{Q}}
\newcommand{\Rcal}{\mathcal{R}}
\newcommand{\Xcal}{\mathcal{X}}
\newcommand{\Nbb}{\mathbb{N}}
\newcommand{\sbf}{{\bf s}}
\newcommand{\rbf}{{\bf r}}
\newcommand{\dptable}{{\bf r}}
\newcommand{\ETH}{{\sf ETH}\xspace}
\newcommand{\degs}[2]{{\sf deg}_{#1}(#2)}
\theoremstyle{plain}
\newtheorem{proposition}[theorem]{Proposition}
\newtheorem{observation}{Observation}
\newcommand{\paraprobl}[5]
{
  \begin{flushleft}
    \fbox{
      \begin{minipage}{#5cm}
        \noindent {\textsc {#1}}\\
        {\bf Input:} #2\\
        {\bf Parameter:} #4\\
        {\bf Output:} #3
      \end{minipage}
    }
  \end{flushleft}
}
\newcommand{\sm}{\setminus}
\newcommand{\gm}{\setminus}
\newcommand{\tw}{{\sf{tw}}}
\newcommand{\es}{\emptyset}
\newcommand{\ef}{\varnothing}
\newcommand{\intv}[1]{\left [ #1 \right ]}
\renewcommand{\angle}[1]{\!\left \langle #1 \right \rangle}
\newcommand{\pretp}{\preceq_{\sf tm}}
\newcommand{\prem}{\preceq_{\sf m}}
\newcommand{\paw}{{\sf paw}\xspace}
\newcommand{\chair}{{\sf chair}\xspace}
\newcommand{\banner}{{\sf banner}\xspace}
\newcommand{\dart}{{\sf dart}\xspace}
\newcommand{\gem}{{\sf gem}\xspace}
\newcommand{\house}{{\sf px}\house}
\newcommand{\ourdiamond}{{\sf diamond}\xspace}
\definecolor{linkcol}{rgb}{0,0,0.8}
\definecolor{citecol}{rgb}{0.65,0,0}
\definecolor{titlecol}{rgb}{0.65,0,0}
\newcommand{\vertex}[1]{\filldraw (#1) circle (2 pt);}
\newcommand{\bct}{\textsf{bct}}
\newcommand{\ex}{\textsf{ex}}
\newcommand{\rmc}{\textsf{rmc}}
\newcommand{\ins}{\textsf{ins}}
\newcommand{\shift}{\textsf{shft}}
\newcommand{\glue}{\textsf{glue}}
\newcommand{\proj}{\textsf{proj}}
\newcommand{\join}{\textsf{join}}
\newcommand{\opt}{\textsf{opt}}
\newcommand{\reduce}{\textsf{reduce}}
\newcommand{\hs}{\widehat{S}}
\newcommand{\cc}{{\sf cc}}
\newcommand{\ct}{{\sf c}_3}
\definecolor{gray0}{gray}{0.875}
\definecolor{gray1}{gray}{0.775}
\definecolor{gray2}{gray}{0.75}
\newcommand\cuparrow{%
  \mathrel{\ooalign{\hss$\cup$\hss\cr%
      \kern0.3ex\raise0.7ex\hbox{\scalebox{0.7}{$\downarrow$}}}}}
\newcommand\bigcuparrow{%
  \mathrel{\ooalign{\hss$\bigcup$\hss\cr%
      \kern0.55ex\raise0.7ex\hbox{\scalebox{0.7}{$\downarrow$}}}}}
\newcommand{\pbtm}{\textsc{$\Fcal$-TM-Deletion} }
\newcommand{\pbm}{\textsc{$\Fcal$-M-Deletion} }
\begin{document}

\title{\vspace{-.5cm}Hitting minors on bounded treewidth graphs.\\II. Single-exponential algorithms\thanks{Emails of authors: \texttt{julien.baste@uni-ulm.de}, \texttt{ignasi.sau@lirmm.fr}, \texttt{sedthilk@thilikos.info}.\vspace{.2cm}\newline \indent \!\!\!\!\!The results of this article are permanently available at \texttt{https://arxiv.org/abs/1704.07284}. Extended abstracts containing some of the results of this article appeared in the \emph{Proc. of the 12th International Symposium on Parameterized and Exact Computation  (\textbf{IPEC 2017})}~\cite{BasteST17} and in the \emph{Proc. of the 13th International Symposium on Parameterized and Exact Computation  (\textbf{IPEC 2018})}~\cite{BasteST18}. Work supported by French projects DEMOGRAPH (ANR-16-CE40-0028) and ESIGMA (ANR-17-CE23-0010).\newline}}

\author{\bigskip Julien Baste\thanks{LIRMM, Université de Montpellier, Montpellier, France.}$\ ^{,}$\thanks{Sorbonne Université, Laboratoire d'Informatique de Paris 6, LIP6, Paris, France.} \and
  Ignasi Sau\thanks{LIRMM,  Université de Montpellier, CNRS, Montpellier, France.}
  \and
  Dimitrios  M. Thilikos$^{{\small \S}}$}

\date{\vspace{-1cm}}

\maketitle

\begin{abstract}
   \noindent For a finite collection of graphs ${\cal F}$, the \textsc{$\Fcal$-M-Deletion} (resp. \textsc{$\Fcal$-TM-Deletion}) problem consists in, given a graph $G$ and an integer $k$, decide  whether there exists $S \subseteq V(G)$ with $|S| \leq k$ such that $G \setminus S$ does not contain any of the graphs in ${\cal F}$ as a minor (resp. topological minor). We are interested in the parameterized complexity of both problems when the parameter is the treewidth of $G$, denoted by $\tw$, and specifically in
   the cases where $\Fcal$  contains a single connected planar graph $H$. We present algorithms  running in time $2^{\Ocal(\tw)} \cdot n^{\Ocal(1)}$, called \emph{single-exponential}, when $H$ is either $P_3$, $P_4$, $C_4$, the \paw, the \chair, and the \banner for both \textsc{$\{H\}$-M-Deletion} and \textsc{$\{H\}$-TM-Deletion}, and when $H=K_{1,i}$, with $i \geq 1$, for \textsc{$\{H\}$-TM-Deletion}. Some of these algorithms use the rank-based approach introduced by Bodlaender et al.~[Inform Comput, 2015].
 This is the second of a series of articles on this topic, and the results given here together with other ones allow us, in particular, to provide a tight dichotomy on the complexity of \textsc{$\{H\}$-M-Deletion} in terms of $H$.


\vspace{.5cm}

\noindent{\bf Keywords}: parameterized complexity; graph minors; treewidth; hitting minors; topological minors; dynamic programming; Exponential Time Hypothesis.
\vspace{.5cm}

\end{abstract}

\newpage

\section{Introduction}
\label{illusion}

Let ${\cal F}$ be a finite non-empty collection of non-empty graphs.  In the \textsc{$\Fcal$-M-Deletion} (resp. \textsc{$\Fcal$-TM-Deletion}) problem, we are given a graph $G$ and an integer $k$, and the objective is to decide whether there exists a set $S \subseteq V(G)$ with $|S| \leq k$ such that $G \setminus S$ does not contain any of the graphs in ${\cal F}$ as a minor (resp. topological minor). These problems have a big expressive power, as instantiations of them correspond to several well-studied problems. For instance,  the cases ${\cal F}= \{K_2\}$, ${\cal F}= \{K_3\}$, and ${\cal F}= \{K_5,K_{3,3}\}$ of \textsc{$\Fcal$-M-Deletion} (or \textsc{$\Fcal$-TM-Deletion}) correspond to \textsc{Vertex Cover}, \textsc{Feedback Vertex Set}, and \textsc{Vertex Planarization}, respectively.
For the sake of readability, we use the notation \textsc{$\Fcal$-Deletion} in statements that apply to {\sl both} \textsc{$\Fcal$-M-Deletion} and  \textsc{$\Fcal$-TM-Deletion}.



We are interested in the parameterized complexity of \textsc{$\Fcal$-Deletion} when the parameter is the treewidth of the input graph.  Courcelle's theorem~\cite{Courcelle90} implies that \textsc{$\Fcal$-Deletion} can be solved in time $\Ostar(f(\tw))$ on graphs with treewidth at most $\tw$, where $f$ is some computable function\footnote{The notation $\Ostar(\cdot)$  suppresses polynomial factors depending on the size of the input graph.}. Our objective is to determine, for a fixed collection ${\cal F}$, which is the {\sl smallest} such function $f$ that one can (asymptotically) hope for, subject to reasonable complexity assumptions.

This line of research has recently attracted some attention  in the parameterized complexity community. For instance, \textsc{Vertex Cover} is easily solvable in time $\Ostar(2^{\Ocal(\tw)})$, called \emph{single-exponential}, by standard dynamic-programming techniques, and no algorithm with running time $\Ostar(2^{o(\tw)})$ exists, unless the Exponential Time Hypothesis (\ETH)\footnote{The \ETH states that 3-\textsc{SAT} on $n$ variables cannot be solved in time $2^{o(n)}$; see~\cite{ImpagliazzoP01} for more details.} fails~\cite{ImpagliazzoP01}.
For \textsc{Feedback Vertex Set}, standard dynamic programming techniques give a running time of $\Ostar(2^{\Ocal(\tw \cdot \log \tw)})$, while the lower bound under the \ETH~\cite{ImpagliazzoP01} is again $\Ostar(2^{o(\tw)})$. This gap remained open for a while, until Cygan et al.~\cite{CyganNPPRW11} presented an optimal algorithm running in time $\Ostar(2^{\Ocal(\tw)})$, introducing  the celebrated \emph{Cut{\sl \&}Count}
technique, which produces randomized algorithms. This article triggered several other techniques to obtain single-exponential {\sl deterministic} algorithms for so-called \emph{connectivity problems} on graphs of bounded treewidth, mostly based on algebraic tools~\cite{BodlaenderCKN15,FominLPS16}. We refer the reader to~\cite{monster1} for a more detailed discussion about related work. In particular, in this article we make use of one of the techniques presented by Bodlaender et al.~\cite{BodlaenderCKN15}, called \emph{rank-based approach}. It is worth mentioning that this approach has been recently applied to dense graph classes, namely those with structured neighborhoods~\cite{BeKa18}.



\medskip
\noindent
\textbf{Our results and techniques}. We provide several single-exponential algorithms when $\Fcal$ contains a single connected planar graph $H$. Namely, we show that if $\Fcal \in \{\{P_3\}, \{P_4\}, \{K_{1,i}\},$ $\{C_4\}\, \{\paw\}, \{\chair\}, \{\banner\}\}$ (see Figure~\ref{shifting} for an illustration of these graphs), then \pbtm 
can be solved in single-exponential time. Note that all these graphs have maximum degree at most three, except $K_{1,i}$ for $i \geq 4$, and therefore the corresponding algorithms also apply to the \pbm problem. Indeed, for graphs $H$ with maximum degree at most three, containing $H$ as a minor is equivalent to containing $H$ as a topological minor.
The fact that we are not able to provide  single-exponential algorithms for  \textsc{$\{K_{1,i}\}$-M-Deletion} with $i \geq 4$ seems to be unavoidable: we prove in~\cite{monster3} that there is no algorithm in time $\Ostar(2^{o(\tw \cdot \log \tw )})$ for these cases, unless the \ETH fails. This exhibits, to the best of our knowledge, the first difference between the computational complexity of both problems.

\medskip

The single-exponential algorithms presented in this article are ad hoc, some being easier than others. All of them exploit a structural characterization of the graphs that exclude that particular graph $H$ as a (topological) minor; cf. for instance Lemmas~\ref{headlong} and~\ref{voltaire}. Intuitively, the ``complexity'' of this characterization is what determines the difficulty of the corresponding dynamic programming algorithm, and is also what makes the difference between being solvable in single-exponential time or not.

More precisely, the algorithms for $\{P_3\}$-\textsc{Deletion}, $\{P_4\}$-\textsc{Deletion}, and  $\{K_{1,i}\}$-\textsc{TM-Deletion} use standard (but non-trivial) dynamic programming techniques on graphs of bounded treewidth, exploiting the simple structure of graphs that do not contain these particular graphs as a topological minor (or as a subgraph, which in these cases is equivalent). The algorithms for $\{P_3\}$-\textsc{Deletion} and $\{K_{1,i}\}$-\textsc{TM-Deletion} are quite simple, while the one for $\{P_4\}$-\textsc{Deletion} is slightly more technical.

The algorithms for $\{C_4\}$-\textsc{Deletion} and $\{\paw\}$-\textsc{Deletion} are  more involved, and use the rank-based approach introduced by Bodlaender et al.~\cite{BodlaenderCKN15}, exploiting again the structure of graphs that do not contain $C_4$ or the $\paw$ as a minor (cf. Lemma~\ref{lifeboat} and~\ref{absurdly}, respectively). It might seem counterintuitive that this technique works for $C_4$, and stops working for $C_i$ with $i \geq 5$. A possible reason for that is that the only cycles of a $C_4$-minor-free graph are triangles and each triangle must be  contained in a bag of a tree decomposition. This property, which is not true anymore for $C_i$-minor-free graphs with $i \geq 5$, permits to keep track of the structure of partial solutions with tables of small size. The algorithm for $\{\paw\}$-\textsc{Deletion}  combines classical dynamic programming techniques  and the rank-based approach.

Finally, the algorithms for $\{\chair\}$-\textsc{Deletion} and $\{\banner\}$-\textsc{Deletion} are a combination of the above ones, the latter one using again the rank-based approach. Given the large amount of labels that we need in the tables and the similarity with other algorithms for which we provide all the details, we only present a sketch of these two algorithms.

\newpage
\noindent
\textbf{Results in other articles of the series and discussion}. In the first article of this series~\cite{monster1}, we show, among other results, that for every collection
$\mathcal{F}$ containing at least one planar graph (resp. subcubic planar graph), \textsc{$\Fcal$-M-Deletion} (resp. \textsc{$\Fcal$-TM-Deletion}) can be solved in time
  $\Ostar(2^{\Ocal(\tw  \cdot \log \tw)})$. In the third article of this series~\cite{monster3}, we focus on lower bounds under the \ETH. Namely, we prove that for any connected\footnote{A \emph{connected} collection $\mathcal{F}$ is a collection containing only connected graphs.} $\Fcal$, \textsc{$\Fcal$-Deletion} cannot be solved in time $\Ostar(2^{o(\tw)})$, even if the input graph $G$ is planar, and we provide superexponential lower bounds for a number of collections $\Fcal$. In particular, we prove a lower bound of $\Ostar(2^{o(\tw \cdot \log \tw)})$ when $\Fcal$ contains a single connected graph that is either $P_5$ or is not a minor of the $\banner$, with the exception of $K_{1,i}$ for the topological minor version. These lower bounds, together with the ad hoc single-exponential  algorithms given in this article and the general algorithms described in~\cite{monster1}, cover all the cases of \textsc{$\Fcal$-M-Deletion} where $\Fcal$ contains a single connected planar graph $H$, yielding a dichotomy in terms of  $H$. In the fourth article of this series~\cite{BasteST20-SODA} (whose full version is~\cite{SODA-arXiv}), we presented an algorithm for \textsc{$\Fcal$-M-Deletion} in time $\Ostar(2^{O(\tw \cdot \log \tw)})$ for {\sl any} collection $\Fcal$, yielding together with the lower bounds in~\cite{monster3} and the results of the current article a dichotomy for \textsc{$\Fcal$-M-Deletion} where $\Fcal$ consists of a single connected (non-necessarily planar) graph $H$. 
  Namely, as stated in~\cite{SODA-arXiv}, if $H$ is a connected graph on at least two vertices, then the \textsc{$\{H\}$-M-Deletion} problem  can be solved in time
    \begin{itemize}
    \item $\Ostar(2^{\Theta(\tw )})$, if $H$ is a contraction of the \chair or the \banner, and
    \item $\Ostar(2^{\Theta(\tw \cdot \log \tw)})$, otherwise.
    \end{itemize}

    In the above statements, we use the $\Theta$-notation to indicate that these algorithms are {\sl optimal} under the \ETH. Note that the first item is equivalent to $H$ being a minor of the \banner that is different from $P_5$. This dichotomy is depicted in Figure~\ref{shifting}, containing all  connected  graphs $H$ with $2 \leq |V(H)| \leq 5$; note that if $|V(H)| \geq 6$, then $H$ is not a contraction of the \chair or the \banner, and therefore the second item above applies. Note also that $K_4$ and the {\sf diamond} are the only graphs on at most four vertices for which the problem is solvable in time $\Ostar(2^{\Theta (\tw \cdot \log \tw)})$ and that the \chair and the \banner are the only graphs on at least  five vertices for which the problem is solvable in time $\Ostar(2^{\Theta (\tw)})$. Note also that the cases $\Fcal = \{P_2\}$~\cite{ImpagliazzoP01,CyganFKLMPPS15}, $\Fcal = \{P_3\}$~\cite{P3-cover,P3-cover-improved}, and $\Fcal = \{C_3\}$~\cite{CyganNPPRW11,BodlaenderCKN15} were already known.

    The crucial role played by the \banner and $P_5$ (or equivalently, the \chair and the \banner) in the complexity dichotomy may seem surprising at first sight. In fact, we realized a posteriori that the ``easy'' cases can be succinctly described in terms of the \banner and $P_5$ by taking a look at Figure~\ref{shifting}. Nevertheless, there is some intuitive reason for which excluding the \banner constitutes the horizon on the existence of single-exponential algorithms (forgetting about the ``exception'' $\Fcal = \{P_5\}$). Namely, every connected component of a graph that excludes the \banner as a (topological) minor is either a cycle (of any length) or a tree in which some vertices have been replaced by triangles; both such types of components can be maintained by a dynamic programming algorithm in single-exponential time. It appears that if the characterization of the allowed connected components is enriched in some way, such as restricting the length of the allowed cycles or forbidding certain degrees, the problem becomes inherently more difficult.

    \begin{figure}[h!]
  \begin{center}
    \includegraphics[width=.87\textwidth]{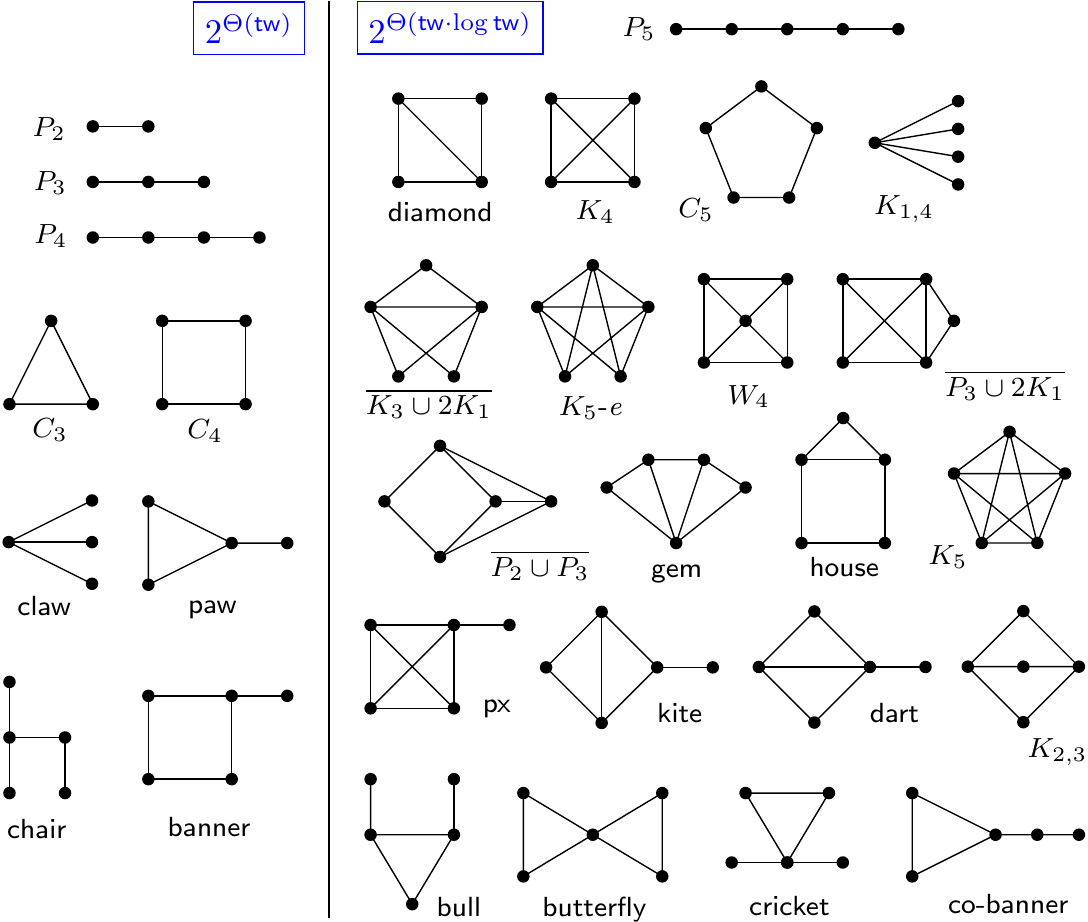}
  \end{center}\vspace{-.25cm}
  \caption{Classification of the complexity of \textsc{$\{H\}$-M-Deletion} for all connected simple graphs $H$ with $2 \leq |V(H)|\leq 5$: for the nine graphs on the left (resp. 21 graphs on the right, and all the larger ones), the problem is solvable in time $2^{\Theta(\tw)} \cdot n^{\Ocal(1)}$ (resp. $2^{\Theta(\tw \cdot \log \tw)} \cdot n^{\Ocal(1)}$). For \textsc{$\{H\}$-TM-Deletion}, $K_{1,4}$ should be on the left. This figure also appears in~\cite{monster3}.}
  \label{shifting}
  \vspace{.35cm}
\end{figure}

\newpage 

\noindent
\textbf{Organization of the paper}. In Section~\ref{banality} we give some preliminaries. We deal with $P_3$, $P_4$, $K_{1,s}$, $C_4$, the \paw, the \chair, and the \banner in Sections~\ref{accuracy}, \ref{affected}, \ref{repaired}, \ref{benjamin}, \ref{nugatory}, \ref{provider}, and~\ref{sec:banner}, respectively.
We conclude  in Section~\ref{upheaval} with some  questions for further research.

\section{Preliminaries}
\label{banality}




In this section we provide some preliminaries to be used in the following sections.


\bigskip
\noindent
\textbf{Sets, integers, and functions.}
We denote by $\Nbb$ the set of every non-negative integer and we
set $\Nbb^+=\Nbb\setminus\{0\}$.
Given two integers $p$ and $q$, the set $\intv{p,q}$
refers to the set of every integer $r$ such that $p \leq r \leq q$.
Moreover, for each integer $p \geq 1$, we set $\Nbb_{\geq p} = \Bbb{N}\setminus\intv{0,p-1}$.


We use $\emptyset$ to denote the empty set and
$\varnothing$ to denote the empty function, i.e., the unique subset of $\emptyset\times\emptyset$.
Given a function $f:A\to B$ and a set $S$, we define $f|_{S}=\{(x,f(x))\mid x\in S\cap A\}$.
Moreover if $S \subseteq A$, we set $f(S) = \bigcup_{s \in S} \{f(s)\}$.
Given a set $S$, we denote by ${S \choose 2}$ the set containing
every subset of $S$ that has cardinality two.



\bigskip
\noindent
\textbf{Graphs}.  All the graphs that we consider in this paper are undirected, finite, and without  loops or multiple edges.
 We use standard graph-theoretic notation, and we refer the reader to~\cite{Die10} for any undefined terminology. Given a graph $G$, we denote by $V(G)$ the set of vertices of  $G$ and by $E(G)$ the set of the edges of $G$.
We call $|V(G)|$ {\em the size} of $G$. A graph is {\em the empty} graph if its size is zero.
We also denote by $L(G)$ the set of the vertices of $G$ that have degree exactly
ones.
If $G$ is a tree (i.e., a connected acyclic graph) then $L(G)$ is the set of the {\em leaves}
of $G$.
A {\em vertex labeling} of $G$ is some injection $\rho: V(G)\to\Bbb{N}^{+}$. 
{Given a vertex $v \in V(G)$, we define the \emph{neighborhood} of
  $v$ as $N_G(v) = \{u \mid u \in V(G), \{u,v\} \in E(G)\}$ and the \emph{closed neighborhood}
  of $v$ as $N_G[v] = N_G(v) \cup \{v\}$.} If $X\subseteq V(G)$, then we write $N_{G}(X)=(\bigcup_{v\in X}N_{G}(v))\setminus X$.
The {\em degree} of a vertex $v$ in $G$ is defined as $\degs{G}{v}=|N_{G}(v)|$. A graph is called {\em subcubic}
if all its vertices have degree at most three.

A \emph{subgraph} $H = (V_H,E_H)$ of a graph $G=(V,E)$ is a graph such that $V_H \subseteq V(G)$ and $E_H \subseteq E(G) \cap {V(H) \choose 2}$.
If $S \subseteq V(G)$, the subgraph of $G$ \emph{induced by} $S$, denoted $G[S]$, is the graph $(S, E(G) \cap {S \choose 2})$.
We also define $G \gm S$ to be the subgraph of $G$ induced by $V(G) \sm S$.
If $S \subseteq E(G)$, we denote by $G \gm S$ the graph $(V(G), E(G) \sm S)$.
%

If $s,t\in V(G)$, an {\em $(s,t)$-path} of $G$
is any connected subgraph $P$ of $G$ with maximum degree two and where $s,t\in L(P)$.
We say that two vertices $s$ and $t$ are \emph{connected} in $G$ if $G$ contains an $(s,t)$-path as a subgraph.
We finally denote by ${\cal P}(G)$ the set of all paths of $G$.
{Given $P \in \Pcal(G)$, we say that $v \in V(P)$ is an \emph{internal vertex} of $P$ if $\degs{P}{v} = 2$.}
{Given an integer $i$ and a graph $G$, we say that $G$ is $i$-connected if for each $\{u,v\} \in {V(G) \choose 2}$,
  there exists a set $\Qcal \subseteq \Pcal(G)$ of $(u,v)$-paths of $G$ such that
  $|\Qcal| = i$ and
  for each $P_1,P_2 \in \Qcal$ such that $P_1 \not = P_2$, $V(P_1) \cap V(P_2) = \{u,v\}$.}
We denote by $K_{r}$,  $P_{r}$, and $C_{r}$, the complete graph, the path, and the cycle  on $r$ vertices, respectively.

\bigskip
\noindent
\textbf{Minors and topological minors.}
Given two graphs $H$ and $G$ and two functions $\phi: V(H)\to V(G)$ and $\sigma: E(H)\to{\cal P}(G)$, we say that $(\phi,\sigma)$ is {\em a topological minor model of $H$ in $G$}
if
\begin{itemize}
\item for every $\{x,y\}\in E(H),$ $\sigma(\{x,y\})$ is an $(\phi(x),\phi(y))$-path 
  in $G$ and
\item if $P_{1},P_{2}$ are two distinct paths in $\sigma(E(H))$, then none of the internal vertices of $P_{1}$
  is a vertex of $P_{2}$.
\end{itemize}

The {\em branch} vertices of
$(\phi,\sigma)$ are the vertices in  $\phi(V(E))$, while the {\em subdivision} vertices of $(\phi,\sigma)$ are the internal vertices of the paths in $\sigma(E(H))$.

We say that $G$ contains $H$ as a \emph{topological minor}, denoted by $H\preceq_{\sf tm} G$, if there
is a topological minor model  $(\phi,\sigma)$ of $H$ in $G$.


Given two graphs $H$ and $G$ and a function $\phi: V(H)\to 2^{V(G)}$, we say that $\phi$ is {\em a minor model of $H$ in $G$}
if
\begin{itemize}
\item for every $x \in V(H)$, $G[\phi(x)]$ is a connected non-empty graph and
\item for every $\{x,y\} \in E(H)$, there exist  $x' \in \phi(x)$ and $y' \in \phi(y)$ such that $\{x',y'\} \in E(G)$.
\end{itemize}

We say that $G$ contains $H$ as a \emph{minor}, denoted by $H\prem G$, if there
is a minor model  $\phi$ of $H$ in $G$.

\bigskip
\noindent
\textbf{Graph collections.}
Let ${\cal F}$  be a collection of graphs. From now on instead of ``collection of graphs''  we use the shortcut ``collection''.
If ${\cal F}$ is a collection that is finite, non-empty, and all its graphs are  non-empty, then we say that ${\cal F}$
is a {\em proper collection}. For any proper collection ${\cal F}$, we  define
${\sf size}({\cal F})=\max\{\{|V(H)|\mid H\in \cal F\}\cup\{|{\cal F}|\}\}$.
Note that if the size of ${\cal F}$ is bounded, then
the size of the graphs in ${\cal F}$ is also bounded.
We say that ${\cal F}$ is a {\em  planar collection} (resp. {\em planar subcubic collection}) if it is proper and
{\sl at least one} of the graphs in ${\cal F}$ is  planar (resp. planar and subcubic).
We say that ${\cal F}$ is a {\em connected collection} if it is proper and
{\sl all} the graphs in ${\cal F}$ are connected.
We  say that $\Fcal$ is an {\em (topological) minor antichain}
if no two of its elements are comparable via the  (topological)  minor relation.

Let $\Fcal$ be a proper collection. We extend the (topological) minor relation to $\Fcal$ such that, given a graph $G$,
$\Fcal \preceq_{\sf tm} G$ (resp. $\Fcal \preceq_{\sf m} G$) if and only if there exists a graph $H \in \Fcal$ such that $H \preceq_{\sf tm} G$ (resp. $H \preceq_{\sf m} G$).
We also denote $\ex_{\sf tm}(\Fcal)=\{G\mid \Fcal\npreceq_{\sf tm} G \}$, i.e.,
$\ex_{\sf tm}(\Fcal)$ is the class of graphs that do not contain any graph in $\Fcal$ as a topological minor.
The set $\ex_{\sf m}(\Fcal)$ is defined analogously.

\bigskip
\noindent\textbf{Tree decompositions.} A \emph{tree decomposition} of a graph $G$ is a pair ${\cal D}=(T,{\cal X})$, where $T$ is a tree
and ${\cal X}=\{X_{t}\mid t\in V(T)\}$ is a collection of subsets of $V(G)$
such that:
\begin{itemize}
\item $\bigcup_{t \in V(T)} X_t = V(G)$,
\item for every edge $\{u,v\} \in E$, there is a $t \in V(T)$ such that $\{u, v\} \subseteq X_t$, and
\item for each $\{x,y,z\} \subseteq V(T)$ such that $z$ lies on the unique path between $x$ and $y$ in $T$,  $X_x \cap X_y \subseteq X_z$.
\end{itemize}
We call the vertices of $T$ {\em nodes} of ${\cal D}$ and the sets in ${\cal X}$ {\em bags} of ${\cal D}$. The \emph{width} of a  tree decomposition ${\cal D}=(T,{\cal X})$ is $\max_{t \in V(T)} |X_t| - 1$.
The \emph{treewidth} of a graph $G$, denoted by $\tw(G)$, is the smallest integer $w$ such that there exists a tree decomposition of $G$ of width at most $w$.
For each $t \in V(T)$, we denote by $E_t$ the set $E(G[X_t])$.

We need to introduce nice tree decompositions, which will make the presentation of the algorithms much simpler.

\medskip

\noindent
\textbf{Nice tree decompositions.} Let ${\cal D}=(T,{\cal X})$
be a tree decomposition of $G$, $r$ be a vertex of $T$, and   ${\cal G}=\{G_{t}\mid t\in V(T)\}$ be
a collection of subgraphs of   $G$, indexed by the vertices of $T$.
We say that the triple $({\cal D},r,{\cal G})$ is a
\emph{nice tree decomposition} of $G$ if the following conditions hold:
\begin{itemize}

\item $X_{r} = \es$ and $G_{r}=G$,
\item each node of ${\cal D}$ has at most two children in $T$,
\item for each leaf $t \in V(T)$, $X_t = \es$ and $G_{t}=(\emptyset,\emptyset).$ Such $t$ is called a {\em leaf node},
\item if $t \in V(T)$ has exactly one child $t'$, then either
  \begin{itemize}
  \item $X_t = X_{t'}\cup \{v_{\rm insert}\}$ for some $v_{\rm insert} \not \in X_{t'}$ and $G_{t}=G[V(G_{t'})\cup\{v_{\rm insert}\}]$.
    The node $t$ is called \emph{introduce vertex}  node   and the vertex $v_{\rm insert}$ is the {\em insertion vertex} of $X_{t}$,
  \item $X_t = X_{t'} \sm \{v_{\rm forget}\}$ for some $v_{\rm forget} \in X_{t'}$ and $G_{t}=G_{t'}$.
    The node $t$ is called   \emph{forget vertex} node and $v_{\rm forget}$ is the {\em forget vertex} of $X_{t}$.
  \end{itemize}
\item if $t \in V(T)$ has exactly two children $t'$ and $t''$, then $X_{t} = X_{t'} = X_{t''}$, $E(G_{t'})\cap E(G_{t''})=E(G[X_t])$, and $G_t = (V(G_{t'})\cup V(G_{t''}),E(G_{t'})\cup E(G_{t''}))$. The node $t$ is called a \emph{join} node.
\end{itemize}



For each $t \in V(T)$, we denote by $V_t$ the set $V(G_t)$.
As discussed in~\cite{Klo94}, given a tree decomposition, it is possible to transform it in polynomial time to a {\sl nice} new one of the same width. Moreover, by Bodlaender et al.~\cite{BodlaenderDDFLP16} we can find in time $2^{\Ocal(\tw)}\cdot n$ a tree decomposition of width $\Ocal(\tw)$ of any graph $G$. Hence, since in this section we focus on single-exponential algorithms, we may assume that a nice tree decomposition of width $w = \Ocal(\tw)$
is given with the input.




\medskip

We also need the following simple observation that will be implicitly used  in the algorithms of Sections~\ref{accuracy},~\ref{affected},~and~\ref{benjamin}.



\begin{observation}
  Let $G$ be a graph and $h$ be a positive integer.
  Then the following assertions are equivalent.
  \begin{itemize}
  \item $G$ contains $P_h$ as a topological minor.
  \item $G$ contains $P_h$ as a minor.
  \item $G$ contains $P_h$ as a subgraph.
  \end{itemize}
  Moreover,  the following assertions are also equivalent.
  \begin{itemize}
  \item $G$ contains $C_h$ as a topological minor.
  \item $G$ contains $C_h$ as a minor.
  \end{itemize}
\end{observation}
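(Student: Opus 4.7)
The plan is to prove both equivalences by exploiting that $P_h$ and $C_h$ have maximum degree at most two, and in particular at most three, so the classical fact ``for graphs $H$ of maximum degree at most three, $H\preceq_{\sf m}G$ iff $H\preceq_{\sf tm}G$'' (already invoked in the introduction) gives the minor/topological-minor equivalence for free. The only nontrivial piece is bridging to subgraph containment in the path case, which I would do by observing that a topological-minor model of $P_h$ is literally a subdivided path in $G$, hence a path in $G$ with at least $h$ vertices, which of course contains $P_h$ as a subgraph.

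More precisely, for the path statement I would chain three implications. First, subgraph $\Rightarrow$ topological minor: take $\phi$ to be the identity on the $h$ vertices of the $P_h$-subgraph and $\sigma$ to map each edge to itself (a path of length one), verifying the two bullets of the topological-minor-model definition. Second, topological minor $\Rightarrow$ minor: this is immediate because a topological-minor model induces a minor model by setting $\phi'(x)=\{\phi(x)\}\cup\{\text{internal vertices of edges of }P_h\text{ incident to }x\text{ suitably distributed}\}$; equivalently, apply the classical implication (valid for any $H$). Third, minor $\Rightarrow$ subgraph: since $P_h$ has maximum degree at most three, $P_h\preceq_{\sf m}G$ implies $P_h\preceq_{\sf tm}G$, and then the concatenation of the $h-1$ internally disjoint paths $\sigma(e)$ along $P_h$ is a walk in $G$ between the two endpoints of degree one that visits each branch vertex exactly once and, being internally-vertex-disjoint by the second bullet of the topological-minor definition, forms a simple path with at least $h$ vertices; any $h$ consecutive vertices of this path induce a $P_h$-subgraph.

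For the cycle statement, only the equivalence between minor and topological minor is claimed, and it follows by the same max-degree-at-most-three argument applied to $C_h$. I would explicitly note that subgraph containment is \emph{not} in the list, since, e.g., $C_5\preceq_{\sf m}C_{10}$ but $C_{10}$ does not contain $C_5$ as a subgraph, which also explains why the observation states only two equivalences for cycles.

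The main obstacle, if any, is the ``topological minor $\Rightarrow$ subgraph'' step for $P_h$: one must be a little careful to check that concatenating the paths $\sigma(e_1),\ldots,\sigma(e_{h-1})$ yields a \emph{simple} path rather than merely a walk. This is guaranteed by the internal-vertex-disjointness condition in the definition of a topological-minor model together with the fact that the branch vertices $\phi(v_1),\ldots,\phi(v_h)$ are pairwise distinct (since $\phi$ is injective on $V(P_h)$, which follows from the first bullet applied to the two edges incident to each internal vertex). Everything else reduces to invoking definitions or the already-cited max-degree-three equivalence.
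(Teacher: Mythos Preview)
The paper states this observation without proof, treating it as a routine fact. Your argument is correct and is exactly the natural one: invoke the classical equivalence of minor and topological-minor containment for graphs $H$ of maximum degree at most three (which the paper already cites in its introduction) to handle both $P_h$ and $C_h$, and for $P_h$ additionally observe that a topological-minor model is literally a subdivided path in $G$, hence a path on at least $h$ vertices, hence a supergraph of $P_h$. Your remark that subgraph containment cannot be added for $C_h$ is also a nice touch.

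One minor caveat: your argument that $\phi$ is injective (``follows from the first bullet applied to the two edges incident to each internal vertex'') does not quite go through under the paper's literal definition of a topological-minor model, which neither requires $\phi$ to be injective nor forbids two non-adjacent vertices from having the same image; the first bullet only gives $\phi(x)\neq\phi(y)$ for \emph{adjacent} $x,y$, and the second bullet only constrains internal vertices of the paths. This is a cosmetic issue---the standard meaning, with $\phi$ injective and the image being a genuine subdivision of $H$, is clearly intended---but if you want to be fully self-contained you should simply state that a topological-minor model of $H$ yields a subdivision of $H$ as a subgraph of $G$ (this is the usual formulation), from which the $P_h$-subgraph conclusion is immediate.
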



\bigskip
\noindent\textbf{Parameterized complexity.} We refer the reader to~\cite{DF13,CyganFKLMPPS15} for basic background on parameterized complexity, and we recall here only some very basic definitions.
A \emph{parameterized problem} is a language $L \subseteq \Sigma^* \times \mathbb{N}$.  For an instance $I=(x,k) \in \Sigma^* \times \mathbb{N}$, $k$ is called the \emph{parameter}. 
A parameterized problem is \emph{fixed-parameter tractable} ({\sf FPT}) if there exists an algorithm $\Acal$, a computable function $f$, and a constant $c$ such that given an instance $I=(x,k)$,
$\Acal$ (called an {\sf FPT} \emph{algorithm}) correctly decides whether $I \in L$ in time bounded by $f(k) \cdot |I|^c$.


\bigskip
\noindent\textbf{Main ingredients of the rank-based approach.} We are now going to restate the tools introduced by Bodlaender et al.~\cite{BodlaenderCKN15} that we need for our purposes.

Let $U$ be a set.
We define $\Pi(U)$ to be the set of all partitions of $U$.
Given two partitions $p$ and $q$ of $U$, we define the coarsening relation $\sqsubseteq$ such that $p \sqsubseteq q$ if for each $S \in q$, there exists $S' \in p$ such that $S \subseteq S'$.
$(\Pi(U),\sqsubseteq)$ defines a lattice with minimum element $\{\{U\}\}$ and maximum element $\{\{x\}\mid x \in U\}$.
On this lattice, we denote by $\sqcap$ the meet operation and by $\sqcup$ the join operation.

Let $p \in \Pi(U)$.
For $X \subseteq U$ we denote by $p_{\downarrow X} = \{S \cap X \mid S \in p, S \cap X \not = \es \}\in \Pi(X)$
the partition obtained by removing all elements not in $X$ from $p$, and analogously for
$U \subseteq X$ we denote $p_{\uparrow X} = p \cup \{\{x\}\mid x \in X \sm U\}\in \Pi(X)$  the partition obtained by adding to $p$ a singleton for each element in
$ X \sm U$.
Given a subset $S$ of $U$, we define the partition $U[S] = \{\{x\}\mid x \in U\sm S\} \cup \{S\}$.

A set of \emph{weighted partitions} is a set $\Acal \subseteq \Pi(U) \times \Nbb$.
We also define $\rmc(\Acal) = \{(p,w) \in \Acal \mid \forall (p',w') \in \Acal: p'=p \Rightarrow w \leq w'\}$.

We now define some operations on weighted partitions.
Let $U$ be a set and $\Acal \subseteq \Pi(U) \times \Nbb$.

\begin{description}
\item[Union.] Given $\Bcal \subseteq \Pi(U) \times \Nbb$, we define $\Acal \cuparrow \Bcal = \rmc(\Acal \cup \Bcal)$.
\item[Insert.] Given a set $X$ such that $X \cap U = \es$, we define $\ins(X,\Acal)= \{(p_{\uparrow U \cup X},w) \mid (p,w) \in \Acal\}$.
\item[Shift.] Given $w' \in \Nbb$, we define $\shift(w',\Acal) = \{(p,w+w') \mid (p,w) \in \Acal\}$.
\item[Glue.] Given a set $S$, we define $\hat{U} = U \cup S$ and $\glue(S,\Acal) \subseteq \Pi(\hat{U}) \times \Nbb$ as\\
  $\glue(S, \Acal) = \rmc(\{(\hat{U}[S] \sqcap p_{\uparrow \hat{U}}, w\mid (p,w) \in \Acal\})$.\\
  Given $w: \hat{U} \times \hat{U} \to \Ncal$, we define $\glue_w(\{u,v\},\Acal) = \shift(w(u,v), \glue(\{u,v\},\Acal))$.
\item[Project.] Given $X \subseteq U$, we define $\overline{X} = U \sm X$ and $\proj(X, \Acal) \subseteq \Pi(\overline{X}) \times \Nbb$ as\\
  $\proj(X,\Acal) = \rmc(\{(p_{\downarrow \overline{X}},w) \mid (p,w) \in \Acal, \forall e \in X : \forall e' \in \overline{X} : p \sqsubseteq U[ee']\})$.
\item[Join.] Given a set $U'$, $\Bcal \subseteq \Pi(U') \times \Nbb$, and $\hat{U} = U \cup U'$, we define $\join(\Acal, \Bcal) \subseteq \Pi(\hat{U}) \times \Nbb$ as\\
  $\join(\Acal, \Bcal) = \rmc(\{(p_{\uparrow \hat{U}}\sqcap q_{\uparrow \hat{U}}, w_1+w_2) \mid (p,w_1) \in \Acal, (q,w_2) \in \Bcal\})$.
\end{description}


\begin{proposition}[Bodlaender et al.~\cite{BodlaenderCKN15}]
  \label{broadest}
  Each of the operations union, insert, shift, glue, and project can be carried out in time $s\cdot  |U|^{\Ocal(1)}$,  where $s$ is the size of the input of the operation.
  Given two weighted partitions $\Acal$ and $\Bcal$, $\join(\Acal, \Bcal)$ can be computed in time $|\Acal| \cdot |\Bcal|\cdot |U|^{\Ocal(1)}$.
\end{proposition}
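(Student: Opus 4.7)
The plan is to analyze each operation individually and show that each admits a straightforward implementation whose running time matches the claim. The key preliminary observation is that a partition $p$ of a finite set $U$ can be represented as a mapping from $U$ into $\{1,\dots,|U|\}$ (assigning a block-identifier to every element, with a canonical choice such as the smallest element of the block), so equality testing, coarsening checks, computing $p_{\uparrow X}$, $p_{\downarrow X}$, and the meet $p\sqcap q$ can all be performed in time $|U|^{O(1)}$, the latter using a union-find data structure. Writing a weighted partition in this canonical form also gives us an $O(|U| \log |U|)$-length key under which we can bucket-sort the multiset of partitions in a set $\mathcal{A}$. Consequently, $\mathsf{rmc}(\mathcal{A})$ can be computed in time $|\mathcal{A}|\cdot |U|^{O(1)}$ by canonicalising every partition, sorting, and retaining a minimum-weight representative per group.

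With these primitives in hand, the first three operations are immediate. For $\mathsf{union}$, the cost is the concatenation of $\mathcal{A}$ and $\mathcal{B}$ followed by one $\mathsf{rmc}$ call, totalling $s\cdot|U|^{O(1)}$ where $s=|\mathcal{A}|+|\mathcal{B}|$. For $\mathsf{insert}$ and $\mathsf{shift}$, we touch every element of $\mathcal{A}$ exactly once and perform a $|U|^{O(1)}$-time per-element update (augmenting by singletons, or adding $w'$ to the weight). Neither changes which partitions are distinct, so no extra $\mathsf{rmc}$ is needed beyond what $\mathsf{shift}$ inherits.

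For $\mathsf{glue}$, each $(p,w)\in\mathcal{A}$ is processed independently: we lift $p$ to $\hat U$, then intersect with the fixed partition $\hat U[S]$ via union-find in time $|\hat U|^{O(1)}$, after which a single $\mathsf{rmc}$ pass collapses duplicates. The weighted variant $\mathsf{glue}_w$ is just $\mathsf{glue}$ composed with $\mathsf{shift}$. For $\mathsf{project}$, for each $(p,w)$ we have to verify that no block of $p$ contains both an element of $X$ and one of $\overline{X}$, which is a single linear scan over the block-labels of $p$ restricted to $U$; the down-projection $p_{\downarrow\overline{X}}$ is read off directly, and another $\mathsf{rmc}$ finalises the output. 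All of this fits in $s\cdot|U|^{O(1)}$ time.

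The main obstacle is $\mathsf{join}$. Unlike the other operations, its very definition iterates over all pairs $(p,w_1)\in\mathcal{A}$, $(q,w_2)\in\mathcal{B}$, and for each pair one computes the meet $p_{\uparrow\hat U}\sqcap q_{\uparrow\hat U}$ by a union-find merge of the two partitions on $\hat U$ and sums the weights, which costs $|\hat U|^{O(1)}$ per pair. A concluding $\mathsf{rmc}$ over the resulting list of at most $|\mathcal{A}|\cdot|\mathcal{B}|$ weighted partitions runs within the same budget, giving the claimed $|\mathcal{A}|\cdot|\mathcal{B}|\cdot|U|^{O(1)}$ bound. I would note that this quadratic blow-up is the reason why the downstream algorithms (e.g., those in Sections~\ref{benjamin} and \ref{nugatory}) apply the \emph{reduce} procedure of Bodlaender et al.~\cite{BodlaenderCKN15} to keep $|\mathcal{A}|$ and $|\mathcal{B}|$ bounded by $2^{O(|U|)}$ before every join; however, reduction is the ingredient of a separate proposition and is not needed to establish the present time bounds.
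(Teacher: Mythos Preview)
The paper does not give a proof of this proposition; it is quoted from Bodlaender et al.~\cite{BodlaenderCKN15} and stated without argument, so there is nothing in the paper to compare your proof against. Your direct verification is correct and is essentially the argument one finds in~\cite{BodlaenderCKN15}: encode each partition by a canonical block-labelling, implement meets via union--find, and realise $\mathsf{rmc}$ by sorting on the canonical key, all at cost $|U|^{O(1)}$ per weighted partition.

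One small slip worth noting (it does not affect the running-time claim): in $\mathsf{project}$ the filter is not that ``no block of $p$ contains both an element of $X$ and one of $\overline{X}$'' but the opposite, namely that every element of $X$ lies in a block that also meets $\overline{X}$ (equivalently, no block of $p$ is contained entirely in $X$), so that dropping $X$ does not discard a whole component. Either way the test is a single linear scan over the block labels, so your time bound stands.
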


Given a weighted partition $\Acal \subseteq \Pi(U) \times \Nbb$ and a partition $q \in \Pi(U)$, we define
$\opt(q,\Acal) = \min \{w \mid (p,w) \in \Acal, p \sqcap q = \{U\}\}$.
Given two weighted partitions $\Acal,\Acal' \subseteq \Pi(U) \times \Nbb$, we say that $\Acal$ \emph{represents} $\Acal'$ if for each $q \in \Pi(U)$, $\opt(q,\Acal) = \opt(q,\Acal')$.

Given a set $Z$ and a function $f: 2^{\Pi(U) \times \Nbb} \times Z \to 2^{\Pi(U) \times \Nbb}$, we say that $f$ \emph{preserves representation} if for each two weighted partitions $\Acal,\Acal' \subseteq \Pi(U) \times \Nbb$ and each $z \in Z$, it holds that if $\Acal'$ represents $\Acal$  then $f(\Acal', z)$ represents $f(\Acal,z)$.

\begin{proposition}[Bodlaender et al.~\cite{BodlaenderCKN15}]
  \label{thorough}
  The union, insert, shift, glue, project, and join operations preserve representation.
\end{proposition}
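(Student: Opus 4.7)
The plan is to prove the proposition by treating each of the six operations separately, and in every case reducing $\opt(q, f(\Acal))$ to an expression built from values of $\opt$ applied to the untouched input $\Acal$, so that the representation hypothesis can be invoked verbatim. The general template is: for each operation $f$ and each query partition $q$ on the enlarged/reduced universe, I would exhibit a (finite, $\Acal$-independent) collection of query partitions $q_1,\dots,q_\ell$ on the original universe together with an algebraic combination $\Phi$ such that
\[
\opt\bigl(q, f(\Acal, z)\bigr) \;=\; \Phi\bigl(\opt(q_1, \Acal), \dots, \opt(q_\ell, \Acal)\bigr),
\]
where $\Phi$ depends only on $q$ and $z$. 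Given such an identity, if $\Acal'$ represents $\Acal$ then $\opt(q_i, \Acal') = \opt(q_i, \Acal)$ for every $i$, and hence $\opt(q, f(\Acal', z)) = \opt(q, f(\Acal, z))$, which is exactly what representation of $f(\Acal, z)$ by $f(\Acal', z)$ means.

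The three easy cases are union, shift, and insert. For union, I would first observe that the $\rmc$ step only discards pairs $(p, w)$ for which some $(p, w')$ with $w' \leq w$ survives, so it does not affect any $\opt$ value; this immediately gives $\opt(q, \Acal \cuparrow \Bcal) = \min\{\opt(q, \Acal), \opt(q, \Bcal)\}$. For shift, a constant $w'$ is added to every weight, so $\opt(q, \shift(w', \Acal)) = w' + \opt(q, \Acal)$. For insert with an auxiliary set $X$ disjoint from $U$, the map $p \mapsto p_{\uparrow U \cup X}$ introduces only singletons of elements of $X$, so that a query $q$ on $U \cup X$ with $p_{\uparrow U \cup X} \sqcap q = \{U \cup X\}$ forces each $x \in X$ to lie in a block of $q$ that already meets $U$; this yields $\opt(q, \ins(X, \Acal)) = \opt(q_{\downarrow U}, \Acal)$ when the lift condition is met, and $+\infty$ otherwise.

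The more substantial cases are glue, project, and join. For $\glue(S, \Acal)$ on the enlarged universe $\hat U = U \cup S$, taking the meet with $\hat U[S]$ forces all of $S$ into a single block; so $p_{\uparrow \hat U} \sqcap \hat U[S] \sqcap q = \{\hat U\}$ can be rewritten, after a short computation tracking how $S$ collapses inside $q$, as $p \sqcap q^\ast = \{U\}$ for a specific $q^\ast \in \Pi(U)$ determined by $q$, giving the required reduction (the weighted variant $\glue_w$ only adds a constant). For $\proj(X, \Acal)$, the filter $p \sqsubseteq U[ee']$ for all $e \in X, e' \in \overline X$ precisely expresses that every block of $p$ lies entirely inside $X$ or entirely inside $\overline X$; under this restriction, $p_{\downarrow \overline X} \sqcap q = \{\overline X\}$ lifts uniquely to $p \sqcap q^\ast = \{U\}$ for the canonical lift $q^\ast$ of $q$ that leaves $X$ as singletons. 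For join, the identity is bilinear:
\[
\opt\bigl(q, \join(\Acal, \Bcal)\bigr) \;=\; \min\bigl\{\opt(q^{}_A, \Acal) + \opt(q^{}_B, \Bcal) \;\big|\; q = \sigma(q^{}_A, q^{}_B)\bigr\},
\]
where $\sigma$ is the combinatorial rule coming from $p_{\uparrow \hat U} \sqcap q_{\uparrow \hat U} \sqcap q = \{\hat U\}$ decomposed across $U$ and $U'$; the right-hand side depends on $\Acal$ and $\Bcal$ only through $\opt$ values, so representation of each factor transfers to the join.

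The main obstacle will be the careful bookkeeping in the glue and project reductions: one must verify that the $\rmc$ pruning carried out inside the definitions does not accidentally lose the minimum realizer for some query $q$, and that the lift/collapse correspondence between partitions of $U$ and partitions of $\hat U$ (respectively $\overline X$) is bijective on the relevant subsets. Once these are established, and join is handled by the double-index reduction above, the six preservation-of-representation statements follow uniformly from the general template, proving the proposition.
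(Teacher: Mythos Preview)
The paper does not prove this proposition; it is quoted from Bodlaender et al.\ without argument, so there is no ``paper's own proof'' to compare against. Your overall template---for each operation $f$ and each query $q$, express $\opt(q, f(\Acal, z))$ entirely through values $\opt(q_i, \Acal)$ with the $q_i$ independent of $\Acal$---is the standard one from that reference, and your treatment of union, shift, insert, and glue follows it correctly.

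Two of your cases do not go through as written. For $\proj$, you have the filter backwards: recall that $p \sqsubseteq U[\{e,e'\}]$ means $p$ is \emph{coarser} than $U[\{e,e'\}]$, i.e., $e$ and $e'$ share a block of $p$. The (intended) condition ``$\forall e \in X\ \exists e' \in \overline X:\ p \sqsubseteq U[\{e,e'\}]$'' therefore says that no block of $p$ is contained in $X$---the opposite of your ``every block lies entirely inside $X$ or entirely inside $\overline X$''. Under your reading, taking $q^\ast = q_{\uparrow U}$ leaves every element of $X$ isolated both in $p$ (no block of $p$ crosses into $\overline X$) and in $q^\ast$ (singletons on $X$), so $p \sqcap q^\ast \neq \{U\}$ whenever $X \neq \emptyset$ and your claimed equivalence collapses. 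With the correct reading, the identity you want is simply $\opt(q, \proj(X, \Acal)) = \opt(q_{\uparrow U}, \Acal)$. For $\join$, a bilinear identity of the shape $\min\{\opt(q_A, \Acal) + \opt(q_B, \Bcal) : \sigma(q_A, q_B) = q\}$ cannot hold: the condition $p_{\uparrow \hat U} \sqcap b_{\uparrow \hat U} \sqcap q = \{\hat U\}$ is genuinely three-way, and the constraint it places on $p$ depends on the particular $b$, not only on $q$. (Concretely, with $U=\{1,2,3\}$, $U'=\{4\}$, $q=\{\{1,4\},\{2,3\}\}$, $b=\{\{4\}\}$, both $p=\{\{1,2\},\{3\}\}$ and $p'=\{\{1\},\{2,3\}\}$ satisfy $p \sqcap \{\{1,2\},\{3\}\}=\{U\}$, yet only $p$ satisfies the three-way condition.) What does work---and is all that preservation in the sense of the paper requires---is to keep $\Bcal$ fixed as part of the parameter $z$ and write
\[
\opt\bigl(q, \join(\Acal, \Bcal)\bigr) \;=\; \min_{(b, w_2) \in \Bcal}\Bigl[w_2 + \opt\bigl((b_{\uparrow \hat U} \sqcap q)_{\downarrow U},\; \Acal\bigr)\Bigr],
\]
with the inner term read as $+\infty$ whenever some block of $b_{\uparrow \hat U} \sqcap q$ avoids $U$; this visibly depends on $\Acal$ only through $\opt$-values.
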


\begin{theorem}[Bodlaender et al.~\cite{BodlaenderCKN15}]
  \label{happened}
  There exists an algorithm $\reduce$ that, given a set of weighted partitions $\Acal \subseteq \Pi(U) \times \Nbb$, outputs in time
  $|\Acal|\cdot 2^{(\omega-1) |U|}\cdot |U|^{\Ocal(1)}$ a set of weighted partitions $\Acal'\subseteq \Acal$ such that $\Acal'$ represents $\Acal$ and $|\Acal'| \leq 2^{|U|}$, where $\omega$ denotes the matrix multiplication exponent.
\end{theorem}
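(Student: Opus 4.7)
The plan is to follow the linear-algebraic strategy of Bodlaender et al.: encode each partition of $U$ as a vector over $\mathrm{GF}(2)$ so that the predicate ``$p \sqcap q = \{U\}$'' becomes an entry of a matrix of controlled rank, and then reduce $\Acal$ by keeping only a basis-style subset, sped up by fast matrix multiplication.

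First, I would introduce the \emph{cut matrix} $M \in \mathrm{GF}(2)^{\Pi(U)\times \Pi(U)}$ defined by $M[p,q]=1$ iff $p\sqcap q=\{U\}$. The key combinatorial lemma is that the $\mathrm{GF}(2)$-rank of $M$ is at most $2^{|U|-1}$: this is established by exhibiting a factorization $M = N \cdot N^\top$ (up to permutation of rows/columns), whose inner dimension is indexed by the family of ordered bipartitions of $U$. This rank bound is precisely what drives the $|\Acal'|\le 2^{|U|}$ output-size bound.

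Next, I would translate representation into a statement about row spans. To each $(p,w)\in \Acal$ associate the row $M[p,\cdot]$. Sort $\Acal$ in non-decreasing order of weight, breaking ties arbitrarily, and build $\Acal'$ greedily: scan the pairs $(p,w)$ in order, and add $(p,w)$ to $\Acal'$ iff $M[p,\cdot]$ is \emph{not} in the $\mathrm{GF}(2)$-span of the rows already in $\Acal'$. A short exchange argument shows that whenever a pair $(p,w)$ is discarded, every $q\in \Pi(U)$ that it could witness for $\opt(q,\Acal)$ is already witnessed by some selected pair of weight at most $w$: write $M[p,\cdot]=\sum_{i} M[p_i,\cdot]$ in $\mathrm{GF}(2)$ with all $p_i$'s of weight $\leq w$ already kept; then $M[p,q]=1$ forces some $M[p_i,q]=1$. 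Hence $\Acal'$ represents $\Acal$ and has size at most $\mathrm{rank}(M)\le 2^{|U|}$.

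For the running time, the naive implementation maintains the current basis incrementally using Gaussian elimination, costing $|\Acal|\cdot 2^{2|U|}\cdot |U|^{\Ocal(1)}$. To meet the target $|\Acal|\cdot 2^{(\omega-1)|U|}\cdot |U|^{\Ocal(1)}$, I would process $\Acal$ in blocks of $2^{|U|}$ sorted rows: for each block, test dependence against the current basis by multiplying two matrices of dimensions $2^{|U|}\times 2^{|U|}$ (cost $2^{\omega |U|}\cdot |U|^{\Ocal(1)}$) and extract a maximal independent subset; summed over the $|\Acal|/2^{|U|}$ blocks this yields the claimed bound. The main obstacle is the rank bound on $M$; once that is in hand, the rest is careful but fairly standard linear algebra over $\mathrm{GF}(2)$ combined with a black-box invocation of fast matrix multiplication.
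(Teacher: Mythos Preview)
The paper does not prove this theorem: it is quoted verbatim as a black-box tool from Bodlaender et al.~\cite{BodlaenderCKN15}, with no accompanying argument. So there is no ``paper's own proof'' to compare against.

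That said, your sketch is a faithful outline of the original proof in~\cite{BodlaenderCKN15}: the cut matrix over $\mathrm{GF}(2)$, the rank bound via a bipartition-indexed factorization, the weight-sorted greedy basis extraction (with the exchange argument showing representation is preserved), and the block-wise speedup via fast matrix multiplication are exactly the ingredients used there. One minor point: the factorization you allude to is usually written as $M = C \cdot C^\top$ where the columns of $C$ are indexed by subsets (cuts) of $U$ containing a fixed anchor element, which is what gives the $2^{|U|-1}$ bound you mention; the statement here only claims the weaker $2^{|U|}$, so either suffices. Your running-time analysis via blocks of size $2^{|U|}$ is also the standard way to reach the $2^{(\omega-1)|U|}$ exponent.
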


\medskip
\noindent
\textbf{Definition of the problems.} Let ${\cal F}$ be a proper collection.
We define the parameter ${\bf tm}_{\cal F}$ as the function that maps graphs to non-negative integers as follows:
\begin{eqnarray}
  \label{overlaid}
  {\bf tm}_{\cal F}(G) & = & \min\{|S|\mid  S\subseteq V(G)\wedge G\setminus S\in{\sf ex}_{\sf tm}({\cal F})\}.
\end{eqnarray}
The parameter ${\bf m}_{\cal F}$ is defined analogously.
The main objective of this paper is to study the problem of computing the parameters
${\bf tm}_{\cal F}$ and ${\bf m}_{\cal F}$  for graphs of bounded treewidth under several instantiations of
the collection ${\cal F}$. The corresponding decision problems are formally defined as follows.
\medskip\medskip\medskip

\hspace{-.6cm}\begin{minipage}{6.7cm}
  \paraprobl{\textsc{$\Fcal$-TM-Deletion}}
  {A graph $G$ and an integer $k\in \Nbb$.}
  {Is ${\bf tm}_\Fcal(G)\leq k$?}
  {The treewidth of $G$.}{6.7}
\end{minipage}~~~~~
\begin{minipage}{6.7cm}
  \paraprobl{\textsc{$\Fcal$-M-Deletion}}
  {A graph $G$ and an integer $k\in \Nbb$.}
  {Is ${\bf m}_\Fcal(G)\leq k$?}
  {The treewidth of $G$.}{6.7}
\end{minipage}
\medskip\medskip

Note that in both above problems,
we can always assume that ${\cal F}$ is an antichain
with respect to the considered relation.  Indeed, this is the case because if ${\cal F}$
contains two graphs $H_{1}$ and $H_{2}$ where $H_{1}\preceq_{\sf tm} H_{2}$, then
${\bf tm}_{\cal F}(G)={\bf tm}_{{\cal F}'}(G)$ where ${\cal F}'={\cal F}\setminus\{H_{2}\}$ (similarly
for the minor relation).

Throughout the article, we let $n$ and $\tw$ be the number of vertices and the treewidth of the input graph of the considered problem, respectively. We will also use $w$ to denote the width of a (nice) tree decomposition that is given together with the input graph (which, based on~\cite{BodlaenderDDFLP16}, will differ from $\tw$  by at most  a factor five).

\section{A single-exponential algorithm for {\sc $\{P_3\}$-TM-Deletion}}
\label{accuracy}

It should be noted that a single-exponential algorithm for {\sc $\{P_3\}$-TM-Deletion} is already known. Indeed, Tu et al.~\cite{P3-cover} presented an algorithm running in time $\Ostar(4^{\tw})$, and very recently Bai et al.~\cite{P3-cover-improved} improved it to $\Ostar(3^{\tw})$.
Nevertheless, for completeness we present in this section a simpler algorithm, but involving a greater constant than~\cite{P3-cover,P3-cover-improved}.

We first give a simple structural characterization of the graphs that exclude $P_3$ as a topological minor.


\begin{lemma}
  Let $G$ be a graph.
  $P_3 \not \pretp G$ if and only if each vertex of $G$ has degree at most one.
\end{lemma}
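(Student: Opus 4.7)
The plan is to prove both implications directly from the definitions, since the statement is essentially a characterization of matchings (disjoint unions of edges and isolated vertices) as the graphs excluding $P_3$ as a topological minor.

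For the forward direction, I would prove the contrapositive: if some vertex $v \in V(G)$ has $\degs{G}{v} \geq 2$, then $P_3 \pretp G$. Indeed, pick two distinct neighbors $u, w \in N_G(v)$ and denote the three vertices of $P_3$ by $x_1, x_2, x_3$ with $x_2$ being the degree-$2$ vertex. Define $\phi(x_1) = u$, $\phi(x_2) = v$, $\phi(x_3) = w$, and let $\sigma(\{x_1,x_2\})$ and $\sigma(\{x_2,x_3\})$ be the single-edge paths $uv$ and $vw$ in $G$, respectively. These two paths share only $v$ as a common vertex and have no internal vertices, so $(\phi,\sigma)$ is a valid topological minor model of $P_3$ in $G$.

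For the backward direction, I would again argue contrapositively: suppose $(\phi,\sigma)$ is a topological minor model of $P_3$ in $G$, with $x_2$ the middle vertex of $P_3$. Then the two paths $\sigma(\{x_1,x_2\})$ and $\sigma(\{x_2,x_3\})$ both have $\phi(x_2)$ as an endpoint, and by the internal-disjointness condition of a topological minor model, $\phi(x_2)$ is not an internal vertex of either path. Hence each of these two paths contributes a distinct edge incident to $\phi(x_2)$ in $G$ (the first edge of the path), so $\degs{G}{\phi(x_2)} \geq 2$.

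The proof is essentially immediate from unpacking the definition of topological minor model; the only thing that needs a moment of care is checking in the forward direction that the two chosen paths are internally disjoint, which holds trivially because both have length one and thus no internal vertices. No separate argument is required for $G$ being disconnected or having isolated vertices, since the characterization is vertex-local.
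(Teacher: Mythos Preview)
Your proof is correct and follows essentially the same approach as the paper: both directions are immediate from the definition of a topological minor model, and in particular the backward direction hinges on the fact that the image of the degree-two vertex of $P_3$ must have degree at least two in $G$. The paper phrases the forward direction via connected components of size at least three rather than via a vertex of degree at least two, but this is only a cosmetic difference.
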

\begin{proof}
  Let $G$ be a graph.
  If $G$ has a connected component of size at least three, then clearly it contains a $P_3$.
  This implies that, if $P_3 \not \pretp G$, then  each connected component of $G$ has size at most two and so, each vertex of $G$ has degree at most one.
  Conversely, if each  vertex of $G$ has degree at most one, then, as $P_3$ contains a vertex of degree two, $P_3 \not \pretp G$.
\end{proof}

We present an algorithm using classical dynamic programming techniques over a tree decomposition of the input graph.
Let $G$ be an instance of {\sc $\{P_3\}$-TM-Deletion} and let $((T,\Xcal), r, \Gcal)$ be a nice tree decomposition of $G$.

We define, for each $t \in V(T)$,
the set
$\Ical_t = \{(S,S_0) \mid S,S_0 \subseteq X_t,\ S \cap S_0 = \es\}$ and a function $\dptable_t: \Ical_t \to \Nbb$ such that for each $(S,S_0) \in \Ical_t$, $\dptable(S,S_0)$ is the minimum $\ell$ such that
there exists a set $\hs \subseteq V(G_t)$, called the \emph{witness} of $(S,S_0)$, that satisfies:
\begin{itemize}
\item $|\hs| \leq \ell$,
\item $\hs \cap X_t = S$,
\item $P_3 \not \pretp G_t \gm \hs$, and
\item $S_0$ is the set of vertices of $X_t$ of degree $0$ in $G_t \gm S$.
\end{itemize}

Note that with this definition, 
${\bf tm}_{\cal F}(G) = \dptable_r(\es,\es)$.
For each $t \in V(T)$, we assume that we have already computed $\dptable_{t'}$ for each children $t'$ of $t$, and we proceed to the computation of $\dptable_t$.
We distinguish
several cases depending on the type of node $t$.

\begin{description}
\item[Leaf.] $\Ical_t = \{(\es,\es)\}$ and $\dptable_t(\es,\es) = 0$.
\end{description}

\begin{description}
\item[Introduce vertex.] If $v$ is the insertion vertex of $X_t$ and $t'$ is the child of $t$, then for each $(S,S_0) \in \Ical_t$,
  \begin{eqnarray*}
    \dptable_t(S,S_0) &=& \min\big(~~\{\dptable_{t'}(S',S_0)+1 \mid (S',S_0) \in \Ical_{t'},\ S = S' \cup \{v\}\}\\
                      &&~~~~~~\cup\ \{\dptable_{t'}(S,S_0') \mid (S,S_0' ) \in \Ical_{t'}, S_0 = S_0' \cup \{v\},\ N_{G_t[X_t]}(v) \sm S = \es\}
    \\
                      &&~~~~~~\cup\ \{\dptable_{t'}(S,S_0') \mid (S,S_0') \in \Ical_{t'}, S_0 = S_0' \sm \{u\},\ u \in S_0',\ \\
                      &&~~~~~~~~~~~~~~~~~~~~~~~~~~~~~~~~~~~~~~~~~~~~~~~~~~~~~~~N_{G_t[X_t]}(v) \sm S = \{u\}\}~\big).
  \end{eqnarray*}
\item[Forget vertex.] If  $v$ is the forget vertex of $X_t$ and $t'$ is the child of $t$, then for each $(S,S_0) \in \Ical_t$,
  \begin{eqnarray*}
    \dptable_t(S,S_0) &=& \min\{\dptable_{t'}(S', S_0')\mid (S',S_0')\in \Ical_{t'},\ S = S' \sm \{v\},\ S_0 = S_0' \sm \{v\}\}
  \end{eqnarray*}

\item[Join.]  If $t'$ and $t''$ are the children of $t$, then for each $(S,S_0) \in \Ical_t$,
  \begin{eqnarray*}
    \dptable(S,S_0) &=& \min\{\dptable(S',S_0') + \dptable(S'',S_0'') - |S' \cap S''| \\
                    &&~~~~~~~~\mid
                       (S',S_0') \in \Ical_{t'}, (S'',S_0'') \in \Ical_{t''}, \\
                    &&~~~~~~~~~~~~~~~~~
                       S = S' \cup S'',\
                       S_0 = S_0' \cap S_0'',\
                       X_t \sm S \subseteq S_0' \cup S_0''
                       \}.
  \end{eqnarray*}
\end{description}

Let us analyze the running time of this algorithm.
As, for each $t \in V(T)$, $S$ and $S_0$ are disjoint subsets of $X_t$, we have that $|\Ical_t| \leq 3^{|X_t|}$.
Note that
if $t$ is a leaf, then $\dptable_t$ can be computed in time $\Ocal(1)$,
if $t$ is an introduce vertex or a forget vertex node, and $t'$ is the child of $t$, then $\dptable_t$ can be computed in time $\Ocal(|\Ical_{t'}|\cdot |X_t|)$,
and if $t$ is a join node, and $t'$ and $t''$ are the two children of $t$, then  $\dptable_t$ can be computed in time $\Ocal(|\Ical_{t'}|\cdot |\Ical_{t''}| \cdot |X_t|)$.

We now show that for each $t \in V(T)$, the function $\dptable_t$ is correctly computed by the algorithm.

\begin{description}
\item[Leaf.] This follows directly from the definition of $\dptable_t$.
\item[Introduce vertex.] Let $v$ be the insertion vertex of $X_t$.
  As $v$ is the insertion vertex, we have that $N_{G_t[X_t]}(v) = N_{G_t}(v)$, and so
  for each value we add to the set, we can find a witness of $(S,S_0)$ of size bounded by this value.

  Conversely, let $(S, S_0) \in \Ical_t$ and let $\hs$ be a witness.
  If $v \in S$, then $(S\sm \{v\}, S_0) \in \Ical_{t'}$ and $\dptable(S\sm \{v\}, S_0) \leq |\hs|-1$,
  if $v \in S_0$ then $(S, S_0\sm \{v\}) \in \Ical_{t'}$ and $\dptable(S, S_0\sm \{v\}) \leq |\hs|$, and
  if $v \in X_t \sm (S \cup S_0)$, then by definition $v$ has a unique neighbor, say $u$, in $G_t \gm \hs$,
  moreover $u \in X_t \sm (S \cup S_0)$, $v$ is the unique neighbor of $u$ in $G_t \gm \hs$,  $(S, S_0\cup \{u\}) \in \Ical_{t'}$, and $\dptable(S, S_0\cup \{u\}) \leq |\hs|$.





\item[Forget vertex.] This also follows directly from the definition of $\dptable_t$.

\item[Join.]  Let $(S',S_0') \in \Rcal_{t'}$ and let   $(S'',S_0'') \in \Ical_{t''}$ with witnesses $\hs'$ and $\hs''$, respectively.
  If $S = S' \cup S''$ and $S_0' \cup S_0'' = X_t \sm S$, then
  the condition $ X_t \sm S \subseteq S_0' \cup S_0''$ ensures that $G_t \gm (\hs' \cup \hs'')$ has no vertex of degree at least two  and so
  $\hs' \cup \hs''$ is a witness of $(S,S_0'\cap S_0'') \in \Ical_t$ of size at most $\dptable_{t'}(S',S_0')+ \dptable_{t'}(S'',S_0'') -|S'\cap S''|$.

  Conversely,
  let  $(S, S_0) \in \Ical_t$ with witness $\hs$.
  If $\hs' = \hs \cap V(G_{t'})$ and $\hs'' = \hs \cap V(G_{t''})$, then by definition of $\hs$,
  $\hs'$
  is a witness of  some $(S',S_0') \in \Ical_{t'}$, and
  $\hs''$ is a witness of some $(S'',S_0'') \in \Ical_{t''}$
  such that
  $S = S' = S''$, $S_0' \cup S_0'' = X_t \sm S$, and $S_0 = S_0'\cap S_0''$, and we have
  $\dptable_{t'}(S',S_0')+ \dptable_{t'}(S'',S_0'') -|S| \leq |\hs|$.


\end{description}

The following theorem summarizes the above discussion. 

\begin{theorem}\label{majority}
  If a nice tree decomposition of $G$ of width $w$ is given, {\sc $\{P_3\}$-TM-Deletion} can be solved in time $\Ocal(9^{w}\cdot w \cdot  n)$.
\end{theorem}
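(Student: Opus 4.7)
The plan is to perform standard dynamic programming over the nice tree decomposition $({\cal D},r,{\cal G})$, exploiting the structural characterization that $P_3 \not\pretp G$ if and only if $G$ has maximum degree at most one. For each node $t\in V(T)$ and each partial solution $S\subseteq V(G_t)$ removed so far, the essential information about $X_t$ is only: (i) which bag vertices have been placed in the deletion set, and (ii) which surviving bag vertices currently have degree $0$ in $G_t \gm S$, as opposed to degree $1$. The reason is that only these two quantities determine whether a future introduced edge incident to a bag vertex will create a path $P_3$; degree-$2$ bag vertices cannot appear in a feasible partial solution, and subdivision vertices cannot come from the bag either since all neighbors are eventually known. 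Thus I would index table entries by pairs $(S,S_0)$ of disjoint subsets of $X_t$, yielding at most $3^{|X_t|}$ entries per node.

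For the transitions I would process the four node types exactly as done in the excerpt. At leaf nodes the table is trivially $\dptable_t(\es,\es)=0$. At an introduce vertex node with insertion vertex $v$, there are three ways to extend a child entry: place $v$ into $S$ (adding $1$ to the cost), or keep $v$ outside $S$, in which case either $v$ has no surviving bag-neighbor (so $v$ enters $S_0$) or $v$ has exactly one surviving bag-neighbor $u\in S_0'$ (whose state flips from degree $0$ to degree $1$, i.e.\ $u$ leaves $S_0$), and any other configuration would either create a degree-$2$ vertex or inconsistently extend the child state. At a forget vertex node with forget vertex $v$, I simply drop $v$ from $S$ or $S_0$, noting that the forgotten vertex can never accumulate more edges. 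At a join node I combine child entries $(S',S_0')$ and $(S'',S_0'')$ with $S=S'=S''$, the constraint $X_t\gm S\subseteq S_0'\cup S_0''$ (so no bag vertex reaches degree $2$ through the merging), $S_0=S_0'\cap S_0''$ (degree $0$ in the join means degree $0$ in both sides), and cost $\dptable_{t'}(S',S_0')+\dptable_{t''}(S'',S_0'')-|S|$ to avoid double counting the vertices in $S$.

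Correctness in each case is a short bidirectional check, and the only subtle point is the join: one must verify that the combined witness $\hs=\hs'\cup\hs''$ still has maximum degree at most one in $G_t\gm\hs$, which is precisely enforced by the conditions $X_t\gm S\subseteq S_0'\cup S_0''$ and the fact that non-bag vertices of $G_{t'}$ and $G_{t''}$ are separated by $X_t$ in $G_t$. Reading the final answer off the root as $\dptable_r(\es,\es)$ is immediate since $X_r=\es$ and $G_r=G$.

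For the running time, since $|\Ical_t|\le 3^{|X_t|}\le 3^{w+1}$, leaves, introduce vertex nodes and forget vertex nodes cost $\Ocal(3^w\cdot w)$, while the join transition iterates over pairs of compatible child entries and costs $\Ocal(3^w\cdot 3^w\cdot w)=\Ocal(9^w\cdot w)$; this dominates. Since a nice tree decomposition of width $w$ of an $n$-vertex graph has $\Ocal(n)$ nodes, the total running time is $\Ocal(9^w\cdot w\cdot n)$, yielding the theorem. I do not anticipate any real obstacle: the main modelling choice is the introduction of $S_0$, and once this is fixed all transitions are mechanical.
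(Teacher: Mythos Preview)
Your proposal is correct and follows essentially the same approach as the paper: the same table indexed by pairs $(S,S_0)$ of disjoint subsets of $X_t$, the same transitions at each node type (including the three cases at an introduce node and the constraints $S=S'=S''$, $X_t\setminus S\subseteq S_0'\cup S_0''$, $S_0=S_0'\cap S_0''$ at a join), and the same running-time analysis yielding $\Ocal(9^w\cdot w\cdot n)$. There is nothing to add.
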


\section{A single-exponential algorithm for {\sc $\{P_4\}$-TM-Deletion}}
\label{affected}

Similarly to what we did for {\sc $\{P_3\}$-TM-Deletion}, we start with a structural definition of the graphs that exclude $P_4$ as a topological minor.

\begin{lemma}\label{headlong}
  Let $G$ be a graph.
  $P_4 \not \pretp G$ if and only if
  each connected component of $G$ is either a $C_3$ or a star.
\end{lemma}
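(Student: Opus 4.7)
The plan is to argue both directions directly, exploiting the observation preceding the lemma which implies that $P_4 \preceq_{\sf tm} G$ holds if and only if $G$ contains $P_4$ as a subgraph, i.e.\ a path of length three.

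For the ``if'' direction, I would assume that every connected component of $G$ is either a $C_3$ or a star, and show that none of these hosts $P_4$ as a subgraph. A $C_3$ has only three vertices, while in any star $K_{1,m}$ every path has length at most two (since all edges share the central vertex). Since $P_4$ is connected, any topological minor model of $P_4$ in $G$ would have to lie inside a single component, which is thus impossible.

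For the ``only if'' direction, I would fix an arbitrary connected component $C$ of $G$ and argue by cases on its structure. First, if $C$ contains a cycle of length at least four, that cycle already contains $P_4$ as a subgraph, contradicting $P_4 \not\preceq_{\sf tm} G$. Second, if $C$ contains a triangle $\{a,b,c\}$ and has a further vertex, then connectivity yields a shortest path from some $v \notin \{a,b,c\}$ to the triangle; its last edge is of the form $\{u,a\}$ (up to relabeling) with $u \notin \{b,c\}$ by the minimality of the path, so $u$--$a$--$b$--$c$ is a $P_4$-subgraph, again a contradiction. Hence either $C$ is $C_3$ itself, or $C$ is acyclic; being connected, in the latter case $C$ is a tree.

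To conclude, I would invoke the standard fact that a tree contains $P_4$ as a subgraph if and only if it has diameter at least three, i.e.\ if and only if it is not a star (allowing here the degenerate stars $K_1$ and $K_2$). Thus, if $C$ is a tree avoiding $P_4$, then $C$ must be a star, completing the case analysis. I do not anticipate any real obstacle: the only point that deserves being made explicit is the reduction from topological minor containment to subgraph containment for $P_4$, after which everything follows from elementary graph-theoretic bookkeeping.
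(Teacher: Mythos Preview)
Your proposal is correct. Both directions are handled soundly, and your explicit use of the observation that $P_4 \preceq_{\sf tm} G$ is equivalent to $P_4$ being a subgraph of $G$ is exactly the right reduction.

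The paper's proof reaches the same conclusion by a slightly different and more compressed route. Rather than splitting into the cases ``long cycle / triangle plus extra vertex / tree'', the paper argues directly that any connected component on at least four vertices can contain at most one vertex of degree at least two (otherwise a $P_4$ appears), and hence is a star; components on at most three vertices are then either stars or $C_3$. Your argument is a bit more granular---separating off the cyclic cases first and then invoking the diameter characterization of stars among trees---while the paper's degree-counting shortcut handles components of size $\geq 4$ in one stroke. Both are elementary; yours spells out more of the intermediate steps, which makes it easier to follow but somewhat longer.
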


\begin{proof}
  First note that if each connected component of $G$ is either a $C_3$ or a star, then $P_4 \not \pretp G$. Conversely, assume that $P_4 \not \pretp G$.
  Then each connected component of $G$ of size at least $4$ should contain at most $1$ vertex of degree at least $2$, hence such component is a star. On the other hand, the only graph on at most 3 vertices that is not a star is $C_3$. The lemma follows.
\end{proof}

As we did for {\sc $\{P_3\}$-TM-Deletion}, we present an algorithm using classical dynamic programming techniques over a tree decomposition of the input graph.
Let $G$ be an instance of {\sc $\{P_4\}$-Deletion}, and let $((T,\Xcal), r, \Gcal)$ be a nice tree decomposition of $G$.

We define, for each $t \in T$, the set $\Ical_t$ to be the set of each tuple $(S,S_{1+}, S_{1-},S_*,S_{3+}, S_{3-})$
such that $ \{S, S_{1+}, S_{1-}, S_*, S_{3+}, S_{3-}\}$ is a partition of $X_t$ and the function
$\dptable_t: \Ical_t \to \Nbb$ such that, for each $(S,S_{1+}, S_{1-},S_*,S_{3+}, S_{3-}) \in \Ical_t$,
$\dptable_t(S,S_{1+}, S_{1-},S_*,S_{3+}, S_{3-})$ is the minimum $\ell$ such that there exists  a triple $(\hs,\hs_*,\hs_{3-})\ \subseteq V(G_t) \times V(G_t) \times V(G_t)$,
called the \emph{witness} of $(S,S_{1+}, S_{1-},S_*,S_{3+}, S_{3-})$, which satisfies the following properties:

\begin{itemize}
\item $\hs$, $\hs_*$, and $\hs_{3-}$ are pairwise disjoint,
\item  $\widehat{S} \cap X_t = S$,
  $\widehat{S}_* \cap X_t = S_*$, and
  $\widehat{S}_{3-} \cap X_t = S_{3-}$,
\item $|\widehat{S}| \leq \ell$,
\item  $P_4 \not \pretp G_t \gm \hs$,
\item $S_{1+}$ is a set of vertices of degree $0$ in $G_t \gm \hs$,
\item each vertex of $S_{1-}$ has a unique neighbor in $G_t \gm \widehat{S}$ and this neighbor  is in $\widehat{S}_*$,
\item each connected component of $G_t[\widehat{S}_{3-}]$ is a $C_3$,
\item there is no edge in $G_t \gm \hs$ between a vertex of $\widehat{S}_{3-}$ and a vertex of $V(G_t) \sm (\hs \cup \widehat{S}_{3-})$,
\item there is no edge in $G_t \gm \hs$ between a vertex of ${S}_{3+}$ and a vertex of $V(G_t) \sm (\hs \cup {S}_{3+})$, and
\item there is no edge in $G_t \gm \hs$ between two vertices of $S_*$.
\end{itemize}

Intuitively, $\hs$ corresponds to a partial solution in $G_t$. Note that, by Lemma~\ref{headlong}, each component of $G_t \gm \hs$ must be either a star or a $C_3$. With this in mind, $\hs_*$ is the set of vertices that are centers of a star in $G_t \gm \hs$, $S_{1+}$ is the set of
leaves of a star
that are not yet connected to a vertex of $\hs_*$, $S_{1-}$ is the set of leaves of a star that are already connected to a vertex of $\hs_*$,  $\hs_{3-}$ is the set of vertices that induce $C_3$'s in $G_{t}$, and $S_{3+}$ is a set of vertices that will induce $C_3$'s when further edges will appear.


Note that with this definition, 
${\bf tm}_{\cal F}(G) =  \dptable_r(\es,\es,\es,\es,\es,\es)$.
For each $t \in V(T)$, we assume that we have already computed $\dptable_{t'}$ for each children $t'$ of $t$, and we proceed to the computation of $\dptable_t$.
We distinguish several cases depending on the type of node $t$.
\begin{description}
\item[Leaf.] $\Ical_t = \{(\es,\es,\es,\es,\es,\es)\}$ and $\dptable_t(\es,\es,\es,\es,\es,\es) = 0$.
\item[Introduce vertex.] If $v$ is the insertion vertex of $X_t$ and $t'$ is the child of $t$, then, for each $(S,S_{1+}, S_{1-},S_*,S_{3+}, S_{3-}) \in \Ical_t$,

  \begin{eqnarray*}
    \!\!\!\!\!\!\!\!\!\!\!\!\!\!\dptable_t(S,S_{1+},S_{1-},S_*,S_{3+},S_{3-}) &=&\min\big(~~\{
                                                                                  \dptable_{t'}(S',S_{1+},S_{1-},S_*,S_{3+},S_{3-})+1 \\
                                                                              &&~~~\mid (S',S_{1+},S_{1-},S_*,S_{3+},S_{3-}) \in \Rcal_{t'},\ S = S' \cup \{v\}
                                                                                 \}\\
                                                                              &\cup& \{\dptable_{t'}(S,S'_{1+},S_{1-},S_*,S_{3+},S_{3-}) \\
                                                                              &&~~~\mid (S,S'_{1+},S_{1-},S_*,S_{3+},S_{3-}) \in \Rcal_{t'},\\
                                                                              &&~~~~~ \ S _{1+}= S'_{1+} \cup \{v\},\
                                                                                 N_{G_t[X_t\sm S]}(v) = \es
                                                                                 \}\\
                                                                              &\cup& \{\dptable_{t'}(S,S_{1+},S'_{1-},S_*,S_{3+},S_{3-}) \\
                                                                              &&~~~\mid (S,S_{1+},S'_{1-},S_*,S_{3+},S_{3-}) \in \Rcal_{t'},\\
                                                                              &&~~~~~ \ S _{1-}= S'_{1-} \cup \{v\},\
                                                                                 z \in S_*,\ N_{G_t[X_t\sm S]}(v) = \{z\}
                                                                                 \}\\
                                                                              &\cup& \{\dptable_{t'}(S,S'_{1+},S'_{1-},S'_*,S_{3+},S_{3-}) \\
                                                                              &&~~~\mid (S,S'_{1+},S'_{1-},S'_*,S_{3+},S_{3-}) \in \Rcal_{t'},\\
                                                                              &&~~~~~ S_{*}= S'_{*} \cup \{v\},\
                                                                                 N_{G_t[X_t \sm S]}(v) \subseteq S'_{1+},\\
                                                                              &&~~~~~                S_{1+} = S'_{1+} \sm N_{G_t[X_t \sm S]}(v),\
                                                                                 S_{1-} = S'_{1-} \cup N_{G_t[X_t \sm S]}(v)
                                                                                 \}\\
                                                                              &\cup& \{\dptable_{t'}(S,S_{1+},S_{1-},S_*,S'_{3+},S_{3-}) \\
                                                                              &&~~~\mid (S,S_{1+},S_{1-},S_*,S'_{3+},S_{3-}) \in \Rcal_{t'},\\
                                                                              &&~~~~~ \ S_{3+}= S'_{3+} \cup \{v\},\ \\
                                                                              &&~~~~~~ [ N_{G_t[X_t \sm S]}(v) = \es] \mbox{ or }\\
                                                                              &&~~~~~~                [z \in S'_{3+}, \
                                                                                 N_{G_t[X_t \sm S]}(v) = \{z\},\
                                                                                 N_{G_t[X_t \sm S]}(z) = \{v\}]
                                                                                 \}\\
                                                                              &\cup& \{\dptable_{t'}(S,S_{1+},S_{1-},S_*,S'_{3+},S'_{3-}) \\
                                                                              &&~~~\mid (S,S_{1+},S_{1-},S_*,S'_{3+},S_{3-}) \in \Rcal_{t'},\\
                                                                              &&~~~~~~ S_{3+}= S'_{3+} \sm \{z,z'\},\ S_{3-}= S'_{3-} \cup \{z,z',v\},\ \\
                                                                              &&~~~~~~   z,z' \in S'_{3+}, \
                                                                                 N_{G_t[X_t \sm S]}(v) = \{z,z'\},\\
                                                                              &&~~~~~~
                                                                                 N_{G_t[X_t \sm S]}(z) = \{v,z'\},\
                                                                                 N_{G_t[X_t \sm S]}(z') = \{v,z\}
                                                                                 \}~~\big).\\
  \end{eqnarray*}
\item[Forget vertex.] If  $v$ is the forget vertex of $X_t$ and $t'$ is the child of $t$, then,  \\ for each $(S,S_{1+}, S_{1-},S_*,S_{3+}, S_{3-}) \in \Ical_t$,
  \begin{eqnarray*}
    \dptable_t(S,S_{1+}, S_{1-},S_*,S_{3+}, S_{3-}) &=& \min\{\dptable_{t'}(S',S_{1+}, S'_{1-},S'_*,S_{3+}, S'_{3-})\\
                                                    &&~~~~~~~~~~\mid (S',S_{1+},S'_{1-},S'_*,S_{3+},S'_{3-}) \in \Ical_{t'},\ \\
                                                    &&~~~~~~~~~~~~ S = S' \sm \{v\},\ S_{1-} = S'_{1-} \sm \{v\},\\
                                                    &&~~~~~~~~~~~~ S_* = S'_* \sm \{v\},\ S_{3-} = S'_{3-} \sm \{v\}\}.
  \end{eqnarray*}
\item[Join.]  If $t'$ and $t''$ are the children of $t$, then for each $(S,S_{1+}, S_{1-},S_*,S_{3+}, S_{3-}) \in \Ical_t$, \\ $\dptable_t(S,S_{1+}, S_{1-},S_*,S_{3+}, S_{3-})$ is
  \begin{eqnarray*}
    && \min\{\dptable_{t'}(S,S'_{1+}, S'_{1-},S_*,S'_{3+}, S'_{3-}) + \dptable_{t'}(S,S''_{1+}, S''_{1-},S_*,S''_{3+}, S''_{3-}) - |S|\\
    &&~~~~~~~~~~\mid (S,S'_{1+},S'_{1-},S_*,S'_{3+},S'_{3-}) \in \Ical_{t'},\ (S,S''_{1+},S''_{1-},S_*,S''_{3+},S''_{3-}) \in \Ical_{t''},\ \\
    &&~~~~~~~~~~ (S'_{1+} \cup S'_{1-}) \cap (S''_{3+} \cup S''_{3-}) = (S''_{1+} \cup S''_{1-}) \cap (S'_{3+} \cup S'_{3-}) = \es,\\
    &&~~~~~~~~~~ \forall v \in S'_{1-}\cap S''_{1-},\ \exists z \in S_* : N_{G_t[X_t \sm S]}(v) = \{z\},\\
    &&~~~~~~~~~~ S_{1-} = (S'_{1-}\cup S''_{1-}),\ S_{1+} = S'_{1+}\cap S''_{1+},\\
    &&~~~~~~~~~~ \forall v \in S'_{3-}\cap S''_{3-}, \exists z,z' \in S'_{3-}\cap S''_{3-} : v,z,z'\ \mbox{induce a $C_3$\! in\! $G_t[{X_t \sm S}]$},\\
        &&~~~~~~~~~~ S_{3-} = (S'_{3-}\cup S''_{3-}) ,\ S_{3+} = S'_{3+}\cap S''_{3+}
       \}.\\
  \end{eqnarray*}
\end{description}
Let us analyze  the running time of this algorithm.
As, for each $t \in V(T)$, $S$, $S_{1+}$, $S_{1-}$, $S_*$, $S_{3+}$, and $S_{3-}$ form a partition of $X_t$, we have that $|\Ical_t| \leq 6^{|X_t|}$.
Note that
if $t$ is a leaf, then $\dptable_t$ can be computed in time $\Ocal(1)$,
if $t$ is an introduce vertex or a forget vertex node, and $t'$ is the child of $t$, then $\dptable_t$ can be computed in time $\Ocal(|\Ical_{t'}|\cdot |X_t|)$, and
if $t$ is a join node, and $t'$ and $t''$ are the two children of $t$, then  $\dptable_t$ can be computed in time $\Ocal(|\Ical_{t'}|\cdot |\Ical_{t''}|\cdot |X_t|)$.

We now show that for each $t \in V(T)$, $\dptable_t$ is correctly computed by the algorithm.
For each  $(S,S_{1+}, S_{1-},S_*,S_{3+}, S_{3-})\in \Ical_t$, it can be easily checked that each value $\ell$ we compute
respects,  $\dptable_t(S,S_{1+}, S_{1-},S_*,S_{3+}, S_{3-}) \leq \ell$.
Conversely, we now argue that
for each  $(S,S_{1+}, S_{1-},S_*,S_{3+}, S_{3-})\in \Ical_t$, the computed value $\ell$ is such that
each witness $(\hs,\hs_*, \hs_{3-})$ of $(S,S_{1+}, S_{1-},S_*,S_{3+}, S_{3-})$ satisfies $\ell \leq |\hs|$.
We again distinguish the type of node $t$.

\begin{description}
\item[Leaf.] This follows directly from the definition of $\dptable_t$.
\item[Introduce vertex.]
  Let $v$ be the insertion vertex of $X_t$, let  $(S,S_{1+}, S_{1-}, S_*, S_{3+}, S_{3-}) \in \Rcal_t$, and let $(\hs,\hs_*, \hs_{3-})$ be a witness.
  \begin{itemize}
  \item If $v \in S$, then $(S \sm \{v\},S_{1+},S_{1-},S_*,S_{3+},S_{3-}) \in \Ical_{t'}$ and \\$\dptable_{t'}(S \sm \{v\},S_{1+},S_{1-},S_*,S_{3+},S_{3-}) \leq |\hs|-1$.
  \item If $v \in S_{1+}$, then $v$ is of degree $0$ in $G_t \gm \hs$,
    hence  $(S ,S_{1+}\sm \{v\},S_{1-},S_*,S_{3+},S_{3-}) \in \Ical_{t'}$ and $\dptable_{t'}(S ,S_{1+}\sm \{v\},S_{1-},S_*,S_{3+},S_{3-}) \leq |\hs|$.
  \item If $v \in S_{1-}$, then $v$ has a unique neighbor that is in $\hs_*$. As $v$ is the insertion vertex of $X_t$, it implies that $N_{G_t}(v) \subseteq S_*$, and so $(S ,S_{1+},S_{1-} \sm \{v\},S_*,S_{3+},S_{3-}) \in \Ical_{t'}$ and
    $\dptable_{t'}(S ,S_{1+},S_{1-}\sm \{v\},S_*,S_{3+},S_{3-}) \leq |\hs|$.
  \item If $v \in S_*$, then every neighbor of $v$ is in $S_{1-}$ and has degree $1$ in $G_t \gm \hs$. Thus,
    $(S,S_{1+}\cup N_{G_t[X_t\sm S]}(v),S_{1-} \sm  N_{G_t[X_t\sm S]}(v),S_* \sm \{v\},S_{3+},S_{3-}) \in \Ical_{t'}$ and
    $\dptable_{t'}(S,S_{1+}\cup N_{G_t[X_t\sm S]}(v),S_{1-} \sm  N_{G_t[X_t\sm S]}(v),S_* \sm \{v\},S_{3+},S_{3-}) \leq |\hs|$.
  \item If $v \in S_{3+}$, then $(S ,S_{1+},S_{1-},S_*,S_{3+} \sm \{v\},S_{3-}) \in \Ical_{t'}$ and

    $\dptable_{t'}(S ,S_{1+},S_{1-},S_*,S_{3+} \sm \{v\},S_{3-}) \leq |\hs|$.
  \item If $v \in S_{3-}$, then there exist $z$ and $z'$ in $S_{3-}$ such that $\{v,z,z'\}$ induce a $C_3$ in $G_t \gm \hs$ and there is no edge in $G_t \gm \hs$ between a vertex of $\{v,z,z'\}$ and a vertex of $V(G_t \gm \hs) \sm \{x,z,z'\}$.
    So  $(S ,S_{1+},S_{1-},S_*,S_{3+} \cup \{z,z'\},S_{3-}\sm \{x,z,z'\}) \in \Ical_{t'}$
    and $\dptable_{t'}(S ,S_{1+},S_{1-},S_*,S_{3+} \cup \{z,z'\},S_{3-}\sm \{x,z,z'\}) \leq |\hs|$.
  \end{itemize}
\item[Forget vertex.]
  Let $v$ be the forget vertex of $X_t$, let  $(S,S_{1+}, S_{1-}, S_*, S_{3+}, S_{3-}) \in \Ical_t$, and let $(\hs,\hs_*, \hs_{3-})$ be a witness.
  If $v$ has degree $0$ in $G_t \gm \hs$, then
  $(S,S_{1+}, S_{1-}, S_* \cup \{v\}, S_{3+}, S_{3-}) \in \Ical_{t'}$ and
  $\dptable_{t'}(S,S_{1+}, S_{1-}, S_* \cup \{v\}, S_{3+}, S_{3-}) \leq |\hs|$.
  If $v$ has degree at least $1$ in $G_t \gm \hs$, then $N_{G_t \gm \hs}(v) \cap  S_{3+} = \es$, as otherwise there would be an edge in $G_t \gm \hs$ between a vertex of $S_{3+}$ and a vertex of $V(G_t) \sm (\hs \cup {S}_{3+})$. So, one of the following case occurs:
  \begin{itemize}
  \item $v \in \hs$, $(S \cup \{v\},S_{1+}, S_{1-}, S_*, S_{3+}, S_{3-}) \in \Ical_{t'}$, and \\$\dptable_{t'}(S \cup \{v\},S_{1+}, S_{1-}, S_*, S_{3+}, S_{3-}) \leq |\hs|$,
  \item $v \in \hs_*$,  $(S,S_{1+}, S_{1-}, S_* \cup \{v\}, S_{3+}, S_{3-}) \in \Ical_{t'}$, and\\ $\dptable_{t'}(S,S_{1+}, S_{1-}, S_* \cup \{v\}, S_{3+}, S_{3-}) \leq |\hs|$,
  \item $N_{G_t \gm \hs}(v) \subseteq \hs_*$, $(S,S_{1+}, S_{1-}\cup \{v\}, S_* , S_{3+}, S_{3-}) \in \Ical_{t'}$, and\\ $\dptable_{t'}(S,S_{1+}, S_{1-} \cup \{v\}, S_*, S_{3+}, S_{3-}) \leq |\hs|$, or
  \item $v \in \hs_{3-}$, $(S,S_{1+}, S_{1-}, S_* , S_{3+}, S_{3-} \cup \{v\}) \in \Ical_{t'}$, and\\ $\dptable_{t'}(S,S_{1+}, S_{1-}, S_*, S_{3+}, S_{3-} \cup \{v\}) \leq |\hs|$
  \end{itemize}

\item[Join.]
  Let  $(S,S_{1+}, S_{1-}, S_*, S_{3+}, S_{3-}) \in \Ical_t$, and let $(\hs,\hs_*, \hs_{3-})$ be a witness.
  Let $t'$ and $t''$ be the two children of $t$.
  We define
  $\hs' = \hs \cap V(G_{t'})$,
  $\hs'' = \hs \cap V(G_{t''})$,
  $\hs'_* = \hs_* \cap V(G_{t'})$,
  $\hs''_* = \hs_* \cap V(G_{t''})$,
  $\hs'_{3-} \subseteq \hs_{3-} \cap V(G_{t'})$, and
  $\hs''_{3-} \subseteq \hs_{3-} \cap V(G_{t''})$,
  such that each connected component of $G_t[\hs'_{3-}]$ (resp. $G_t[\hs''_{3-}]$) is a $C_3$ and
  $G_{t'} \gm (\hs'\cup \hs'_{3-})$ (resp. $G_{t''} \gm (\hs'' \cup \hs''_{3-})$) is a forest).
  Then we define
  \begin{itemize}
  \item $S' = \hs' \cap X_t$,
  \item $S'_{1+} = S_{1+} \cup \{v \in S_{1-} \mid N_{G_t \gm \hs}(v) \not \subseteq \hs'_*\}$,
  \item $S'_{1-} = \{v \in S_{1-} \mid N_{G_t \gm \hs}(v) \subseteq \hs'_*\}$,
  \item $S'_* = S_* \cap V(G_{t'})$,
  \item $S'_{3-} = \hs_{3-}' \cap X_t$, and
  \item $S'_{3+} = S_{3+} \cup (S_{3-}\sm S'_{3-})$.
  \end{itemize}

  Note that
  $(S' ,S'_{1+},S'_{1-},S'_*,S'_{3+},S'_{3-}) \in \Ical_t'$.
  We define
  $(S'' ,S''_{1+},S''_{1-},S''_*,S''_{3+},S''_{3-}) \in \Ical_t''$ similarly.
  Moreover we can easily check that
  \begin{itemize}
  \item $ S = S' =  S'', S_* = S'_* = S''_*$,
  \item     $(S'_{1+} \cup S'_{1-}) \cap (S''_{3+} \cup S''_{3-}) = (S''_{1+} \cup S''_{1-}) \cap (S'_{3+} \cup S'_{3-}) = \es$,
  \item    $\forall v \in S'_{1-}\cap S''_{1-}, \exists z \in S_* : N_{G_t[X_t \sm S]}(v) = \{z\}$,
  \item     $\forall v \in S'_{3-}\cap S''_{3-}, \exists z,z' \in S'_{3-}\cap S''_{3-} : v,z,z' \mbox{ induce a $C_3$ in $G_t[{X_t \sm S}]$}$,
  \item $(S,S_{1+}, S_{1-}, S_*, S_{3+}, S_{3-}) = (S,S'_{1+}\cap S''_{1+},S'_{1-}\cup S''_{1-},S_*,S'_{3+} \cap S''_{3+},S'_{3-}\cup S''_{3-})$, and
  \item $\dptable_{t'}(S' ,S'_{1+},S'_{1-},S'_*,S'_{3+},S'_{3-}) + \dptable_{t''}(S'' ,S''_{1+},S''_{1-},S''_*,S''_{3+},S''_{3-}) - |S| \leq |\hs|$.
  \end{itemize}
\end{description}

This concludes the proof of correctness of the algorithm. The following theorem summarizes the above discussion.
\begin{theorem}
  If a nice tree decomposition of $G$ of width $w$ is given, {\sc $\{P_4\}$-Deletion} can be solved in time $\Ocal(36^{w}\cdot w \cdot n)$.
\end{theorem}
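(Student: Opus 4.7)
The plan is to build a dynamic programming algorithm over the given nice tree decomposition that exploits the structural characterization of Lemma~\ref{headlong}: a graph excludes $P_4$ as a topological minor if and only if every connected component is either a $C_3$ or a star. At every node $t$ of the tree decomposition, a partial solution on $G_t$ consists of a deletion set $\hs$, together with a family of stars and triangles that partition $V(G_t)\setminus \hs$. The role of each vertex of $X_t$ with respect to this partial solution must be tracked, and the natural number of possibilities is six: deleted ($S$), an isolated leaf or a future leaf of a star whose center has not yet been introduced ($S_{1+}$), a leaf attached to a center already present ($S_{1-}$), a star center ($S_*$), a vertex that will eventually be part of a triangle but whose triangle is not yet completed ($S_{3+}$), and a vertex already lying in a completed triangle ($S_{3-}$). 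The DP table $\dptable_t$ maps each such six-tuple partition of $X_t$ to the minimum size of a witness realizing these roles.

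First I would record the definition of $\Ical_t$, $\dptable_t$, and the notion of a witness $(\hs,\hs_*,\hs_{3-})$, stressing the boundary conditions that (i) no edge of $G_t\setminus\hs$ leaves a triangle component, (ii) no edge of $G_t\setminus\hs$ joins two star centers, and (iii) $S_{3+}$ has no edge to $V(G_t)\setminus(\hs\cup S_{3+})$ (since the ``missing'' triangle edges must come from a sibling subtree). These invariants guarantee that upon projecting away a forgotten vertex the role information still reflects a valid partial solution. Then ${\bf tm}_{\{P_4\}}(G)=\dptable_r(\emptyset,\emptyset,\emptyset,\emptyset,\emptyset,\emptyset)$ follows immediately from the invariants at the root.

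Next I would justify the recurrences case by case. For leaves the value is $0$ by definition. For an introduce vertex node with new vertex $v$, the value of $\dptable_t$ is obtained by branching on which of the six labels $v$ receives, checking local edge conditions against $X_{t'}$ and updating labels of previously-introduced vertices (e.g., becoming a leaf once its center appears, or completing a triangle once its two partners appear); the correctness argument is the routine ``each branch gives an upper bound on $\dptable_t$; any witness falls into exactly one branch'' verification, which the algorithm description in the excerpt already sketches. Forget vertex nodes simply remove the forgotten vertex from whichever label it carried, forbidding the case where it was a partial center still missing leaves outside $X_t$, which is ruled out by the invariants. For join nodes one combines witnesses from the two subtrees by intersecting their partitions on $X_t$: vertices in $S_{1+}$ on both sides remain $S_{1+}$, those that moved to $S_{1-}$ on at least one side become $S_{1-}$, and similarly for $S_{3+}/S_{3-}$; the incompatibility conditions $(S'_{1+}\cup S'_{1-})\cap(S''_{3+}\cup S''_{3-})=\emptyset$ and the triangle-closure condition are the key ingredients that prevent mixing roles.

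Finally, for the running time: since $(S,S_{1+},S_{1-},S_*,S_{3+},S_{3-})$ is a partition of $X_t$ into six labelled classes, $|\Ical_t|\leq 6^{|X_t|}\leq 6^{w+1}$. Leaves cost $\Ocal(1)$; introduce and forget nodes cost $\Ocal(6^w\cdot w)$ since each entry of $\dptable_t$ is computed from a bounded number of entries in $\dptable_{t'}$ with $w$-time local checks; join nodes dominate, costing $\Ocal(6^w\cdot 6^w\cdot w)=\Ocal(36^w\cdot w)$ since for each entry of $\dptable_t$ the algorithm ranges over pairs of compatible entries from the two children. Summing over the $\Ocal(n)$ nodes of the nice tree decomposition yields the claimed $\Ocal(36^w\cdot w\cdot n)$ bound. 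The main obstacle, which is really the only nontrivial point, is getting the invariants on $S_{3+}$ and the cross-subtree triangle-closure condition in the join exactly right; once these are phrased so that triangles are ``assembled'' in exactly one of the two subtrees where all three vertices appear together, the induction goes through cleanly.
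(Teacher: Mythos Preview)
Your proposal is correct and follows essentially the same approach as the paper: the same six-label partition $(S,S_{1+},S_{1-},S_*,S_{3+},S_{3-})$ of $X_t$, the same witness triple $(\hs,\hs_*,\hs_{3-})$ with the same invariants, and the same $6^{|X_t|}$-per-node, $36^w$-at-joins running-time analysis. One minor slip: at a forget node the paper forbids forgetting a vertex labeled $S_{1+}$ or $S_{3+}$ (a leaf still awaiting its center, or a vertex in an incomplete triangle), whereas star centers $S_*$ \emph{are} allowed to be forgotten; your phrasing ``a partial center still missing leaves'' inverts this, but it does not affect the overall plan.
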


\section{Single-exponential algorithms for $\{K_{1,s}\}$-{\sc TM-Deletion}}
\label{repaired}

Similarly to what we did before, we start with a (trivial) structural characterization of the graphs that exclude $K_{1,s}$, for some fixed integer $s$, as a topological minor.


\begin{lemma}
  \label{observer}
  Let $s$ be a positive integer. A graph $G$ contains $K_{1,s}$ as a topological minor if and only if it contains a vertex of degree at least $s$.
\end{lemma}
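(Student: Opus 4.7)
The plan is to prove both directions directly from the definition of a topological minor model. Let me sketch the argument in both directions.

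For the forward direction, I would suppose $K_{1,s} \preceq_{\sf tm} G$ and let $(\phi, \sigma)$ be a topological minor model of $K_{1,s}$ in $G$. Denote by $c$ the center of $K_{1,s}$ and by $\ell_1,\dots,\ell_s$ its leaves. For each $i \in \intv{1,s}$, the image $\sigma(\{c,\ell_i\})$ is a $(\phi(c),\phi(\ell_i))$-path in $G$; let $a_i$ be the neighbor of $\phi(c)$ on this path (possibly $a_i = \phi(\ell_i)$ when the path has length one). The key observation is that the $a_i$'s are pairwise distinct: if $a_i = a_j$ for some $i \neq j$, then either $a_i$ is internal to both paths $\sigma(\{c,\ell_i\})$ and $\sigma(\{c,\ell_j\})$, contradicting the second condition in the definition of a topological minor model, or $a_i = \phi(\ell_i) = \phi(\ell_j)$, contradicting the injectivity of $\phi$ on branch vertices. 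Hence $\phi(c)$ has at least $s$ distinct neighbors in $G$, so $\degs{G}{\phi(c)} \geq s$.

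For the reverse direction, assume $G$ contains a vertex $v$ with $\degs{G}{v} \geq s$, and choose $s$ distinct neighbors $u_1,\dots,u_s$ of $v$. Write $V(K_{1,s}) = \{c,\ell_1,\dots,\ell_s\}$ and $E(K_{1,s}) = \{\{c,\ell_i\} \mid i \in \intv{1,s}\}$. Define $\phi: V(K_{1,s}) \to V(G)$ by $\phi(c) = v$ and $\phi(\ell_i) = u_i$, and define $\sigma: E(K_{1,s}) \to \Pcal(G)$ by letting $\sigma(\{c,\ell_i\})$ be the one-edge path on vertices $\{v,u_i\}$. Each such path is a $(\phi(c),\phi(\ell_i))$-path, and since all these paths have no internal vertices, the disjointness condition holds vacuously. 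Thus $(\phi,\sigma)$ is a topological minor model of $K_{1,s}$ in $G$.

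There is no real obstacle here; the statement is essentially immediate once one unpacks the definition of topological minor model. The only subtle point is in the forward direction, where one must argue that the first internal neighbors of $\phi(c)$ along the $s$ paths are pairwise distinct, which uses precisely the second clause of the topological-minor-model definition together with the injectivity of $\phi$.
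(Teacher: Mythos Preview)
Your proof is correct and takes essentially the same approach as the paper, which likewise argues that the $s$ paths emanating from $\phi(c)$ pairwise intersect only in $\phi(c)$, forcing $\phi(c)$ to have $s$ distinct neighbors. One minor imprecision: your case split in the forward direction omits the mixed case where exactly one of the two paths has length one (so $a_i=\phi(\ell_i)$ while $a_j$ is internal to $\sigma(\{c,\ell_j\})$, or vice versa), but that case is dispatched by the very same second-condition argument, so the gap is purely cosmetic.
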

\begin{proof}
  Let $s$ be a fixed integer.
  If a graph $G$ contains a vertex $v$ of degree at least $s$, then $G$ contains $K_{1,s}$ as a subgraph and so, as a topological minor.
  If $G$ contains $K_{1,s}$ as a topological minor, then it implies that there exist in $G$ a vertex $v$ and $s$ paths of size at least two such that the intersection of any two of these paths contains precisely $v$.
  Thus $v$ has degree at least $s$.
\end{proof}

Given a fixed integer $s \geq 1$, the \textsc{$s$-Bounded-degree Vertex Deletion} problem asks, given a graph $G$ and an integer $k$, whether one can remove at most $k$ vertices from $G$ such that the remaining graph has maximum degree at most $s$.
Lemma~\ref{observer} implies that for every positive integer $s$, \textsc{$\{K_{1,s}\}$-TM-Deletion} is exactly \textsc{$(s-1)$-Bounded-degree Vertex Deletion}.
For completeness, we provide a simple single-exponential algorithm parameterized by treewidth that solves \textsc{$s$-Bounded-degree Vertex Deletion} for any fixed integer $s \geq 1$.

Let $s \geq 1$ be a fixed integer, let $G$ be an instance of \textsc{$s$-Bounded-degree Vertex Deletion}, and
let $((T,\Xcal), r, \Gcal)$ be a nice tree decomposition of $G$.
We define, for each $t \in V(T)$, the set $\Ical_t = \{(S,f) \mid S \subseteq X_t, f: X_t \setminus S \to \intv{0,s-1}\}$ and a function
$\dptable_t: \Ical_t \to \Nbb$ such that for each $(S,f) \in \Ical_t$, $\dptable(S,f)$ is the minimum $\ell$ such that
there exists a set $\hs \subseteq V(G_t)$, called the \emph{witness} of $(S,f)$, that satisfies:
\begin{itemize}
\item $|\hs| \leq \ell$,
\item $\hs \cap X_t = S$, and
\item for each $v \in X_t \setminus S$, $\degs{G_t \sm \hs}{v} = f(v)$.
\end{itemize}
Note that with this definition,
${\bf tm}_{\cal F}(G) = \dptable_r(\es,\varnothing)$.
For each $t \in V(T)$, we assume that we have already computed $\dptable_{t'}$ for each children $t'$ of $t$, and we proceed to the computation of $\dptable_t$.
We distinguish
several cases depending on the type of node $t$.

\begin{description}
\item[Leaf.] $\Ical_t = \{(\es,\ef)\}$ and $\dptable_t(\es,\ef) = 0$.
\item[Introduce vertex.] If $v$ is the insertion vertex of $X_t$ and $t'$ is the child of $t$, then for each $(S,f) \in \Ical_t$,
  \begin{eqnarray*}
    \dptable_t(S,f) &=& \min\big(~~\{\dptable_{t'}(S',f)+1 \mid (S',f) \in \Ical_{t'},\ S = S' \cup \{v\}\}\\
                    &&~~~~~~\cup\ \{\dptable_{t'}(S,f') \mid (S,f') \in \Ical_{t'},
                       f(v) = \degs{G[X_t\sm S]}{v},\\
                    &&~~~~~~~~~~~~~~~~~~~~~~~~~\forall v' \in N_{G_t[X_{t}\sm S]}(v),\ f(v') = f'(v')+1,\\
                    &&~~~~~~~~~~~~~~~~~~~~~~~~~\forall v' \in X_{t'}\sm (S \cup N_{G_t[X_{t}\sm S]}(v)),\ f(v') = f'(v')
                       \}~\big).
  \end{eqnarray*}
\item[Forget vertex.] If  $v$ is the forget vertex of $X_t$ and $t'$ is the child of $t$, then for each $(S,f) \in \Ical_t$,
  \begin{eqnarray*}
    \dptable_t(S,f) &=& \min\{\dptable_{t'}(S', f')\mid (S',f')\in \Ical_{t'},\ S = S' \sm \{v\},\ \forall v' \in X_t\sm S,\ f(v') = f'(v')\}.
  \end{eqnarray*}

\item[Join.]  If $t'$ and $t''$ are the children of $t$, then for each $(S,f) \in \Ical_t$,
  \begin{eqnarray*}
    \dptable(S,f) &=& \min\{\dptable(S,f') + \dptable(S,f'') - |S| \\
                  &&~~~~~~~~\mid
                     (S,f') \in \Ical_{t'}, (S,f'') \in \Ical_{t''},\\
                  &&~~~~~~~~~~
                     \forall v \in X_t \sm S,\ f(v) = f'(v) + f''(v) - \degs{G_t[X_t\sm S]}{v}
                     \}.
  \end{eqnarray*}
\end{description}
One can check that for each $t \in V(T)$, the set $\Ical_t$ is of size at most $(s+1)^{|X_t|}$: for each vertex in $X_t \setminus S$ there are $s$ possible values for its degree, together with the choice of belonging to $S$ or not for each vertex in $X_t$. Using the same argumentation as in the previous algorithms, we obtain the following theorem.

\begin{theorem}\label{confined}
  Let $s \geq 1$ be a fixed  integer.
  If a nice tree decomposition of $G$ of width $w$ is given, {\sc $\{K_{1,s}\}$-TM-Deletion} can be solved in time $\Ocal((s+1)^{2w}\cdot w \cdot  n)$.
\end{theorem}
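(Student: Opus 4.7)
The plan is to apply dynamic programming on the given nice tree decomposition $((T,\Xcal),r,\Gcal)$, exactly along the template established for the preceding algorithms in the paper. By Lemma~\ref{observer}, \textsc{$\{K_{1,s}\}$-TM-Deletion} is identical to $(s-1)$-\textsc{Bounded-degree Vertex Deletion}, so the invariant I want to maintain at a bag $X_t$ is: (i)~which vertices $S\subseteq X_t$ of the partial solution lie in the bag, and (ii)~for every surviving bag vertex $v\in X_t\setminus S$, the current value of $\degs{G_t\setminus\hat S}{v}\in\intv{0,s-1}$. This is exactly the information stored by $\Ical_t$ and $\dptable_t$ in the excerpt, and at the root $r$ we have $X_r=\emptyset$ so ${\bf tm}_{\{K_{1,s}\}}(G)=\dptable_r(\emptyset,\varnothing)$.

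For correctness, I would verify each recurrence in both directions. The leaf case is immediate. For an introduce node, the inserted vertex $v$ has no neighbours in $V(G_{t'})$ (by the tree-decomposition property), so either $v$ enters the solution (paying $+1$) or $\degs{G_t\setminus\hat S}{v}=\degs{G_t[X_t\setminus S]}{v}$ and the $f'$-values of the neighbours of $v$ in $X_{t'}\setminus S$ must be strictly one less than the corresponding $f$-values, as encoded in the recurrence. The forget step is trivial once one notices that the constraint $f'(v)\in\intv{0,s-1}$ is automatically guaranteed by membership in $\Ical_{t'}$. The only genuinely delicate point is the join identity
\[
\degs{G_t\setminus\hat S}{v} \;=\; \degs{G_{t'}\setminus\hat S'}{v}+\degs{G_{t''}\setminus\hat S''}{v}-\degs{G_t[X_t\setminus S]}{v}
\]
for every $v\in X_t\setminus S$, which follows because edges in $G_t[X_t\setminus S]$ are counted once in each of $G_{t'}$ and $G_{t''}$ (recall $E(G_{t'})\cap E(G_{t''})=E(G[X_t])$ in the nice tree decomposition). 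I expect this inclusion--exclusion on bag-edges to be the main conceptual obstacle, but it is routine once written out.

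For the running time, each element of $\Ical_t$ is obtained by assigning to every vertex of $X_t$ either the label ``$\in S$'' or one of $s$ possible degree values in $\intv{0,s-1}$, hence $|\Ical_t|\le (s+1)^{|X_t|}$. Leaf, introduce and forget nodes are then processed in time $\Ocal((s+1)^{w}\cdot w)$. For a join node, $S$ must be identical on both sides and $f',f''$ range independently over $\intv{0,s-1}^{X_t\setminus S}$, so the number of triples $(S,f',f'')$ to enumerate is
\[
\sum_{S\subseteq X_t} s^{2(|X_t|-|S|)} \;=\; (1+s^2)^{|X_t|} \;\le\; (s+1)^{2|X_t|} \;=\; \Ocal\bigl((s+1)^{2w}\bigr),
\]
and computing $f$ from $(f',f'')$ takes $\Ocal(w)$ per triple. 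Summing the join-node cost of $\Ocal((s+1)^{2w}\cdot w)$ over the $\Ocal(n)$ nodes of the decomposition, and noting that join nodes dominate all other types, yields the claimed $\Ocal((s+1)^{2w}\cdot w\cdot n)$ bound.
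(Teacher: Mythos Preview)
Your proposal is correct and follows exactly the paper's approach: the same table $\Ical_t$, the same recurrences, and the same running-time analysis via $|\Ical_t|\le (s+1)^{|X_t|}$ with the join dominating at $\Ocal((s+1)^{2w}\cdot w)$ per node. One small slip: in the introduce case you write that $v$ ``has no neighbours in $V(G_{t'})$'', but of course $v$ may have neighbours in $X_{t'}$; what you need (and clearly use) is only $N_{G_t}(v)\subseteq X_t$.
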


\section{A single-exponential algorithm for {\sc $\{C_4\}$-TM-Deletion}}
\label{benjamin}

As discussed before, in this section we use the dynamic programming techniques introduced by Bodlaender et al.~\cite{BodlaenderCKN15} to obtain a single-exponential algorithm for {\sc $\{C_4\}$-TM-Deletion}. It is worth mentioning that the \textsc{$\{C_i\}$-TM-Deletion} problem has been studied in digraphs from a non-parameterized point of view~\cite{PaikRS94}.
The algorithm we present solves the decision version of {\sc $\{C_4\}$-TM-Deletion}: the input is a pair $(G,k)$, where $G$ is a graph and $k$ is an integer, and the output is the boolean value ${\bf tm}_\Fcal(G) \leq k$.

We give some definitions that will be used for the following algorithm.
Given a graph $G$, we denote by
$n(G) = |V(G)|$,
$m(G) = |E(G)|$,
$\ct(G)$ the number of $C_3$'s that are subgraphs of $G$, and
$\cc(G)$ the number of connected components of $G$.
We say that $G$ satisfies the \emph{$C_4$-condition} if the following conditions hold:
\begin{itemize}
\item $G$ does not contain the \ourdiamond
  as a subgraph, and
\item $n(G) - m(G) + \ct(G) = \cc(G)$.
\end{itemize}

As in the case of $P_3$ and $P_4$, we state in Lemma~\ref{lifeboat} a structural characterization of the graphs that exclude $C_4$ as a (topological) minor. We first need an auxiliary lemma.

\begin{lemma}
  \label{brawling}
  Let $n_0$ be a positive integer.
  Assume that for each graph $G'$ such that $1 \leq n(G') \leq n_0$,
  $C_4 \not \pretp G'$ if and only if $G'$ satisfies the $C_4$-condition.
  If $G$ is a graph that does not contain a \ourdiamond as a subgraph and such that $n(G) = n_0$, then
  $n(G) - m(G) + \ct(G) \leq \cc(G)$.
\end{lemma}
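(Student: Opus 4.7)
The plan is to proceed by induction on $m(G)$. For the base case $m(G) = 0$, the graph $G$ has no edges, so $\ct(G) = 0$, every vertex forms an isolated connected component, and $\cc(G) = n(G)$; hence $n(G) - m(G) + \ct(G) = n(G) = \cc(G)$, and the inequality holds (with equality).

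For the inductive step, I would pick an arbitrary edge $e = \{u,v\}$ of $G$ and consider $G' = G \setminus \{e\}$. Since $G'$ is a subgraph of the diamond-free graph $G$, it is itself diamond-free, and the inductive hypothesis applied to $G'$ gives $n(G') - m(G') + \ct(G') \leq \cc(G')$. Writing $t$ for the number of triangles of $G$ containing $e$, and $d \in \{0,1\}$ for the indicator that $e$ is a bridge in $G$, the four quantities transform as $n(G') = n(G)$, $m(G') = m(G) - 1$, $\ct(G') = \ct(G) - t$, and $\cc(G') = \cc(G) + d$. Since $G$ is diamond-free, no two distinct triangles can share an edge, whence $t \leq 1$. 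Substituting these relations into the inductive hypothesis yields
\[
n(G) - m(G) + \ct(G) \;\leq\; \cc(G) + (d + t - 1),
\]
so the proof reduces to showing that $d + t \leq 1$, i.e., that $e$ cannot simultaneously be a bridge of $G$ and lie in a triangle of $G$.

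The only delicate point --- and essentially the whole content of the argument --- is this final observation, but it is immediate: if $e = \{u,v\}$ lies in a triangle with third vertex $w$, then $u$ and $v$ remain connected in $G \setminus \{e\}$ via the length-two path through $w$, so $e$ is not a bridge. Hence $d = t = 1$ cannot occur, we conclude $d + t \leq 1$, and the desired inequality follows. The biconditional hypothesis of the lemma is not actually needed for this particular induction; it is presumably stated here for convenient use when the lemma is invoked inside the inductive proof of the main structural characterization of $C_4$-(topological-minor-)free graphs.
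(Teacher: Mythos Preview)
Your proof is correct, and it takes a genuinely different route from the paper. The paper removes in one shot a set $S\subseteq E(G)$ with $C_4\not\preceq_{\sf tm} G\setminus S$ and $\cc(G\setminus S)=\cc(G)$ (for instance a minimal feedback edge set), invokes the biconditional hypothesis to obtain the equality $n-m+\ct=\cc$ for $G\setminus S$, and then uses diamond-freeness to bound $\ct(G)-\ct(G\setminus S)\leq |S|$. Your argument is the edge-by-edge refinement of this: you peel off one edge at a time, track the exact change in all four quantities, and the diamond-free assumption together with the trivial observation ``an edge in a triangle is not a bridge'' handles the bookkeeping. Your version is more elementary and, as you correctly note, makes no use of the biconditional hypothesis---so you have in fact proved the cleaner standalone statement that every diamond-free graph satisfies $n-m+\ct\leq\cc$. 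The paper's formulation packages the lemma with the hypothesis simply because that is the form in which it is consumed inside the induction of the next lemma.
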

\begin{proof}
  Let $n_0$ be a positive integer, and
  assume that for each graph $G'$ such that $1 \leq n(G') \leq n_0$,
  $C_4 \not \pretp G'$ if and only if $G$ satisfies the $C_4$-condition.
  Let $G$ be a graph that does not contain a \ourdiamond as a subgraph and such that $n(G) = n_0$.
  Let $S \subseteq E(G)$ such that $C_4 \not \pretp G \gm S$ and  $\cc(G\gm S) = \cc(G)$ (note that any minimal feedback edge set satisfies these conditions).
  We have, by hypothesis, that $G \gm S$ satisfies the $C_4$-condition, so $n(G\gm S) - m(G\gm S) + \ct(G\gm S) = \cc(G\gm S)$.
  Moreover, as $G$ does not contain a \ourdiamond as a subgraph, each edge of $G$ participates in at most one $C_3$, and thus $\ct(G) - \ct(G \gm S) \leq |S|$.
  As by definition $n(G) = n(G \gm S)$ and $m(G) - m(G \gm S) = |S|$, we obtain that $n(G) - m(G) + \ct(G) \leq \cc(G\gm S) =  \cc(G)$.
\end{proof}

\begin{lemma}
  \label{lifeboat}
  Let $G$ be a non-empty graph.
  $C_4 \not \pretp G$ if and only if $G$ satisfies the $C_4$-condition.
\end{lemma}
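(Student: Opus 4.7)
For the forward direction, I first observe that the diamond itself contains $C_4$ as a subgraph (the four vertices form a $4$-cycle once the chord is dropped), so $C_4 \not\pretp G$ immediately implies that $G$ contains no diamond. To obtain the equality $n(G) - m(G) + \ct(G) = \cc(G)$, I proceed by induction on $m(G)$. The base case $m=0$ is trivial. For the step, pick any edge $e$: if $e$ lies in a triangle, diamond-freeness makes this triangle unique, and deleting $e$ preserves $\cc$ while decreasing both $m$ and $\ct$ by one; if $e$ lies in no triangle then $e$ cannot lie in any cycle at all (any cycle through $e$ would have length at least $4$ and thus give $C_4 \pretp G$), so $e$ is a bridge whose removal increases $\cc$ by one and decreases $m$ by one while preserving $\ct$. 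In either case the inductive identity for $G - e$ transfers to $G$.

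For the backward direction I again induct on $m(G)$, with trivial base $m = 0$. Suppose for contradiction that $G$ satisfies the $C_4$-condition and $C_4 \pretp G$; fix a cycle $C$ of length at least $4$ in $G$ and an edge $e = uv \in E(C)$. If $e$ lies in no triangle, set $G' := G - e$: it is diamond-free, $e$ is not a bridge (since it lies on $C$), so $\cc(G') = \cc(G)$, $\ct(G') = \ct(G)$, and $m(G') = m(G) - 1$; a direct computation then yields $n(G') - m(G') + \ct(G') = \cc(G') + 1$, contradicting Lemma~\ref{brawling} applied to the diamond-free $G'$. So $e$ must lie in a (necessarily unique) triangle $\{u,v,w\}$, and then $G' := G - e$ still satisfies the $C_4$-condition while $m(G') < m(G)$. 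By the induction hypothesis, $C_4 \not\pretp G'$, and since any $2$-connected graph on at least four vertices contains a cycle of length at least four (hence $C_4$ as a topological minor), every block of $G'$ is either $K_2$ or $K_3$.

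The contradiction now comes from comparing the two $(u,v)$-paths available in $G'$: the length-$2$ path $u{-}w{-}v$ and the path $P$ of length at least $3$ obtained by traversing $C$ without $e$. If $w$ is not an internal vertex of $P$, these two paths are internally vertex-disjoint, placing $u$ and $v$ in a common $2$-connected subgraph of $G'$ and hence in a common block; but any $K_2$ or $K_3$ block containing both $u$ and $v$ already has $uv$ as one of its edges, contradicting $e \notin E(G')$. If $w$ is internal to $P$, split $P$ at $w$ into subpaths from $u$ to $w$ and from $w$ to $v$, whose lengths sum to at least $3$; the longer, say $P_1$ from $u$ to $w$, has length at least $2$, and together with the edge $uw$ supplies two internally vertex-disjoint $(u,w)$-paths in $G'$, forcing $u$ and $w$ into a common $K_3$ block $\{u,w,z\}$. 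Then in $G$ the two triangles $\{u,v,w\}$ and $\{u,w,z\}$ share the edge $uw$ and form a diamond on $\{u,v,w,z\}$, contradicting the $C_4$-condition. The delicate point throughout is this second case: one must leverage both the extreme rigidity of the block structure of $G'$ (every block is tiny) and the interaction between the newly added edge $e$ and the triangles preserved in $G'$ in order to produce either a forbidden edge inside a block or a forbidden diamond.
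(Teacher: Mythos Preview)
Your argument is correct, but it follows a genuinely different route from the paper's.

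For the forward direction, the paper removes one edge from each triangle simultaneously, obtaining a forest in one shot and reading off the identity from the forest formula; you reach the same identity by an edge-by-edge induction. Both are clean.

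The backward direction is where the two proofs diverge. The paper inducts on $n(G)$: it shows that the minimum degree is at most~$2$, deletes a minimum-degree vertex, and verifies that the $C_4$-condition passes to $G \setminus \{v\}$ in each of the three degree cases (invoking Lemma~\ref{brawling} only in the degree-$2$, non-triangle case). You instead induct on $m(G)$: you delete an edge $e$ on a long cycle, and in the interesting case use that $G' = G - e$ is $C_4$-free to conclude that every block of $G'$ is $K_2$ or $K_3$, then derive a contradiction by locating either a block edge that should not exist or a \ourdiamond in $G$. Your approach is more structural and makes the role of the block decomposition explicit; the paper's is shorter and purely arithmetic once the right vertex is deleted.

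One imprecision worth fixing: when you invoke Lemma~\ref{brawling} in your Case~1, note that as stated in the paper that lemma carries the hypothesis ``the equivalence holds for all graphs on at most $n_0$ vertices'', which your induction on $m$ does not supply. However, its proof uses only the \emph{forward} implication of the equivalence (applied to $G \setminus S$), and you have already established that direction unconditionally. So the inequality $n - m + \ct \leq \cc$ for \ourdiamond-free graphs is indeed available to you; simply say so rather than citing Lemma~\ref{brawling} as a black box.
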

\begin{proof}
  Let $G$ be a non-empty  graph, and assume first that $C_4 \not \pretp G$.
  This directly implies that $G$ does not contain the \ourdiamond as a subgraph.
  In particular, any two cycles of $G$, which are necessarily $C_3$'s, cannot share an edge.
  Let $S$ be a set containing an arbitrary edge of each $C_3$  in $G$.
  By construction, $G \gm S$ is a forest.
  As in a forest $F$, we have $n(F)-m(F)=\cc(F)$, and $S$ is defined such that $|S| = \ct(G)$ because each edge of $G$ participates in at most one $C_3$, we obtain that $n(G)-m(G)+\ct(G)=\cc(G)$.
  Thus, $G$ satisfies the $C_4$-condition.

  Conversely, assume now that $G$ satisfies the $C_4$-condition.
  We prove that $C_4 \not \pretp G$ by induction on $n(G)$.
  If $n(G) \leq 3$, then $n(G) < n(C_4)$ and so $C_4 \not \pretp G$.
  Assume now that $n(G) \geq 4$, and that for each graph  $G'$ such that $1 \leq n(G') < n(G)$,
  if $G'$  satisfies the $C_4$-condition, then $C_4 \not \pretp G'$.
  We prove that this last implication is also true for  $G$.
  Note that, as two $C_3$ cannot share an edge in $G$, we have that $\ct(G) \leq \frac{m(G)}{3}$.  This implies that the minimum degree of $G$ is at most $2$. Indeed, if each vertex of $G$ had degree at least $3$, then $m(G) \geq \frac{3}{2} n(G)$, which together with the relations  $\ct(G) \leq \frac{m(G)}{3}$ and $n(G)-m(G)+\ct(G) = \cc(G)$ would imply that $\cc(G) \leq 0$, a contradiction. Let $v \in V(G)$ be a vertex with minimum degree. We distinguish two cases according to the degree of $v$.

  If $v$ has degree $0$ or $1$, then the graph $G\gm \{v\}$ satisfies the $C_4$-condition as well, implying that $C_4 \not \pretp G\gm \{v\}$.
  As $v$ has degree at most one, it cannot be inside a cycle, hence $C_4 \not \pretp G$.

  Assume that $v$ has degree two and participates in a $C_3$.
  As $G$ does not contain a \ourdiamond as a subgraph, $C_4 \pretp G$ if and only if $C_4 \pretp G \gm \{v\}$.
  Moreover $n(G\gm \{v\}) = n(G) -1$, $m(G\gm \{v\}) = m(G) -2$, $\ct(G\gm \{v\}) = \ct(G) -1$, and $\cc(G \gm \{v\}) = \cc(G)$.
  This implies that $G \gm \{v\}$ satisfies the $C_4$-condition, hence $C_4 \not \pretp G \gm \{v\}$, and therefore $C_4 \not \pretp G$.

  Finally, assume that $v$ has degree two and does not belong to any $C_3$.
  Using the induction hypothesis and Lemma~\ref{brawling}, we have that
  $n(G\gm \{v\}) - m(G\gm \{v\}) + \ct(G\gm \{v\}) \leq \cc(G\gm \{v\})$.
  As $n(G\gm \{v\}) = n(G) -1$,  $m(G\gm \{v\}) = m(G) -2$, $\ct(G\gm \{v\}) = \ct(G)$, $v$ has degree two in $G$, and $G$ satisfies the $C_4$-condition, we obtain that $\cc(G\gm \{v\}) = \cc(G)-1$.
  This implies that $G \gm \{v\}$ satisfies the $C_4$-condition, and thus $C_4 \not \pretp G \gm \{v\}$.
  Since  $v$ disconnects one of the connected components of $G$ it cannot participate in a cycle of $G$, hence $C_4 \not \pretp G$.
\end{proof}

\begin{lemma}
  \label{fatherly}
  If $G$ is a non-empty graph such that $C_4 \not \pretp G$, then
  $ m(G) \leq \frac{3}{2} (n(G) -1)$.
\end{lemma}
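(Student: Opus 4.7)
The plan is to combine Lemma~\ref{lifeboat} with the observation that triangles in a $C_4$-topological-minor-free graph must be edge-disjoint. Since $G$ is non-empty and satisfies $C_4\not\pretp G$, Lemma~\ref{lifeboat} gives us the $C_4$-condition, namely
\[
n(G) - m(G) + \ct(G) = \cc(G),
\]
which I would rewrite as $m(G) = n(G) - \cc(G) + \ct(G)$.

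Next I would bound $\ct(G)$ in terms of $m(G)$. Because $G$ contains no \ourdiamond\ as a subgraph, any two distinct triangles of $G$ are edge-disjoint: a shared edge together with the two opposite vertices would form a \ourdiamond. Hence the $3\ct(G)$ edges of the triangles are all distinct, giving
\[
\ct(G) \leq \frac{m(G)}{3}.
\]

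Plugging this into the identity yields $m(G) \leq n(G) - \cc(G) + m(G)/3$, so $\tfrac{2}{3} m(G) \leq n(G) - \cc(G)$. Finally, since $G$ is non-empty we have $\cc(G)\geq 1$, and therefore
\[
m(G) \leq \tfrac{3}{2}\bigl(n(G) - \cc(G)\bigr) \leq \tfrac{3}{2}\bigl(n(G) - 1\bigr),
\]
as required. There is no serious obstacle here; the only step that needs a brief justification is the edge-disjointness of triangles, which follows immediately from the absence of a \ourdiamond\ subgraph guaranteed by the $C_4$-condition.
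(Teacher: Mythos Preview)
Your proof is correct and essentially identical to the paper's own argument: both invoke Lemma~\ref{lifeboat} to obtain the $C_4$-condition, use edge-disjointness of triangles (from the absence of a \ourdiamond) to get $\ct(G)\leq m(G)/3$, and combine this with $\cc(G)\geq 1$ in the identity $n(G)-m(G)+\ct(G)=\cc(G)$. You merely spell out the edge-disjointness step a bit more explicitly than the paper does.
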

\begin{proof}
  As $C_4 \not \pretp G$,  by Lemma~\ref{lifeboat} $G$ satisfies the $C_4$-condition.
  It follows that $\ct(G) \leq \frac{1}{3}m(G)$.
  Moreover, as $G$ is non-empty, we have that $1 \leq \cc(G)$.
  The lemma follows by using these inequalities in the equality $n(G) - m(G) + \ct(G) = \cc(G)$.
\end{proof}

We now have all the tools needed to describe our algorithm. Recall that the basic ingredients of the rank-based approach of Bodlaender et al.\cite{BodlaenderCKN15} were given in Section~\ref{banality}.
Let $G$ be a graph and $k$ be an integer.
The algorithm we describe solves the decision version of {\sc $\{C_4\}$-TM-Deletion}.
This algorithm is based on the one given in~\cite[Section 3.5]{BodlaenderCKN15}
for {\sc Feedback Vertex Set}.

We define a new graph $G_0 = (V(G) \cup \{v_0\},E(G)\cup E_0)$, where $v_0$ is a new vertex 
and $E_0 = \{\{v_0,v\} \mid v \in V(G)\}$. The role of $v_0$ is to artificially guarantee the connectivity of the solution graph, so that  the machinery of  Bodlaender et al.~\cite{BodlaenderCKN15} can be applied.
In the following, for each subgraph $H$ of $G_0$, for each $Z \subseteq V(H)$, and for each $Z_0 \subseteq E_0\cap E(H[Z])$, we denote by $H\angle{Z,Z_0}$ the graph $\big(Z,Z_0 \cup E\big(H[Z \sm \{v_0\}] 
\big)\big)$.

Given a nice tree decomposition of $G$ of width $w$, we define a nice tree decomposition  $((T,\Xcal), r, \Gcal)$ of $G_0$ of width $w+1$
such that the only empty bags are the root and the leaves and for each $t \in T$, if $X_t \not = \es$ then $v_0 \in X_t$.
Note that this can be done in linear time.
For each bag $t$,  each integers $i$, $j$, and $\ell$,
each function $\sbf: X_t \rightarrow \{0,1\}$,
each function $\sbf_0: \{v_0\} \times \sbf^{-1}(1) \rightarrow  \{0,1\}$,
each function $\rbf: E(G_t\angle{\sbf^{-1}(1),\sbf_0^{-1}(1)}) \to \{0,1\}$,
and for each partition $p \in \Pi(\sbf^{-1}(1))$, if  $C_4 \not \pretp G_t\angle{\sbf^{-1}(1),\sbf_0^{-1}(1)}$,
we define:

\begin{eqnarray*}
  \Ecal_t(p,\sbf,\sbf_0, \rbf,i,j,\ell) & = &
                                              \{(Z,Z_0) \mid (Z,Z_0) \in 2^{V_t} \times 2^{E_0 \cap E(G_t)}\\
                                        &&~~~ |Z|  = i,\ |E(G_t\angle{Z,Z_0})| = j,\  \ct(G_t\angle{Z,Z_0}) = \ell,\\
                                        &&~~~ G_t\angle{Z,Z_0} \mbox{ does not contain the \ourdiamond as a subgraph,}\\
                                        &&~~~ Z \cap X_t = \sbf^{-1}(1),\ Z_0 \cap (X_t \times X_t) = \sbf_0^{-1}(1),\ \\
                                        &&~~~~~~~~~~~~~~~~~~~~~~~~~~~~~~~~~~~~~~~~~~~~~~~~~~~~ v_0 \in X_t \Rightarrow \sbf(v_0) = 1, \\
                                        &&~~~ \forall u \in Z\sm X_t : \mbox{ either $t$ is the root or } \\
                                        &&~~~~~~~~~~~~~\exists u' \in \sbf^{-1}(1): \mbox{$u$ and $u'$ are connected in $G_t\angle{Z,Z_0}$,}\\
                                        &&~~~ \forall v_1,v_2 \in \sbf^{-1}(1):p \sqsubseteq V_t[\{v_1,v_2\}] \Leftrightarrow  \mbox{$v_1$ and $v_2$ are con-} \\
                                        &&~~~~~~~~~~~~~~~~~~~~~~~~~~~~~~~~~~~~~~~~~~~~~~~~~~~~~\mbox{nected in $G_t\angle{Z,Z_0}$},\\
                                        &&~~~ \forall e \in E(G_t\angle{Z,Z_0}) \cap {\sbf^{-1}(1) \choose 2}: \rbf(e) = 1 \Leftrightarrow \mbox{ $e$ is an edge}\\
                                        &&~~~~~~~~~~~~~~~~~~~~~~~~~~~~~~~~~~~~~~~~~~~~~~~~~~~\mbox{ of   a $C_3$ in $G_t\angle{Z,Z_0}$}\}\\
  \\
  \Acal_t(\sbf,\sbf_0, \rbf,i,j,\ell) & = &
                                            \{p \mid p \in \Pi(\sbf^{-1}(1)),\ \Ecal_t(p,\sbf,\sbf_0, \rbf,i,j,\ell) \not = \es\}.\\
\end{eqnarray*}
Otherwise, i.e., if $C_4 \pretp G_t\angle{\sbf^{-1}(1),\sbf_0^{-1}(1)}$, we define
\begin{eqnarray*}
  \Acal_t(\sbf,\sbf_0, \rbf,i,j,\ell) & = & \es.
\end{eqnarray*}
Note that we do not need to keep track of partial solutions if $C_4 \pretp G_t\angle{\sbf^{-1}(1),\sbf_0^{-1}(1)}$, as we already know they will not lead to a global solution.
Moreover, if $C_4 \not \pretp G_t\angle{\sbf^{-1}(1),\sbf_0^{-1}(1)}$, then by Lemma~\ref{fatherly},  $ m(G_t\angle{\sbf^{-1}(1),\sbf_0^{-1}(1)}) \leq \frac{3}{2} (n(G_t\angle{\sbf^{-1}(1),\sbf_0^{-1}(1)}) -1)$.

Using the definition of $\Acal_r$, Lemma~\ref{lifeboat}, and Lemma~\ref{fatherly}, we have that ${\bf tm}_{\{C_4\}}(G) \leq k$
if and only if for some $i \geq |V(G) \cup \{v_0\}| -k$ and some $j \leq \frac{2}{3}(i-1)$, we have $\Acal_r(\ef,\ef,\ef,i,j,1+j-i) \not = \es$.
For each $t \in V(T)$, we assume that we have already computed $\Acal_{t'}$ for each children $t'$ of $t$, and we proceed to the computation of $\Acal_t$.
As usual, we distinguish
several cases depending on the type of node $t$.

\begin{description}
\item[Leaf.] By definition of $\Acal_t$ we have $\Acal_t(\ef,\ef,\ef,0,0,0) = \{\es\}$.
\item[Introduce vertex.]
  Let $v$ be the insertion vertex of $X_t$, let $t'$ be the child of $t$, let $\sbf$, $\sbf_0$, and $\rbf$ the functions defined as before, let $H = G_t\angle{\sbf^{-1}(1),\sbf_0^{-1}(1)}$, and let $d_3$ be the number of $C_3$'s of $H$ that contain the vertex $v$.
  \begin{itemize}
  \item If $C_4 \pretp H$ or if $v = v_0$ and $\sbf(v_0) = 0$, then by definition of $\Acal_t$ we have that $\Acal_t(\sbf,\sbf_0, \rbf,i,j,\ell) = \es$.
  \item Otherwise, if $\sbf(v) = 0$, then, by definition of $\Acal_t$, it holds that $\Acal_t(\sbf,\sbf_0, \rbf,i,j,\ell) = \Acal_{t'}(\sbf|_{X_{t'}},\sbf_{0}|_{E_{t'}}, \rbf|_{E_{t'}},i,j,\ell)$.
  \item Otherwise, if $v = v_0$, then by construction of the nice tree decomposition,
    we know that $t'$ is a leaf of $T$ and so
    $\sbf = \{(v_0,1)\}$, $\sbf_0 = \rbf = \es$, $j = \ell = i-1 = 0$ and $\Acal_t(\sbf,\sbf_0, \rbf,i,j,\ell) = \ins(\{v_0\}, \Acal_{t'}(\ef,\ef,\ef,0,0,0)$).
  \item
    Otherwise, we know that $v \not = v_0$, $\sbf(v) = 1$, $v_0 \in N_{G[\sbf^{-1}(1)]}(v)$, and $C_4\not \pretp H$.
    As $\sbf(v) = 1$, we have to insert $v$ and we have to make sure that all vertices of $N_H[v]\sm \{v_0\}$ are in the same connected component of $H$.
    The only remaining choice is either we insert the edge $\{v,v_0\}$ or not.
    This is handle by the value of $\sbf_{0}(\{v_0,v\})$.
    So, by adding $v$, we add one vertex, $|N_H(v)|$ edges, and $d_3$  $C_3$'s.
    We also have to take care not to introduce a \ourdiamond.
    For this, the function $\rbf$ should be such that, for every edge $e$ contained in a $C_3$'s of $H$ that contains the vertex $v$, $\rbf(e) = 1$.
    We define $\rbf': E(H[X_{t'}]) \to \{0,1\}$ such that  for every edge $e\in E(H[X_{t'}])$ contained in a $C_3$'s of $H$ that contains the vertex $v$, $\rbf'(e) = 0$, and for each other edge $e$ of $H[X_{t'}]$, $\rbf'(e) = \rbf(e)$.
    Therefore, we have that 
    \begin{eqnarray*}
      \Acal_t(\sbf,\sbf_0, \rbf,i,j,\ell) & = & \\
                                          & &\!\!\!\!\!\!\!\!\!\!\!\!\!\!\!\!\!\!\!\!\!\!\!\!\!\!\!\!\!\!\!\!\!\!\!\!\!\!\!\!\!\!\!\!\!\!\!\!\!\!\!\!\!\!\! \glue(N_{H}[v],\ins(\{v\}, \Acal_{t'}(\sbf|_{X_{t'}},\sbf_{0}|_{E_{t'}}, \rbf',i-1,j - |N_H(v)|,\ell -d_3))).\\
    \end{eqnarray*}
  \end{itemize}
\item[Forget vertex.]
  Let $v$ be the forget vertex of $X_t$, let $t'$ be the child of $t$, and let $\sbf$, $\sbf_0$, and $\rbf$ the functions defined as before.
  For each function, we have a choice on how it can be extended in $t'$, and we potentially need to consider every possible such extension.
  Note the number of vertices, edges, or $C_3$'s is not affected.
  We obtain that
  \begin{eqnarray*}
    \Acal_t(\sbf,\sbf_0, \rbf,i,j,\ell) &=&
                                            A_{t'}(\sbf\cup \{(v,0)\},\sbf_0, \rbf,i,j,\ell)\\
                                        &&\!\!\!\!\!\!\!\!\!\!\!\!
                                           \underset{\rbf': E(G_t\angle{\sbf'^{-1}(1),\sbf_0'^{-1}(1)})\to \{0,1\} ,\ \rbf'|_{X_t} = \rbf}{
                                           \underset{\sbf'_{0}: \{v_0\} \times \sbf'^{-1}(1)\to \{0,1\} ,\ \sbf'_{0}|_{X_t} = \sbf_0}{
                                           \underset{\sbf': X_{t'}\to \{0,1\} ,\ \sbf'|_{X_t} = \sbf,\ \sbf'(v) = 1}{\underset{}
                                           \bigcuparrow
                                           }}}
                                           \proj(\{v\},A_{t'}(\sbf',\sbf'_0, \rbf',i,j,\ell)).
  \end{eqnarray*}
\item[Join.]
  Let $t'$ and $t''$ be the two children of $t$, let $\sbf$, $\sbf_0$, and $\rbf$ be the functions defined as before, let $H = G_t\angle{\sbf^{-1}(1),\sbf_0^{-1}(1)}$, and let $S\subseteq E(H)$ be the set of edges that participate in a $C_3$ of $H$.

  We join every compatible entries
  $A_{t'}(\sbf',\sbf'_0, \rbf',i',j',\ell')$ and $A_{t''}(\sbf'',\sbf''_0, \rbf'',i'',j'',\ell'')$.
  For two such entries being compatible, we need $\sbf' = \sbf'' = \sbf$ and $\sbf_0' = \sbf''_0 = \sbf_0$.
  Moreover, we do not want the solution graph to contain a \ourdiamond as a subgraph, and for this we need $\rbf'^{-1}(1) \cap \rbf''^{-1}(1) = S$.
  Indeed,
  either $H$ contains the \ourdiamond as a subgraph, and then $A_{t'}(\sbf',\sbf'_0, \rbf',i',j',\ell') = A_{t''}(\sbf'',\sbf''_0, \rbf'',i'',j'',\ell'') = \{\es\}$, or
  the \ourdiamond is created by joining two $C_3$'s, one from $t'$ and the other one from $t''$, sharing a common edge.
  This is possible only  if $(\rbf'^{-1}(1) \cap \rbf''^{-1}(1)) \sm S \not = \es$.
  For the counters, we have to be careful in order not to count some element twice.
  We obtain that
  \begin{eqnarray*}
    \Acal_t(\sbf,\sbf_0, \rbf,i,j,\ell) &=&\!\!\!\!\!\!\!\!\!\!
                                            \underset{\ell' + \ell'' = \ell + \ct(H)}{
                                            \underset{j'+j'' = j + |E(H)|}{
                                            \underset{i'+i''=i+|V(H)|}{
                                            \underset{\rbf'^{-1}(1) \cap \rbf''^{-1}(1) = S}{
                                            {
                                            \underset{\rbf',\rbf'': E(H)\to \{0,1\},}{
                                            \underset{}
                                            \bigcuparrow
                                            }}}}}}
                                            \join(A_{t'}(\sbf,\sbf_0, \rbf',i',j',\ell'), A_{t''}(\sbf,\sbf_0, \rbf'',i'',j'',\ell'')).
  \end{eqnarray*}

\end{description}


\begin{theorem}\label{strategy}
  {\sc $\{C_4\}$-TM-Deletion} can be solved in time $2^{\Ocal(\tw)}\cdot  n^7$.
\end{theorem}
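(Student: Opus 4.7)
The plan is to verify the algorithm sketched above by establishing (a) correctness of each DP transition and of the final readout, and (b) a running time of $2^{\Ocal(\tw)} \cdot n^7$. Throughout I would argue by induction on the nice tree decomposition, with the invariant that at every node $t$, for every fixed tuple $(\sbf, \sbf_0, \rbf, i, j, \ell)$, the set $\Acal_t(\sbf, \sbf_0, \rbf, i, j, \ell)$ represents (in the sense of Proposition~\ref{thorough}) the collection of connectivity partitions $p \in \Pi(\sbf^{-1}(1))$ for which $\Ecal_t(p, \sbf, \sbf_0, \rbf, i, j, \ell) \neq \es$. The final readout is then justified by Lemma~\ref{lifeboat}: the condition $\Acal_r(\ef, \ef, \ef, i, j, 1+j-i) \neq \es$ for some $i \geq n + 1 - k$ and $j \leq \tfrac{2}{3}(i-1)$ encodes that $G_0 \angle{Z, Z_0}$ is a single connected graph satisfying the $C_4$-condition, which by Lemma~\ref{lifeboat} is equivalent to excluding $C_4$ as a topological minor; restricting back to $G$ (i.e.\ deleting $v_0$ and its incident edges, via the $\sbf_0$ bookkeeping) yields a $C_4$-topological-minor-free subgraph covering $\geq n-k$ vertices of $G$.

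For the correctness of the transitions, I would handle each case in turn. The leaf and forget-vertex cases are essentially identical to those in~\cite{BodlaenderCKN15} for \textsc{Feedback Vertex Set}, modulo tracking $\sbf_0$ and $\rbf$. The introduce-vertex case requires a careful case split: if the inserted vertex $v$ is not taken into the solution graph ($\sbf(v) = 0$), the table is inherited from the child; if $v$ is taken, one must correctly update the edge count $|N_H(v)|$, triangle count $d_3$, and crucially, use $\rbf$ to forbid any edge of $H[X_{t'}]$ that already lay on a triangle through $v$ from later lying on a second triangle (this is the anti-\ourdiamond check). The special treatment of $v_0$ at its unique introduction node is routine. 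For the join case, the key point is that a \ourdiamond can be created only by gluing two triangles from opposite sides sharing a common edge, and the constraint $\rbf'^{-1}(1) \cap \rbf''^{-1}(1) = S$ (where $S$ is the triangle-edge set already present in $H$) rules this out exactly. The counter identities for $i, j, \ell$ are straightforward inclusion-exclusion over $X_t$.

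For the running time, the central observation is Lemma~\ref{fatherly}: whenever $\Acal_t$ is nonempty, $|E(H)| \leq \tfrac{3}{2} w$, so the number of functions $\rbf$ is $2^{\Ocal(\tw)}$; combined with $2^{\tw+1}$ choices for $\sbf$ and $2^{\tw+1}$ choices for $\sbf_0$, the ``shape'' is bounded by $2^{\Ocal(\tw)}$. The counters $i, j, \ell$ each range over $\Ocal(n)$ values. After every union, insert, glue, project, or join operation, I would apply the $\reduce$ algorithm from Theorem~\ref{happened} to keep every table of size at most $2^{|X_t|} = 2^{\Ocal(\tw)}$. By Proposition~\ref{broadest} each operation then costs $2^{\Ocal(\tw)} \cdot w^{\Ocal(1)}$, and the $\reduce$ itself costs $2^{\Ocal(\tw)}$. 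The dominant contribution is the join node, where summing over the decompositions $i' + i'' = i + |V(H)|$, $j' + j'' = j + |E(H)|$, $\ell' + \ell'' = \ell + \ct(H)$ contributes an $\Ocal(n^3)$ factor, and iterating over the target counters $(i, j, \ell)$ contributes another $\Ocal(n^3)$. Adding one $n$ for the number of bags gives the claimed $2^{\Ocal(\tw)} \cdot n^7$ bound.

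The main obstacle I anticipate is the correctness proof of the introduce-vertex transition together with the join transition under the anti-\ourdiamond constraint, since $\rbf$ must faithfully record, for every edge of $H$, whether it already participates in a triangle in the partial solution graph, and this information has to propagate consistently through glueing with the projection and join operations of the rank-based framework. In particular, verifying that the representation-preserving property of Proposition~\ref{thorough} still holds in the presence of the extra ``triangle-edge'' label $\rbf$ requires checking that the operations, extended in the natural way to act trivially on $\rbf$, commute with the $\opt(q, \cdot)$ equivalence; this is the place where one must be most careful, though it follows the same template as the feedback-vertex-set analysis of~\cite{BodlaenderCKN15}.
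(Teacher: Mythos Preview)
Your proposal is correct and follows essentially the same approach as the paper: apply $\reduce$ (Theorem~\ref{happened}) to every table entry so that stored sets stay of size $2^{\Ocal(\tw)}$, invoke Proposition~\ref{thorough} for correctness under representation, use Lemmas~\ref{lifeboat} and~\ref{fatherly} for the final readout, and obtain the $n^7$ factor from the three counters at the join node, three more for the target entry, and one for the number of bags. Your only over-worry is the last paragraph: since $\rbf$ (like $\sbf$ and $\sbf_0$) is part of the table \emph{index} rather than part of the weighted partition, Proposition~\ref{thorough} applies verbatim with no extension needed.
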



\begin{proof}
  The algorithm works in the following way.
  For each node $t \in V(T)$ and for each entry $M$ of its table, instead of storing $\Acal_t(M)$, we store $\Acal'_t(M) = \reduce(\Acal_t(M))$ by using Theorem~\ref{happened}.
  As each of the operation we use preserves representation by Proposition~\ref{thorough}, we obtain that for each node $t \in V(T)$ and for each possible entry $M$, $\Acal'_t(M)$ represents $\Acal_t(M)$.
  In particular, we have that $\Acal'_r(M) = \reduce(\Acal_r(M))$ for each possible entry $M$.
  Using the definition of $\Acal_r$, Lemma~\ref{lifeboat}, and Lemma~\ref{fatherly}, we have that
  ${\bf tm}_{\{C_4\}}(G) \leq k$
  if and only if for some $i \geq |V(G) \cup \{v_0\}| -k$ and some $j \leq \frac{2}{3}(i-1)$, we have $\Acal'_r(\ef,\ef,\ef,i,j,1+j-i) \not = \es$.

  We now focus on the running time of the algorithm.
  The size of the intermediate sets of weighted partitions, for a leaf node and for an introduce vertex node are upper-bounded by $2^{|\sbf^{-1}(1)|}$.
  For a forget vertex node, as in the big union operation we take into consideration a unique extension of $\sbf$, at most two  possible extensions of $\sbf_0$, and at most $2^{|\sbf^{-1}(1)|}$
  possible extensions for $\rbf$,
  we obtain that
  the intermediate sets of weighted partitions have size at most $2^{|\sbf^{-1}(1)|} + 2\cdot 2^{|\sbf^{-1}(1)|} \cdot 2^{|\sbf^{-1}(1)|}\leq  2^{2|\sbf^{-1}(1)|+2}$.
  For a join node, as in the big union operation we take into consideration at most $2^{|E(H)|} $ possible functions $\rbf'$ and as many functions $\rbf''$, at most $n+|\sbf^{-1}(1)|$ choices for $i'$ and $i''$,
  at most $\frac{3}{2}(n-1) +|E(H)|$ choices for $j'$ and $j''$, and    at most $\frac{1}{2}(n-1) +\frac{1}{3}|E(H)|$ choices for $\ell'$ and $\ell''$, we obtain that
  the intermediate sets of weighted partitions have size at most $2^{|E(H)|} \cdot 2^{|E(H)|} \cdot (n+|\sbf^{-1}(1)|) \cdot (\frac{3}{2}(n-1) +|E(H)|) \cdot (\frac{1}{2}(n-1) +\frac{1}{3}|E(H)|) \cdot 4^{|\sbf^{-1}(1)|} $.
  As each time we can check the condition $C_4 \not \pretp H$, by Lemma~\ref{fatherly}   $ m(H) \leq \frac{3}{2} (n(H) -1)$, so we obtain that the intermediate sets of weighted partitions have size at most
  $6 \cdot n^3 \cdot 2^{5|\sbf^{-1}(1)|}$.
  Moreover, for each node $t \in V(T)$, the function $\reduce$ will be called as many times as the number of possible entries, i.e., at most $2^{\Ocal(w)} \cdot n^3$ times.
  Thus, using Theorem~\ref{happened},  $\Acal'_t$ can be computed in time $2^{\Ocal(w)} \cdot n^6$.
  The theorem follows by taking into account the linear number of nodes in a nice tree decomposition.
\end{proof}

\section{A single-exponential algorithm for $\{\paw\}$-{\sc TM-Deletion}}
\label{nugatory}



Again, we start with a simple structural characterization of the simple graphs that exclude the \paw as a topological minor; recall the $\paw$ graph in Figure~\ref{sleepily}.

\begin{figure}[htb]
  \centering
  \scalebox{1.2}{\begin{tikzpicture}[scale=.7]
      \vertex{0,0};
      \vertex{0,2};
      \vertex{1.5,1};
      \vertex{3,1};

      \draw (1.5,1) -- (0,0) -- (0,2) -- (1.5,1) -- (3,1);
    \end{tikzpicture}}
  \caption{The \paw graph.}
  \label{sleepily}
\end{figure}

\begin{lemma}
  \label{absurdly}
  Let $G$ be a simple graph.
  $\paw \not \pretp G$ if and only if each connected component of $G$ is either a cycle or a tree.
\end{lemma}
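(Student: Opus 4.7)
The plan is to handle the two implications separately. For the forward direction, assume every connected component of $G$ is a cycle or a tree, and argue in one line that $\paw \not\preceq_{\sf tm} G$: the \paw contains a cycle, hence it is not a topological minor of any tree; and the \paw contains a vertex of degree $3$, while every vertex of a cycle has degree $2$, so the \paw cannot be a topological minor of a cycle either (in any topological minor model $(\phi,\sigma)$, the branch vertex image $\phi(a)$ of the degree-$3$ vertex $a$ of the \paw must have degree at least $3$ in the host graph).

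For the backward direction, I will prove the contrapositive: if some connected component $C$ of $G$ is neither a cycle nor a tree, then $\paw \preceq_{\sf tm} C \preceq_{\sf tm} G$. Since $C$ is not a tree, it contains a cycle; let $K$ be a shortest cycle of $C$, with vertices $v_1,v_2,\dots,v_k$ in cyclic order ($k\geq 3$). I will split into two cases according to whether $V(C)=V(K)$.

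If $V(C)=V(K)$, I first observe that any edge $e\in E(C)\setminus E(K)$ would be a chord $\{v_i,v_j\}$ of $K$, and combining $e$ with the shorter of the two arcs of $K$ between $v_i$ and $v_j$ would yield a cycle of length at most $\lfloor k/2\rfloor+1<k$, contradicting the minimality of $K$. Hence $E(C)=E(K)$, so $C$ equals the cycle $K$, contradicting our assumption that $C$ is not a cycle. Therefore this case cannot occur.

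It remains to treat the case $V(C)\setminus V(K)\neq\varnothing$. Since $C$ is connected, there exists a shortest path $P$ in $C$ from some vertex $u\in V(K)$ to some vertex $w\in V(C)\setminus V(K)$ whose only vertex in $V(K)$ is $u$. Writing the \paw as the graph on $\{a,b,c,d\}$ with triangle $abc$ and pendant edge $\{a,d\}$, and relabelling $K$ so that $u=v_1$, I will define the topological minor model by $\phi(a)=v_1$, $\phi(b)=v_2$, $\phi(c)=v_k$, $\phi(d)=w$, $\sigma(\{a,b\})=v_1v_2$, $\sigma(\{a,c\})=v_1v_k$, $\sigma(\{b,c\})=v_2v_3\cdots v_k$, and $\sigma(\{a,d\})=P$. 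The four paths are pairwise internally vertex-disjoint because $P$ shares only $u=v_1$ with $K$ and the three pieces of $K$ share only the branch vertices $v_1,v_2,v_k$; this witnesses $\paw\preceq_{\sf tm} C$. The only mildly delicate point is the chord argument in the previous paragraph, since one must verify the strict inequality $\lfloor k/2\rfloor+1<k$ for $k\geq 3$, which is immediate.
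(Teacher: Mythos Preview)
Your proof is correct and follows essentially the same approach as the paper's: both pick an induced cycle in the component (you take a shortest cycle, the paper takes a chordless one --- a shortest cycle is necessarily chordless, so this is the same choice), then argue that any extra vertex outside the cycle yields a pendant path and hence a \paw, while the absence of chords forces the component to equal the cycle. Your write-up is simply more explicit in building the topological minor model and in justifying the chord argument via the inequality $\lfloor k/2\rfloor+1<k$.
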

\begin{proof}
  It is easy to see that neither a cycle nor a tree contain the \paw as a topological minor.
  Let $G$ be a graph such that  $\paw \not \pretp G$. Let us assume w.l.o.g. that $G$ is connected. If $G$ does not contains a cycle, then it is a tree. Otherwise, let $C$ be a chordless  cycle in $G$. If $G$ contains a vertex $v$ that is not in $C$ then, as $G$ is connected, there exists a path from $v$ to $C$ containing at least two vertices. This is not possible, as it would imply that $G$ contains the \paw  as a topological minor. As $C$ is chordless and $G$ is simple, we obtain that $G$ is exactly the cycle $C$, and the lemma follows.
\end{proof}

We present an algorithm that solves the  decision version of {\sc $\{\paw\}$-TM-Deletion}.
As the algorithm that we presented for {\sc $\{C_4\}$-TM-Deletion} in Section~\ref{benjamin}, this algorithm is based on the one given in~\cite[Section 3.5]{BodlaenderCKN15}
for {\sc Feedback Vertex Set}.
Let $G$ be a graph and $k$ be an integer.
The idea of the following algorithm is to partition $V(G)$ into three sets. The first one will be the solution set $S$, the second one will be a set $F$ of vertices that induces a forest, and the third one will be a set $C$ of vertices that induces a collection of cycles. If we can partition our graph into three such sets $(S,F,C)$ such that there is no edge between a vertex of $F$ and a vertex of $C$ and such that $|S| \leq k$, then, using Lemma~\ref{absurdly}, we know that   ${\bf tm}_{\{\paw\}}(G) \leq k$. On the other hand, if such a partition does not exist, we know that   ${\bf tm}_{\{\paw\}}(G) > k$.
The main idea of this algorithm is to combine classical dynamic programming techniques in order to verify that $C$ induces a collection of cycles, and the rank-based approach in order to verify that $F$ induces a forest.

As for {\sc $\{C_4\}$-TM-Deletion}, we define a new graph $G_0 = (V(G) \cup \{v_0\},E(G)\cup E_0)$, where $v_0$ is a new vertex
and $E_0 = \{\{v_0,v\} \mid v \in V(G)\}$. 
We recall that for each subgraph $H$ of $G_0$, for each $Z_1 \subseteq V(H)$, and for each $Y \subseteq E_0\cap E(H[Z_1])$, we denote by $H\angle{Z_1,Y}$ the graph $\big(Z_1,Y \cup E\big(H[Z_1 \sm \{v_0\}]
\big)\big)$.

Given a nice tree decomposition of $G$ of width $w$, we define a nice tree decomposition  $((T,\Xcal), r, \Gcal)$ of $G_0$ of width $w+1$
such that the only empty bags are the root and the leaves and for each $t \in T$, if $X_t \not = \es$ then $v_0 \in X_t$.
Note that this can be done in linear time.
For each bag $t$,  each integers $i$, $j$, and $\ell$,
each function $\sbf: X_t \rightarrow \{0,1,2_0,2_1,2_2\}$,
each function $\sbf_0: \{v_0\} \times \sbf^{-1}(1) \rightarrow \{0,1\}$,
and each partition $p \in \Pi(\sbf^{-1}(1))$,
we define:

\begin{eqnarray*}
  \Ecal_t(p,\sbf, \sbf_0,i,j,\ell) & = &
                                         \{(Z_1,Z_2,Y) \mid (Z_1,Z_2,Y) \in 2^{V_t} \times  2^{V_t} \times 2^{E_0 \cap E(G_t)},\ Z_1 \cap Z_2 = \es,\\
                                   &&~~~ |Z_1|  = i,\ |Z_2| = \ell, |E(G_t[Z_1\sm \{v_0\}]) \cup Y| = j, \\
                                   &&~~~ \forall e \in E_0 \cap E_t,\ \sbf_0(e) = 1 \Leftrightarrow e \in Y,\\
                                   && ~~~ \forall v \in Z_2 \cap X_t,\ \sbf(v) = 2_z\mbox{ with }z = {\degs{G_t[Z_2]}{v}},\\
                                   && ~~~ \forall v \in Z_2 \sm X_t,\ {\degs{G_t[Z_2]}{v} = 2},\\
                                   &&~~~ Z_1 \cap X_t = \sbf^{-1}(1),\ v_0 \in X_t \Rightarrow \sbf(v_0) = 1, \\ 
                                   &&~~~\mbox{$G_t\angle{\sbf^{-1}(1),\sbf_0^{-1}(1)}$ is a forest},\\
                                   &&~~~ \forall u \in Z_1\sm X_t : \mbox{ either $t$ is the root or } \\
                                   &&~~~~~~~~~~~~~\exists u' \in \sbf^{-1}(1): \mbox{$u$ and $u'$ are connected in $G_t\angle{Z_1,Y}$,}\\
                                   &&~~~ \forall v_1,v_2 \in \sbf^{-1}(1):p \sqsubseteq V_t[\{v_1,v_2\}] \Leftrightarrow  \mbox{$v_1$ and $v_2$ are con-} \\
                                   &&~~~~~~~~~~~~~~~~~~~~~~~~~~~~~~~~~~~~~~~~~~~~~~~~~~~~~\mbox{nected in $G_t\angle{Z_1,Y}$},\\
                                   && ~~~ \forall (u,v) \in (Z_1\sm \{v_0\}) \times Z_2,\ \{u,v\} \not \in E(G_t) \}\\
  \\
  \Acal_t(\sbf, \sbf_0,i,j,\ell) & = &
                                       \{p \mid p \in \Pi(\sbf^{-1}(1)),\ \Ecal_t(p,\sbf,\sbf_0,i,j,\ell) \not = \es\}.\\
\end{eqnarray*}


In the definition of $\Ecal_t$, the sets $Z_1$ (resp. $Z_2$) correspond to the set $F$ (resp. $C$) restricted to $G_t$. The vertex $v_0$ and the set $Y$ exist to ensure that $F$ will be connected.

By Lemma~\ref{absurdly}, we have that the given instance of {\sc $\{\paw\}$-TM-Deletion} is a \textsc{Yes}-instance if and only if for some $i$ and $\ell$, $i+\ell \geq |V(G) \cup \{v_0\}| -k$ and $\Acal_r(\ef,\ef,i,i-1,\ell) \not = \es$.
For each $t \in V(T)$, we assume that we have already computed $\Acal_{t'}$ for every children $t'$ of $t$, and we proceed to the computation of $\Acal_t$.
As usual, we distinguish
several cases depending on the type of node $t$.

\begin{description}
\item[Leaf.] By definition of $\Acal_t$, we have $\Acal_t(\ef,\ef,0,0,0) = \{\es\}$.
\item[Introduce vertex.]
  Let $v$ be the insertion vertex of $X_t$, let $t'$ be the child of $t$, let $\sbf: X_t \rightarrow \{0,1,2_0,2_1,2_2\}$,  $\sbf_0: \{v_0\} \times \sbf^{-1}(1) \rightarrow \{0,1\}$, and let $H = G_t\angle{\sbf^{-1}(1),\sbf_0^{-1}(1)}$.
  \begin{itemize}
  \item If  $v = v_0$ and $\sbf(v_0) \in \{0,2_0,2_1,2_2\}$ or if $H$ contains a cycle, then by definition of $\Acal_t$ we have that $\Acal_t(\sbf, \sbf_0,i,j,\ell) = \es$.
  \item Otherwise, if $v = v_0$, then by construction of the nice tree decomposition,
    we know that $t'$ is a leaf of $T$ and so
    $\sbf = \{(v_0,1)\}$, $j = \ell = i-1 = 0$ and $\Acal_t(\sbf, \sbf_0,i,j,\ell) = \ins(\{v_0\}, \Acal_{t'}(\ef,\ef,0,0,0)$).
  \item Otherwise, if $\sbf(v) = 0$, then, by definition of $\Acal_t$, it holds that $\Acal_t(\sbf, \sbf_0,i,j,\ell) = \Acal_{t'}(\sbf|_{X_{t'}},\sbf|_{E_{t'}},i,j,\ell)$.
  \item Otherwise, if $\sbf(v) = 2_z$, $z \in \{0,1,2\}$, then let $Z_2'=N_{G_t[X_t]}(v)\sm \sbf^{-1}(0)$.
    If  $Z_2' \not \subseteq \sbf^{-1}(\{2_1,2_2\})$ or $|Z'_2| \not = z$ then   $\Acal_t(\sbf, \sbf_0,i,j,\ell) = \es$.
    Otherwise $Z_2' \subseteq \sbf^{-1}(\{2_1,2_2\})$ and $|Z'_2| = z$, and with
    $\sbf': X_{t'} \to \{0,1,2_0,2_1,2_2\}$ defined such that $\forall v' \in X_{t'} \sm Z'_2,\ \sbf'(v') = \sbf(v')$ and for each $v' \in Z_2'$ such that $\sbf(v') = 2_{z'}$, $z' \in \{1,2\}$, $\sbf'(v') = 2_{z'-1}$. It holds that
    $\Acal_t(\sbf, \sbf_0,i,j,\ell) = \Acal_{t'}(\sbf',\sbf_0 ,i,j,\ell-1)$.
  \item
    Otherwise, we know that $v \not = v_0$, $\sbf(v) = 1$, and $v_0 \in N_{G[\sbf^{-1}(1)]}(v)$.
    First, if $N_{G_t[X_t]}(v) \sm \sbf^{-1}(0) \not \subseteq \sbf^{-1}(1)$, then $\Acal_t(\sbf,\sbf_0,i,j,\ell) = \es$. Indeed, this implies that the cycle part and the forest part are connected.
    As $\sbf(v) = 1$, we have to insert $v$ in the forest part and we have to make sure that all vertices of $N_H[v]$ are in the same connected component of $H$.
    The only remaining choice is to insert the edge $\{v,v_0\}$ or not.
    Again, this is handled by the function $\sbf_0$.
    By adding $v$, we add one vertex and $|N_H(v)|$ edges in the forest part.
    Therefore, we have that 
    \begin{eqnarray*}
      \Acal_t(\sbf,\sbf_0, i,j,\ell) & = & \\
                                          & &\!\!\!\!\!\!\!\!\!\!\!\!\!\!\!\!\!\!\!\!\!\!\!\!\!\!\!\!\!\!\!\!\!\!\!\!\!\!\!\!\!\!\!\!\!\!\!\!\!\!\!\!\!\!\! \glue(N_{H}[v],\ins(\{v\}, \Acal_{t'}(\sbf|_{X_{t'}},\sbf_0|_{E_{t'}},i-1,j - |N_H(v)|,\ell))).\\
    \end{eqnarray*}
  \end{itemize}
\item[Forget vertex.]
  Let $v$ be the forget vertex of $X_t$, let $t'$ be the child of $t$, and let  $\sbf: X_t \rightarrow \{0,1,2_0,2_1,2_2\}$.
  As a vertex from the collection of cycles can be removed only if it has exactly two neighbors, we obtain that
  \begin{eqnarray*}
    \Acal_t(\sbf,\sbf_0,i,j,\ell) &=&
                               A_{t'}(\sbf\cup \{(v,0)\},\sbf_0,i,j,\ell)\\
                           &&\cuparrow \proj(\{v\},A_{t'}(\sbf \cup \{(v,1)\},\sbf_0 \cup \{(\{v_0,v\},0)\},i,j,\ell))\\
                           &&\cuparrow \proj(\{v\},A_{t'}(\sbf \cup \{(v,1)\},\sbf_0 \cup \{(\{v_0,v\},1)\},i,j,\ell))\\
                           && \cuparrow A_{t'}(\sbf\cup \{(v,2_2)\},\sbf_0,i,j,\ell).\\
  \end{eqnarray*}
\item[Join.]
  Let $t'$ and $t''$ be the two children of $t$, let $\sbf: X_t \rightarrow \{0,1,2_0,2_1,2_2\}$,  $\sbf_0: \{v_0\} \times \sbf^{-1}(1) \rightarrow \{0,1\}$, and let $H = G_t\angle{\sbf^{-1}(1),\sbf_0^{-1}(1)}$.
  Given three functions $\sbf^*,\sbf', \sbf'': X_t \rightarrow \{0,1,2_0,2_1,2_2\}$, we say that $\sbf^* = \sbf' \oplus \sbf''$ if for each $v \in \sbf^{-1}(\{0,1\})$, $\sbf^*(v) = \sbf'(v) = \sbf''(v)$, and for each $v \in X_t$ such that $\sbf^*(v) = 2_z$, $z \in \{0,1,2\}$, there exist $z',z'' \in \{0,1,2\}$ such that $\sbf'(v) = 2_{z'}$, $\sbf''(v)=2_{z''}$, and $z = z' + z'' - {\degs{G_t[X_t\sm \sbf^{-1}(0)]}{v}} $.

  We join every compatible entries
  $A_{t'}(\sbf',\sbf'_0,i',j',\ell')$ and $A_{t''}(\sbf'',\sbf''_0,i'',j'',\ell'')$.
  For two such entries being compatible, we need $\sbf' \oplus \sbf''$ to be defined and $\sbf_0' = \sbf_0''$.
  We obtain that
  \begin{eqnarray*}
    \Acal_t(\sbf,\sbf_0,i,j,\ell) &=&\!\!\!\!\!\!\!\!\!\!
                                      \underset{\ell' + \ell'' = \ell + |\sbf^{-1}(\{2_0,2_1,2_2\})|}{
                                      \underset{j'+j'' = j + |E(H)|}{
                                      \underset{i'+i''=i+|V(H)|}{
                                      \underset{\sbf = \sbf_1 \oplus \sbf_2}{
                                      {
                                      \underset{\sbf', \sbf'': X_t \rightarrow \{0,1,2_0,2_1,2_2\},}{
                                      \underset{}
                                      \bigcuparrow
                                      }}}}}}\!\!\!\!\!\!\!
                                      \join(A_{t'}(\sbf',\sbf_0,i',j',\ell'), A_{t''}(\sbf'',\sbf_0, i'',j'',\ell'')).
  \end{eqnarray*}

\end{description}


\begin{theorem}
  {\sc $\{\paw\}$-TM-Deletion} can be solved in time $2^{\Ocal(\tw)}\cdot  n^7$.
\end{theorem}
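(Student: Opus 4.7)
The plan is to follow the template of the proof of Theorem~\ref{strategy} almost verbatim. At each node $t$ and for each entry $M$ of the table, instead of storing $\Acal_t(M)$ we would store its reduction $\Acal'_t(M) = \reduce(\Acal_t(M))$ obtained via Theorem~\ref{happened}. Each operation used in the recurrences above---union, insert, shift, glue, project, and join---preserves representation by Proposition~\ref{thorough}, so an easy induction on $t$ shows that $\Acal'_t(M)$ represents $\Acal_t(M)$ for every entry $M$. At the root, Lemma~\ref{absurdly} guarantees that $G$ admits a partition into a size-$\leq k$ deletion set $S$, a forest $F$, and a disjoint union of cycles $C$ if and only if $\Acal'_r(\ef, \ef, i, i-1, \ell) \neq \es$ for some $i, \ell$ with $i + \ell \geq |V(G) \cup \{v_0\}| - k$, which yields correctness.

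To get the $2^{\Ocal(\tw)} \cdot n^7$ bound, first I would count the entries per node: $\sbf$ has $5^{|X_t|}$ choices, $\sbf_0$ has at most $2^{|X_t|}$ choices, and each of $i$, $j$, $\ell$ ranges over $\Ocal(n)$ values, giving $2^{\Ocal(w)} \cdot n^3$ entries in total. By Theorem~\ref{happened}, each reduced table satisfies $|\Acal'_t(M)| \leq 2^{|X_t|} = 2^{\Ocal(w)}$. In the leaf, introduce-vertex, and forget-vertex cases the intermediate sets have size $2^{\Ocal(w)}$, so $\reduce$ applies in $2^{\Ocal(w)} \cdot w^{\Ocal(1)}$ time per entry. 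The delicate case is the join: the big union ranges over all compatible pairs $(\sbf', \sbf'')$ determined by the operator $\oplus$, of which there are at most $5^{|X_t|}$, and over all counter splits $(i', j', \ell')$, $(i'', j'', \ell'')$, contributing $n^3$ further choices. Each individual $\join$, by Proposition~\ref{broadest}, takes $4^{|X_t|}\cdot w^{\Ocal(1)}$ time. Hence the pre-reduction intermediate set at a join node has size $2^{\Ocal(w)} \cdot n^3$, and $\reduce$ finishes it in $2^{\Ocal(w)} \cdot n^3$ time. Multiplying by the $2^{\Ocal(w)} \cdot n^3$ possible entries per node, and then by the $\Ocal(n)$ nodes of the nice tree decomposition, yields the claimed $2^{\Ocal(\tw)} \cdot n^7$ bound.

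The main obstacle I foresee is not the running-time bookkeeping, which mirrors that of Theorem~\ref{strategy}, but the correctness verification that the five-valued label $\sbf \in \{0, 1, 2_0, 2_1, 2_2\}$ faithfully encodes the intended tripartition together with the cycle-degree information. Specifically, one needs to check that (i) the constraint forbidding edges between $\sbf^{-1}(1)$ and $\sbf^{-1}(\{2_0, 2_1, 2_2\})$ is enforced at each introduce-vertex step (it is, via the test $N_{G_t[X_t]}(v) \setminus \sbf^{-1}(0) \subseteq \sbf^{-1}(1)$ when $\sbf(v) = 1$, and symmetrically when $\sbf(v) = 2_z$); (ii) a vertex labelled $2_z$ really has partial degree $z$ inside the cycle part $G_t[Z_2]$, which is ensured by the shift from $2_{z'}$ to $2_{z'-1}$ in the introduce-vertex recurrence whenever a new cycle-neighbor is added; (iii) a vertex may be forgotten only once it has reached its final cycle-degree, explaining why the forget-vertex recurrence admits only $\sbf(v) \in \{0, 1, 2_2\}$; and (iv) the $\oplus$ combinator at join nodes correctly sums the partial degrees after subtracting the contribution of the edges already present in $G_t[X_t]$, so that no cycle edge is double-counted. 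These verifications are entirely local and, together with Lemma~\ref{absurdly}, package into a global correctness statement by induction on the nodes of $T$ in the same style as in the proof of Theorem~\ref{strategy}.
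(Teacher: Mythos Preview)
Your proposal is correct and follows essentially the same approach as the paper: apply $\reduce$ at every table entry, invoke Proposition~\ref{thorough} for representation-preservation, use Lemma~\ref{absurdly} for correctness at the root, and bound the running time by counting entries ($2^{\Ocal(w)}\cdot n^3$), intermediate set sizes at each node type, and the linear number of nodes. Your additional verification points (i)--(iv) about the five-valued labelling go somewhat beyond what the paper writes out explicitly, but they are in the same spirit and entirely compatible with the paper's argument.
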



\begin{proof}
  The algorithm works in the following way.
  For each node $t \in V(T)$ and for each entry $M$ of its table, instead of storing $\Acal_t(M)$, we store $\Acal'_t(M) = \reduce(\Acal_t(M))$ by using Theorem~\ref{happened}.
  As each of the operations we use preserves representation by Proposition~\ref{thorough}, we obtain that for each node $t \in V(T)$ and for each possible entry $M$, $\Acal'_t(M)$ represents $\Acal_t(M)$.
  In particular, we have that $\Acal'_r(M) = \reduce(\Acal_r(M))$ for each possible entry $M$.
  Using the definition of $\Acal_r$ and Lemma~\ref{absurdly}, we have that
  ${\bf tm}_{\{\paw\}}(G) \leq k$ if and only if for some $i$ and $\ell$, $i+\ell \geq |V(G) \cup \{v_0\}| -k$ and $\Acal'_r(\ef,\ef,i,i-1,\ell) \not = \es$.


  We now focus on the running time of the algorithm.
  The size of the intermediate sets of weighted partitions for a leaf node and for an introduce vertex node, are upper-bounded by $2^{|\sbf^{-1}(1)|}$.
  For a forget vertex node, we take the union of four sets of size $2^{|\sbf^{-1}(1)|}$, so
  the intermediate sets of weighted partitions have size at most $4 \cdot 2^{|\sbf^{-1}(1)|}$.
  For a join node, as in the big union operation we take into consideration at most $5^{|X_t|} $ possible functions $\sbf'$, as many functions $\sbf''$, at most $n+|\sbf^{-1}(1)|$ choices for $i'$ and $i''$,
  at most $n+|\sbf^{-1}(1)|$ choices for $j'$ and $j''$ (as  $H$ is always a forest), and at most $n+|\sbf^{-1}(\{2_0,2_1,2_2\}|$ choices for $\ell'$ and $\ell''$, we obtain that the intermediate sets of weighted partitions have size at most
  $25^{|X_t|}\cdot (n+ |\sbf^{-1}(1)|)^2 \cdot (n+|\sbf^{-1}(\{2_0,2_1,2_2\}|) \cdot 4^{|\sbf^{-1}(1)|}$.
  We obtain that the intermediate sets of weighted partitions have size at most
  $ (n+|X_t|)^3 \cdot 100^{|X_t|}$.
  Moreover, for each node $t \in V(T)$, the function $\reduce$ will be called as many times as the number of possible entries, i.e., at most $2^{\Ocal(w)} \cdot n^3$ times.
  Thus, using Theorem~\ref{happened},  $\Acal'_t$ can be computed in time $2^{\Ocal(w)} \cdot n^6$.
  The theorem follows by taking into account the linear number of nodes in a nice tree decomposition.
\end{proof}

\section{A single-exponential algorithm for {\sc $\{\chair\}$-TM-Deletion}}
\label{provider}

As in the previous cases that we solved in single-exponential time, we start with a structural characterization of the graphs that exclude the $\chair$ as a topological minor (hence, as a minor as well).

\begin{lemma}
  \label{fortress}
  Let $G$ be a graph.
  $\chair \not \pretp G$ if and only if
  every connected component of $G$ of size at least five is a path, a cycle, or a star.
\end{lemma}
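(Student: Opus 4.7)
The plan is a double implication, and in both directions it suffices to exhibit or rule out the chair as an \emph{explicit subgraph}, since any subgraph is automatically a topological minor.

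For the ``if'' direction, assume every connected component of $G$ of size at least $5$ is a path, a cycle, or a star, and suppose toward a contradiction that $\chair\pretp G$. Since the chair is connected with five vertices, any topological minor model of the chair lives inside a single component $C$, and $|V(C)|\ge 5$. Hence $C$ is a path, a cycle, or a star. Paths and cycles have maximum degree $2$, so they cannot host the degree-$3$ vertex $b$ of the chair as a branch vertex. For a star $K_{1,n}$ only the center has degree $\ge 2$; but the chair has two adjacent vertices $b$ and $c$ of degree $\ge 2$, so both would have to be mapped to the center, contradicting injectivity of $\phi$.

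For the ``only if'' direction, let $C$ be a component of $G$ with $|V(C)|\ge 5$ and $\chair\not\pretp C$. I split on $\Delta(C)$. If $\Delta(C)\le 2$, connectedness immediately yields that $C$ is a path or a cycle. Otherwise, fix $v\in V(C)$ with $\deg(v)\ge 3$, and argue in two steps.

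First I will show $V(C)=N_C[v]$. If some vertex $w$ lies at distance exactly $2$ from $v$, witnessed by a shortest path $v\,u\,w$, then since $\deg(v)\ge 3$ I may choose two neighbors $x,y$ of $v$ distinct from $u$; because $w\notin N(v)$, the five vertices $v,u,w,x,y$ are pairwise distinct, and the edges $vx$, $vy$, $vu$, $uw$ realize the chair as a subgraph of $C$, a contradiction. Thus every vertex is a neighbor of $v$, and since $|V(C)|\ge 5$ we get $|N(v)|\ge 4$. Second, I will show that $N(v)$ is independent. Suppose $xy$ is an edge with $x,y\in N(v)$; then $|N(v)|\ge 4$ lets me pick two further neighbors $z,w$ of $v$, and $v,x,y,z,w$ with edges $vz$, $vw$, $vx$, $xy$ once again realize the chair as a subgraph, contradicting $\chair\not\pretp C$. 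Combining the two claims, $C$ is the star $K_{1,|N(v)|}$.

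There is no real obstacle; the only care needed is to check in each application that the five chosen vertices are pairwise distinct, which follows from the distance arguments in the first step and from $|N(v)|\ge 4$ in the second.
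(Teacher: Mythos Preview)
Your proof is correct. Both directions are clean: the ``if'' direction correctly uses that branch vertices in a topological minor model must have degree at least that of the corresponding vertex in the pattern, and the ``only if'' direction handles the two cases $\Delta(C)\le 2$ and $\Delta(C)\ge 3$ with care, in particular verifying distinctness of the five chosen vertices in each chair realization.

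Your route differs from the paper's. The paper takes a \emph{longest path} $P$ in the component and argues by cases on $|V(P)|$: if $|V(P)|\le 3$ the component is forced to be a star, and if $|V(P)|\ge 4$ then any vertex off $P$ (or any chord from an endpoint to an internal vertex) would create a chair, forcing $V(C)=V(P)$ and $E(C)\subseteq E(P)\cup\{p_1p_2\}$. You instead split on the \emph{maximum degree}: if $\Delta\le 2$ you get a path or cycle immediately, and if $\Delta\ge 3$ you show directly that the high-degree vertex dominates the component and its neighborhood is independent. Both arguments are short and elementary; yours is arguably more direct in the star case (no implicit subargument hidden behind ``it is easy to see''), while the paper's longest-path argument handles paths and cycles with a single stroke rather than invoking the folklore $\Delta\le 2$ classification.
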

\begin{proof}
  It is straightforward to check that a path, a cycle, a star, or a graph of size at most $4$ do not contain the $\chair$ as a topological minor.
  Conversely, let $G$ be a connected graph of size at least five that excludes the chair as a topological minor.
  Let $P$ be a longest path of $G$. We denote by $p_1$ and $p_2$ the endpoints of this path $P$.
  It is easy to see that if $|V(P)| \leq 3$, then $G$ has to be a star.
  Assume now that $|V(P)| \geq 4$.
  Then we have that $V(P) = V(G)$.
  Indeed, assume that $V(P) \not = V(G)$ and let $v \in V(G) \setminus V(P)$ be a vertex adjacent with a vertex of $V(P)$ in $G$.
  This vertex should exist by the connectivity of $G$.
  By maximality of $P$, $v$ cannot be adjacent to $p_1$ or $p_2$ and as $\chair \not\pretp G$, $v$ cannot be adjacent to an internal vertex of the path $P$.
  Thus, $|V(P)| = |V(G)| \geq 5$.
  Moreover, as $\chair \not\pretp G$, neither $p_1$ nor $p_2$ can be adjacent to an internal vertex of the path $P$, and so,
  $E(P) \subseteq E(G) \subseteq E(P) \cup \{p_1,p_2\}$.
  Thus, $G$ is either a path or a cycle.
\end{proof}

With Lemma~\ref{fortress} at hand, an algorithm for {\sc $\{\chair\}$-TM-Deletion} running in time $2^{\Ocal(\tw)}\cdot n^{\Ocal(1)}$ can be obtained using standard dynamic programming techniques.
For this, as we did for {\sc $\{P_3\}$-TM-Deletion}, {\sc $\{P_4\}$-TM-Deletion}, and {\sc $\{K_{1,s}\}$-TM-Deletion}, $s \in \Nbb$, we  label the vertices on each bag.
Each label carries two types of information. On the one hand, it indicates whether a vertex is in a collection of paths or cycles, a collection of stars, a clique of size $4$, a \paw, or a \ourdiamond; note that these are all the possible graphs on at most 4 vertices (see Figure~\ref{shifting}).  On the other hand, it also indicates in which ``state'' a vertex is with regard to the already computed graph, which will become clear below.

As the number of distinct labels needed for
the algorithm for {\sc $\{\chair\}$-TM-Deletion} is quite large and the algorithm itself is not complicated,
we will avoid the formal description of it.
Namely, as we will discuss, we need $17$ distinct labels on the vertices and three distinct labels on the edges.
We only describe how to label the vertices for each type of component.
Note that, for instance, a $P_3$ is both a path and a star.
We do not consider this as an issue, and we will just have two types of components that can become $P_3$'s.

Let us proceed to the description of the labels.
First, we use one label to indicate whether a vertex belongs in the solution or not.
For the collection of paths or cycles, we use three labels $0$, $1$, and $2$, corresponding to the current degree of this vertex.
For a collection of stars, we use three labels, $c$, $0$, $1$ where $c$ labels a vertex that is a center of a star,
$0$ labels a vertex that is leaf of a star that is not connected to a center yet, and
$1$ labels a leaf of a star that is already connected to a center.
For the clique of size $4$, we only need two labels $0$ and $1$, where $0$ means that the vertex should be in a $K_4$ but we do not know which one yet, and $1$ means that we already found in which $K_4$ the vertex is.
The crucial argument for this is the fact that if four vertices induce a $K_4$, then by the properties of a tree decomposition there exists a bag that contains the four vertices.
For the \paw, we use six labels $c_0$, $c_1$, $d_f$, $d_w$, $\ell_0$, and $\ell_1$.
We call \emph{leaf} the only vertex of the \paw of degree $1$.
The label $\ell_0$ (resp. $\ell_1$) corresponds to a leaf of a \paw that has degree $0$ (resp. $1$)  in the currently processed graph.
The label $c_0$ (resp. $c_1$) corresponds to a vertex of the cycle of a \paw that is not (resp. is) connected to a leaf and for which we do not know yet the three vertices of the cycle.
The label $d_f$ corresponds to a vertex of the cycle of a \paw for which we already know the three vertices of the cycle and that is \emph{full}, i.e., it cannot be connected to a leaf anymore.
The label $d_w$ corresponds to a vertex of the cycle of a \paw for which we know the three vertices of the cycle and that is waiting for a leaf to be connected to.

Dealing with the \ourdiamond is a bit more complicated.
We see the \ourdiamond as two $C_3$ glued by an edge called \emph{chord}.
The crucial argument is that for each $C_3$, there exists a bag that contains the three vertices of the cycle and we ``only'' need to remember which edge of the cycle is the chord of the \ourdiamond.
We use two labels for the vertices $a_0$ and $a_1$, and, this time, we also use three labels on the edges $b_0$, $b_1$, and $b_2$.
Note that in a \ourdiamond, the number of edges is linear in the number of vertices and so, we can afford to label the edges.
Namely, $a_0$ is used for vertices that can still  be in a $C_3$ and $a_1$ is used for vertices that cannot be in a new $C_3$ anymore.
The label $b_0$ is used for edges that are not in a $C_3$ yet, the label $b_1$ is used for chords for which we only found one of the two $C_3$'s, and the label $b_2$ is used for edges that cannot be used anymore.
Note that the endpoints of a chord belong to two $C_3$'s, so when detecting the first one, these endpoints will remain labeled $a_0$ but the chord will be now labeled $b_1$.

Using all these labels, and updating them in a bottom-up fashion in a tree decomposition in a standard way, we obtain the following theorem.

\begin{theorem}\label{magician}
  If a nice tree decomposition of $G$ of width $w$ is given, {\sc $\{\chair\}$-Deletion} can be solved in time $2^{\Ocal(w)} \cdot n$.
\end{theorem}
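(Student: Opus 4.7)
The plan is to mimic the dynamic programming schemes used for $\{P_3\}$-, $\{P_4\}$-, $\{K_{1,s}\}$-, and $\{C_4\}$-\textsc{TM-Deletion}, but with a richer alphabet of labels tailored to the structural description of chair-free graphs given by Lemma~\ref{fortress}. That lemma tells us that after removing the solution set $\hs$, every connected component of the remainder is either of size at most $4$ (hence one of $K_1,K_2,P_3=K_{1,2},P_4,K_{1,3},C_3,C_4,K_4,\paw,\ourdiamond$), or a path $P_{\geq 5}$, a cycle $C_{\geq 5}$, or a star $K_{1,\geq 4}$. All these ``allowed'' components have a bounded-width local structure that can be encoded on each bag.

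For each node $t$ of the nice tree decomposition, I would maintain a table indexed by a labeling $\sbf: X_t \to L$ of the bag vertices, where $L$ is the $17$-element label alphabet sketched in the paper: one label ``in $\hs$''; labels $\{0,1,2\}$ for path/cycle vertices indicating current degree in $G_t\setminus\hs$; labels $\{c,0,1\}$ for star components (center, unconnected leaf, already-connected leaf); two labels $\{0,1\}$ for a $K_4$-component (whether the four $K_4$-vertices have been identified); six labels $\{c_0,c_1,d_f,d_w,\ell_0,\ell_1\}$ for \paw-components; and two labels $\{a_0,a_1\}$ for \ourdiamond-components, supplemented by three edge-labels $\{b_0,b_1,b_2\}$ inside such components to locate the chord. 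The DP value $\dptable_t(\sbf,\cdot)$ records the minimum size of a partial solution $\hs\cap V(G_t)$ consistent with $\sbf$ together with a witness $G_t\setminus \hs$ whose components agree with the declared labels. First I would set up the transitions for leaf, introduce-vertex, forget-vertex, and join nodes exactly as in the $\{P_4\}$-algorithm of Section~\ref{affected}: at an introduce node, the newly inserted vertex receives any label compatible with its neighbors already present in $X_t$; at a forget node we project out a vertex only if its declared label is in a ``closed'' state (degree $2$ for a cycle vertex, a leaf already attached to its center, a full $\ourdiamond$ vertex labeled $a_1$, etc.); at a join node we combine two labelings by requiring that the two local views of each vertex are mergeable (summing degrees, concatenating star attachments, identifying the $K_4$/\paw/\ourdiamond\ that a $0$-labeled vertex belongs to, and respecting the edge-labels for the chord of a \ourdiamond).

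The key soundness argument, justifying why this finite local information suffices, rests on two observations already implicit in the paper. For the bounded-size components ($K_4$, \paw, \ourdiamond), every triangle or $K_4$ must lie entirely within a bag by the standard treewidth property, so the ``where does this triangle/clique live'' information is exposed on some bag of the decomposition and can be recorded with a constant number of labels. For unbounded components (paths, cycles, stars), the graph has maximum degree at most $2$ or is a star, so each vertex needs only to remember its current degree (paths/cycles) or its role as center/leaf (stars), which is again a constant amount of data. The most delicate step, and the one I expect to require the most care, is the handling of the \ourdiamond: one must distinguish the two triangles sharing the chord and avoid double-counting when the two triangles are introduced in different subtrees of a join node. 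This is exactly why we also label edges inside a \ourdiamond-component with $\{b_0,b_1,b_2\}$, so that the chord's identity is unambiguous after a join and cannot silently acquire a third triangle.

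Finally, the running-time bound is immediate: the number of labelings of a bag is at most $17^{w+1}\cdot 3^{\Ocal(w)} = 2^{\Ocal(w)}$ (the edge-labels contribute a factor $3^{\binom{w+1}{2}}$ only when restricted to \ourdiamond-components, where the vertex set of the triangle lies in a single bag and can be encoded locally in $2^{\Ocal(w)}$ bits). Each transition is a polynomial-in-$w$ combination of a constant number of child-table entries, and join nodes are handled by iterating over compatible label pairs, which stays within $2^{\Ocal(w)}$ by the standard product bound. Summing over the $\Ocal(n)$ nodes of the decomposition yields the claimed $2^{\Ocal(w)} \cdot n$ time, and correctness follows from Lemma~\ref{fortress} together with the invariants maintained by the label alphabet.
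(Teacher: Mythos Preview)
Your approach is exactly the paper's: use Lemma~\ref{fortress}, assign the same $17$ vertex labels and $3$ edge labels, and run a standard bottom-up DP on the nice tree decomposition. The description of the transitions (introduce/forget/join) fleshes out what the paper only sketches, and the soundness arguments you give (cliques and triangles appearing inside some bag, degree-bounded tracking for paths, cycles, and stars) are the right ones.

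There is one loose end in your running-time paragraph. You write that the edge labels ``contribute a factor $3^{\binom{w+1}{2}}$'' and then try to argue this is $2^{\Ocal(w)}$ because ``the vertex set of the triangle lies in a single bag''. As written, $3^{\binom{w+1}{2}}=2^{\Theta(w^2)}$, and the triangle-in-a-bag remark does not by itself bring it down. The actual reason the edge labeling stays single-exponential is the one the paper states just before introducing the $b$-labels: the edge labels are attached \emph{only} to edges both of whose endpoints carry a diamond vertex-label ($a_0$ or $a_1$), and in any feasible partial solution these vertices induce a subgraph of a disjoint union of diamonds. Since a diamond has $4$ vertices and $5$ edges, any such induced subgraph on at most $w+1$ vertices has at most $\tfrac{5}{4}(w+1)$ edges, so there are at most $3^{\frac{5}{4}(w+1)}=2^{\Ocal(w)}$ edge labelings to enumerate; labelings where the diamond-labeled vertices carry more edges than this are simply declared infeasible. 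With this correction, your bound of $2^{\Ocal(w)}\cdot n$ goes through.
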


\section{A single-exponential algorithm for {\sc $\{\banner\}$-TM-Deletion}}
\label{sec:banner}


Similarly as before, we start with a structural characterization of the graphs that exclude the $\banner$ as a (topological) minor.

\begin{lemma}\label{voltaire}
  Let $G$ be a graph with at least five  vertices. $\banner \not \pretp G$ if and only if $G$ is a cycle or $C_4 \not \pretp G$.
\end{lemma}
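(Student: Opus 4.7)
The plan is to prove each implication of the equivalence separately. For the direction from right to left, note that if $G$ is a cycle, every vertex of $G$ has degree exactly $2$, whereas $\banner$ contains a vertex of degree $3$ that must be realized as a branch vertex in any topological-minor model; hence $\banner \not\pretp G$. If instead $C_4 \not\pretp G$, then since $C_4 \pretp \banner$ (obtained by deleting the pendant vertex), transitivity of $\pretp$ yields $\banner \not\pretp G$.

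For the reverse direction, I would prove the contrapositive: assuming $G$ is not a cycle and $C_4 \pretp G$, I would exhibit a topological-minor model of $\banner$ in $G$. The key, easily verified sufficient condition is that \emph{if $G$ contains a cycle $C^*$ of length at least $4$ and a vertex $w \notin V(C^*)$ adjacent to some vertex $u \in V(C^*)$, then $\banner \pretp G$}: choose $u$ together with three further vertices of $C^*$ as branch vertices of the banner's $C_4$, realize its edges by the arcs of $C^*$ between consecutive chosen vertices, and take $w$ as the pendant branch vertex joined to $u$ by the single edge $\{u,w\}$. Since $\banner$ is connected, it suffices to locate this configuration inside the connected component of $G$ containing a cycle of length at least $4$ (such a component exists since $C_4 \pretp G$); I henceforth assume $G$ connected.

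Fix a cycle $C$ in $G$ of length at least $4$. If $V(G) \setminus V(C) \neq \emptyset$, then by connectivity a shortest path from $V(G) \setminus V(C)$ to $V(C)$ yields a vertex $w \notin V(C)$ adjacent to some $u \in V(C)$, and the pair $(C, w)$ satisfies the sufficient condition. Otherwise $V(G) = V(C)$; since $G \neq C$ there exists an edge $e = \{u,v\} \in E(G) \setminus E(C)$, necessarily a chord of $C$ between two non-adjacent vertices, so the two $u$--$v$ arcs $P_1, P_2$ of $C$ have lengths $\ell_1, \ell_2 \geq 2$ with $\ell_1 + \ell_2 = |C| \geq 5$ by the hypothesis $|V(G)| \geq 5$. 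Thus one arc, say $P_1$, has length $\ell_1 \geq 3$, so $C^* := P_1 \cup \{e\}$ is a cycle of length $\ell_1 + 1 \geq 4$; and the neighbor of $u$ along $P_2$ is a vertex outside $V(C^*)$ but adjacent to $u \in V(C^*)$, producing the required configuration.

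The main obstacle is a subtle length count in the chord case: the hypothesis $|V(G)| \geq 5$ must be invoked precisely to rule out the degenerate situation where both arcs of $C$ have length $2$ and no single arc together with the chord forms a cycle of length at least $4$. Indeed, $K_4$ contains $C_4$ as a topological minor, is not a cycle, has only four vertices, and does not contain $\banner$, confirming the necessity of this hypothesis.
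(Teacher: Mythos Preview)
Your proof is correct and follows essentially the same route as the paper's: both argue the easy direction by noting that cycles have maximum degree~$2$ and that $C_4 \pretp \banner$, and both handle the hard direction (in contrapositive form) by fixing a cycle $C$ of length at least~$4$, immediately obtaining a $\banner$ model if some vertex lies outside $C$, and otherwise using a chord of $C$ to produce a strictly shorter cycle of length at least~$4$ together with an external neighbor. Your version is in fact more explicit than the paper's in the chord case, spelling out the arc-length count $\ell_1+\ell_2=|C|\geq 5$ with $\ell_1,\ell_2\geq 2$, whereas the paper simply asserts the existence of the shorter cycle~$C'$; and your closing remark that $K_4$ witnesses the necessity of the hypothesis $|V(G)|\geq 5$ is a nice addition. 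One small caveat: your reduction ``henceforth assume $G$ connected'' silently transfers the hypothesis $|V(G)|\geq 5$ to the chosen component, which need not hold for disconnected $G$ (e.g., $C_4$ plus an isolated vertex); the paper sidesteps this by opening its proof with ``Let $G$ be a connected graph with $|V(G)|\geq 5$'', so the lemma is really intended for connected $G$, and with that reading your argument is complete.
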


\begin{proof}
  Let $G$ be a connected graph with $|V(G)| \geq 5$.
  It is straightforward to see that if $G$ is a cycle or if $C_4 \not \pretp G$, then $\banner \not \pretp G$.
  Conversely, assume that $G$ excludes the banner as a minor and contains a cycle $C$ of size at least four.
  As we can assume that $G$ is connected and excludes the banner as a minor, we have that $V(G) = V(C)$.
  Indeed, if there exists a vertex $v \in V(G) \setminus V(C)$, then there exists a vertex $v' \in V(G') \setminus V(C)$ that is a neighbor of a vertex of $V(C)$, and so the graph $G[V(C) \cup \{v'\}]$ contains the banner as a topological minor.
  By assumption, we have that $|V(C)| = |V(G)| \geq 5$.
  Assume now for contradiction that $G$ is {\sl not} a cycle, that is, that $E(G) \setminus E(C)$ contains an edge $e$.
  Then, as $|V(C)| \geq 5$, there exists a cycle $C'$ containing $e$, such that $4 \leq |V(C')| < |V(C)|$, and so $G$ contains the banner as a minor, a contradiction.
  The lemma follows.
\end{proof}


The idea of the algorithm is, as we did for {\sc $\{\paw\}$-TM-Deletion} (see Section~\ref{nugatory}), to use the structural properties given by Lemma~\ref{voltaire} and to combine the rank-based approach and the standard dynamic programming techniques.
As for {\sc $\{\chair\}$-TM-Deletion} (see Section~\ref{provider}), we will need, in particular, to label vertices that appear in components of size at most $4$ and in components that are cycles.
As  for {\sc $\{C_4\}$-TM-Deletion} (see Section~\ref{benjamin}), we also use  one label for the solution and another label for the components that exclude $C_4$ as a minor.
This means that, for this algorithm, we need $15$ labels for the vertices and three labels for the edges.
Because of this, again we do not provide the full formal description of the algorithm, and instead we provide a high-level description of how it works, reusing what we presented previously.

Namely, we explain how, starting from the algorithm for {\sc $\{C_4\}$-TM-Deletion} given in Section~\ref{benjamin}, we obtain the desired algorithm for {\sc $\{\banner\}$-TM-Deletion}.
As we did for {\sc $\{\paw\}$-TM-Deletion}, we modify the function $\sbf: X_t \to \{0,1\}$ to a function $\sbf: X_t \to \{0,1\} \cup L$, where $L$ corresponds to the set of labels needed for detecting if a component is of size at most $4$ or a cycle.
We also add a function $\rbf_{d}$ for the labeling of the edges that appear in a \ourdiamond.
Then, as we did for {\sc $\{\paw\}$-TM-Deletion}, when doing the dynamic programming operations, we use the rank-based approach for the elements of $\sbf^{-1}(1)$ and standard dynamic programming operations for the elements of $\sbf^{-1}(L)$.
Doing this, we obtain the following theorem.

\begin{theorem}
  If a nice tree decomposition of $G$ of width $w$ is given, {\sc $\{\banner\}$-Deletion} can be solved in time $2^{\Ocal(w)} \cdot n^{\Ocal(1)}$.
\end{theorem}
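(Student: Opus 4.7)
The plan is to combine the algorithm for $\{C_4\}$-TM-Deletion from Section~\ref{benjamin} with the combinatorial-labelling approach used for $\{\chair\}$-TM-Deletion in Section~\ref{provider}, guided by the structural characterisation given by Lemma~\ref{voltaire}. By that lemma, after removing an optimal solution $S$, every connected component of $G\setminus S$ of size at least five must be either a cycle or a $C_4$-topological-minor-free graph; components of size at most four are automatically $\banner$-free, since the $\banner$ has five vertices. The DP therefore has to recognise, at every bag, which of these three types each partial component belongs to.

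The table is indexed by a refined labelling of each bag. Starting from the $\{C_4\}$-TM-Deletion algorithm, I would extend the function $\sbf\colon X_t\to\{0,1\}$ to a function $\sbf\colon X_t\to\{0,1\}\cup L$, where $0$ still marks vertices of the solution and $1$ still marks vertices living in a $C_4$-TM-free component (to be tracked by the rank-based machinery together with the auxiliary vertex $v_0$, the functions $\sbf_0$ and $\rbf$, and the counters $i$, $j$, $\ell$ as in Section~\ref{benjamin}). The set $L$ contains labels that describe (i) vertices of a cycle component, annotated by their current degree in $\{0,1,2\}$ and by whether the cycle has already been closed, and (ii) vertices of a component of size at most four, annotated by the type of that component ($P_2$, $P_3$, $P_4$, $K_3$, $K_4$, $\paw$, $\ourdiamond$, $K_{1,3}$, or $C_4$) and by the vertex's role inside it. Following Section~\ref{provider}, handling the $\ourdiamond$ requires in addition three labels on the edges so that, inside a triangle, the edge playing the role of the chord is identifiable; handling the $\paw$ requires several vertex labels so that cycle vertices (full vs.\ waiting for a leaf) and pendant leaves (already attached vs.\ not) can be distinguished. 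This accounts for the announced $15$ vertex labels and $3$ edge labels.

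The DP is then carried out bottom-up on a nice tree decomposition in the usual way. For each bag $t$, each table entry is parameterised by $\sbf$, by the auxiliary edge function $\sbf_0$, by the edge labelling for $\ourdiamond$-triangles, and by the counters on the $C_4$-TM-free part; the value stored is a set of weighted partitions on $\sbf^{-1}(1)$, reduced via Theorem~\ref{happened} so that its size never exceeds $2^{|\sbf^{-1}(1)|}$. The transitions at leaf, introduce-vertex, forget-vertex, and join nodes follow the template of Section~\ref{benjamin} for elements of $\sbf^{-1}(1)$, and the purely combinatorial degree/component-type updates of Section~\ref{provider} for elements of $\sbf^{-1}(L)$. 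At every transition I would additionally enforce that no edge of $G_t$ joins an element of $\sbf^{-1}(1)$ to an element of $\sbf^{-1}(L)$, and that no edge joins two elements of $\sbf^{-1}(L)$ belonging to different component-type classes; this consistency condition is what guarantees that the connected components of the final $(G\setminus S)$ decompose into the three permitted types and hence, by Lemma~\ref{voltaire}, avoid the $\banner$.

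The main obstacle I expect is bookkeeping rather than a new conceptual idea. First, the small-component labels must encode enough structure to prevent such a component from growing past four vertices during the introduce and join steps, which is exactly what forces the use of edge labels for $\ourdiamond$-chord identification and the multiple labels for the $\paw$; this is handled just as in Section~\ref{provider}. Second, at the join node the rank-based $\join$ operation of Proposition~\ref{broadest} must be interleaved with the coordinate-wise consistency check on labels in $L$, exactly as was done for $\{\paw\}$-TM-Deletion in Section~\ref{nugatory}; correctness then follows from Proposition~\ref{thorough} together with the preservation of small-component information by a deterministic product of labels. The final running time $2^{\Ocal(w)}\cdot n^{\Ocal(1)}$ is immediate: there are $2^{\Ocal(w)}\cdot n^{\Ocal(1)}$ entries per bag, each call to $\reduce$ costs $2^{\Ocal(w)}\cdot n^{\Ocal(1)}$ by Theorem~\ref{happened}, and a nice tree decomposition has $\Ocal(n)$ nodes.
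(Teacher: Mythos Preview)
Your proposal is correct and follows essentially the same approach as the paper: both start from the $\{C_4\}$-TM-Deletion algorithm, extend the codomain of $\sbf$ from $\{0,1\}$ to $\{0,1\}\cup L$ so that labels in $L$ handle cycle components and components of size at most four via the combinatorial bookkeeping of Section~\ref{provider} (including the three edge labels for the \ourdiamond), while the rank-based machinery continues to handle $\sbf^{-1}(1)$. The paper's own argument is only a high-level sketch, and your write-up is in fact more explicit about the consistency constraints between label classes and about the join-node interleaving.
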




\vspace{-.45cm}

\section{Conclusions and further research}
\label{upheaval}

\vspace{-.15cm}

We presented single-exponential algorithms for  \textsc{$\{H\}$-TM-Deletion}  taking as  parameter the treewidth of the input graph, when $H \in \{P_3, P_4, K_{1,i}, C_4, \paw, \chair, \banner\}$. These algorithms, combined with the results of~\cite{monster1,monster3,BasteST20-SODA,SODA-arXiv}, settle completely the complexity of \textsc{$\{H\}$-M-Deletion} when $H$ is connected; see Figure~\ref{shifting} for an illustration.

Concerning the topological minor version, in order to establish a dichotomy for \textsc{$\{H\}$-TM-Deletion} when $H$ is planar and connected, it remains to obtain algorithms in time $\Ostar(2^{\Ocal(\tw  \cdot \log \tw)})$ for the graphs $H$ with maximum degree at least four, like the \gem or the \dart (see Figure~\ref{shifting}), as for those graphs the algorithm in time $\Ostar(2^{\Ocal(\tw  \cdot \log \tw)})$ given in~\cite{monster1} cannot be applied.

Our algorithms for \textsc{$\{C_i\}$-Deletion} given here and in \cite{monster1} may also be used to devise approximation algorithms for hitting or packing long cycles in a graph (in the spirit of~\cite{Chatzidimitriou15} for other patterns),  by using the fact that cycles of length at least $i$ satisfy the Erd{\H{o}}s-P{\'{o}}sa property~\cite{FioriniH14}. We did not focus on optimizing the degree of the polynomials or the constants involved in our algorithms. Concerning the latter, one could use the framework by Lokshtanov et al.~\cite{LokshtanovMS11} to prove lower bounds based on the {\sl Strong} \ETH.

%

\vspace{-.25cm}
%
%

\bibliographystyle{abbrv}
\bibliography{Biblio-Fdeletion}

\end{document}